\newtheorem{theorem}{Theorem}[section]
\newtheorem{proposition}{Proposition}[section]
\newtheorem{lemma}{Lemma}[section]
\newtheorem{corollary}{Corollary}[section]
\newcommand{\beqa}{\begin{eqnarray}}
\newcommand{\eeqa}{\end{eqnarray}}
\numberwithin{equation}{section}
\begin{document}

\begin{flushright}
LPENSL-TH-13-08\\
YITP-SB-13-18
\end{flushright}

\par \vskip .1in \noindent

\vspace{-12pt}

\begin{center}
\begin{LARGE}
\vspace*{1cm}
{On the form factors of local operators in the\\ 

\medskip

\noindent
Bazhanov-Stroganov and  chiral Potts models}
\end{LARGE}

\vspace{50pt}

\begin{large}

{\bf N.~Grosjean}\footnote[1]{ LPTM, UMR 8089
du CNRS, Univesit\'e de Cergy-Pontoise, France, nicolas.grosjean@u-cergy.fr},~~
{\bf J.~M.~Maillet}\footnote[2]{ Laboratoire de Physique, UMR 5672
du CNRS, ENS Lyon,  France,
 maillet@ens-lyon.fr},~~
{\bf G.~Niccoli}\footnote[3]{ YITP, Stony Brook University, New York, USA, niccoli@max2.physics.sunysb.edu}
\par

\end{large}

\vspace{40pt}

%\today

\vspace{60pt}

\centerline{\bf Abstract} \vspace{1cm}
\parbox{12cm}{\small   We consider general cyclic
representations of the 6-vertex Yang-Baxter algebra and analyze the
associated quantum integrable systems, the Bazhanov-Stroganov model  and the corresponding chiral Potts model on finite size lattices. We first determine the {\it propagator operator} in terms of the chiral Potts transfer matrices and we compute the scalar product of \textit{separate states} (including the transfer matrix eigenstates) as a single determinant formulae in the framework of Sklyanin's quantum separation of variables. Then, we solve the quantum inverse problem and reconstruct the local operators in terms of the separate variables. We also determine a basis of operators  whose form factors are characterized by a single determinant formulae. This implies that  the form factors of any local operator are expressed as finite sums of determinants. Among these form factors written in determinant form are in particular those which will reproduce the chiral Potts order parameters in the thermodynamic limit. The results presented here are the generalization to the present models associated to the most general cyclic representations of the 6-vertex Yang-Baxter algebra of those we derived for the lattice sine-Gordon model.}
\end{center}
\newpage
\begin{small}
\tableofcontents
\end{small}
\newpage

\section{Introduction\label{NJG-BS-ChP-INTR}}

In the article \cite{NJG-BS-ChP-GMN12-SG} we developed an approach in the framework
of the quantum inverse scattering method (QISM) \cite{NJG-BS-ChP-SF78, NJG-BS-ChP-FT79, NJG-BS-ChP-KS79, NJG-BS-ChP-FST80, NJG-BS-ChP-F80, NJG-BS-ChP-S82, NJG-BS-ChP-KS, NJG-BS-ChP-F82, NJG-BS-ChP-F95, NJG-BS-ChP-J90, NJG-BS-ChP-Sh85, NJG-BS-ChP-Th81, NJG-BS-ChP-IK82} to
achieve the complete solution of lattice integrable quantum models by the
exact characterization of their spectrum and the computation of the matrix
elements of local operators in the eigenstates basis. This approach is addressed to the large class
of integrable quantum models whose spectrum (eigenvalues and eigenstates)
can be determined by implementing Sklyanin's quantum separation of variables
(SOV) method \cite{NJG-BS-ChP-Sk1, NJG-BS-ChP-Sk2, NJG-BS-ChP-Sk3}.  It can be considered as the
generalization to this SOV framework of the Lyon group method\footnote{This method has been introduced in \cite{NJG-BS-ChP-KMT99} for the spin-1/2 XXZ
quantum chain \cite{NJG-BS-ChP-H28, NJG-BS-ChP-Be31, NJG-BS-ChP-Hul38, NJG-BS-ChP-Or58, NJG-BS-ChP-W59, NJG-BS-ChP-YY661, NJG-BS-ChP-YY662, NJG-BS-ChP-Ga83, NJG-BS-ChP-LM66} with periodic  boundaries and further developed in \cite{NJG-BS-ChP-MT00, NJG-BS-ChP-IKitMT99, NJG-BS-ChP-KitMT00, NJG-BS-ChP-KitMST02a, NJG-BS-ChP-KitMST02b, NJG-BS-ChP-KitMST02c, NJG-BS-ChP-KitMST02d, NJG-BS-ChP-KitMST04a, NJG-BS-ChP-KitMST04b, NJG-BS-ChP-KitMST04c, NJG-BS-ChP-KitMST05a,  NJG-BS-ChP-KitMST05b, NJG-BS-ChP-KKMST07}. Its generalization to the higher spin XXX quantum chains and to the open spin-1/2 XXZ
quantum chains \cite{NJG-BS-ChP-Skly88, NJG-BS-ChP-Che84, NJG-BS-ChP-KS91, NJG-BS-ChP-MN91, NJG-BS-ChP-KS92, NJG-BS-ChP-GZ94a, NJG-BS-ChP-GZ94} with diagonal boundary conditions
has been respectively implemented in \cite{NJG-BS-ChP-K01, NJG-BS-ChP-CM07} and \cite{NJG-BS-ChP-KKMNST07, NJG-BS-ChP-K08, NJG-BS-ChP-KKMNST08}.} for the computation of matrix elements of local
operators in the algebraic Bethe ansatz settings. In \cite{NJG-BS-ChP-GMN12-SG} the approach has been developed for the lattice quantum sine-Gordon model \cite{NJG-BS-ChP-FST80,NJG-BS-ChP-IK82}\
associated by QISM to particular cyclic representations \cite{NJG-BS-ChP-Ta} of the 6-vertex
Yang-Baxter algebra. More in detail, in \cite{NJG-BS-ChP-NT, NJG-BS-ChP-GN10, NJG-BS-ChP-GN11} the complete
SOV spectrum characterization has been constructed for the lattice quantum
sine-Gordon model while in \cite{NJG-BS-ChP-GMN12-SG} the scalar product of separate states and the matrix elements of local
operators have been computed. In the present article we implement this
approach for the quantum models associated by QISM to the most general
cyclic representations of the 6-vertex Yang-Baxter algebra, i.e. the
inhomogeneous Bazhanov-Stroganov model and subsequently the chiral Potts (chP)
model \cite{NJG-BS-ChP-BS, NJG-BS-ChP-BBP, NJG-BS-ChP-B04, NJG-BS-ChP-AMcP0, NJG-BS-ChP-AMcP1, NJG-BS-ChP-AMcP2, NJG-BS-ChP-auYMcCPTY87, NJG-BS-ChP-BaPauY, NJG-BS-ChP-vGR85, NJG-BS-ChP-P87, NJG-BS-ChP-Ba89-1, NJG-BS-ChP-Ba89, NJG-BS-ChP-BaxBP90, NJG-BS-ChP-BazBR1996, NJG-BS-ChP-Baz2011, NJG-BS-ChP-BazS2012, NJG-BS-ChP-auYMcPTY, NJG-BS-ChP-McPTS, NJG-BS-ChP-auYMcPT, NJG-BS-ChP-TarasovSChP}, by exploiting the well known links
between these two models \cite{NJG-BS-ChP-BS}. We first build our two central tools for
computing matrix elements of local operators, i.e.  the expression of the scalar
products of separate states in terms of a determinant formula and the local fields
reconstruction in terms of quantum separate variables (by solving the so called quantum inverse scattering problem). Then, we use these results to compute the form factors of local operators on the transfer
matrix eigenstates and to express them as sums of determinants given by
simple deformations of the ones giving the scalar product of separate states.

\subsection{\label{NJG-BS-ChP-summary}Literature summary}

Let us first  summarize some known results  concerning 
these quantum integrable models and that are relevant for our present work. In \cite{NJG-BS-ChP-BS} the Bazhanov-Stroganov model was
introduced from its Lax operator built as a general solution to the
Yang-Baxter equation associated to the 6-vertex R-matrix. For a specific
subset of cyclic representations, in which the parameters lie on the
algebraic curves associated to the chP-model, the construction of the Baxter
Q-operator allowed for the analysis of the spectrum (eigenvalues). This 
Q-operator was shown to coincide with the transfer matrix of the integrable $%
Z_{p}$ chP-model \cite{NJG-BS-ChP-AMcP0, NJG-BS-ChP-AMcP1, NJG-BS-ChP-AMcP2, NJG-BS-ChP-auYMcCPTY87, NJG-BS-ChP-BaPauY, NJG-BS-ChP-vGR85, NJG-BS-ChP-P87, NJG-BS-ChP-Ba89-1, NJG-BS-ChP-Ba89, NJG-BS-ChP-BaxBP90}; in this way a first remarkable
connection between these two apparently very different models\footnote{Note that in a 2-dimensional statistical mechanics formulation both 
models have Boltzmann weights which satisfy the star-triangle equations.
However, while the weights of the Bazhanov-Stroganov model satisfy the difference
property in the rapidities those of the chP-model do not. 
In this respect, the link to classical integrable discrete models is quite illuminating \cite{NJG-BS-ChP-BazBR1996,NJG-BS-ChP-Baz2011,NJG-BS-ChP-BazS2012}. It is worth
recalling that the first solutions of the star-triangle equations with this
non-difference property were obtained in \cite{NJG-BS-ChP-auYMcPTY,NJG-BS-ChP-McPTS,NJG-BS-ChP-auYMcPT} while
in \cite{NJG-BS-ChP-BaPauY} the general solutions for the chP-model were derived.} was
established. Additional functional equations of fusion hierarchy type%
\footnote{%
The approach of fusion hierarchy of commuting transfer matrices was first
introduced in \cite{NJG-BS-ChP-KRS81,NJG-BS-ChP-KR}.} for commuting transfer matrices\footnote{%
The transfer matrix of the Bazhanov-Stroganov model is the second element in this hierarchy,
this explains the name $\tau _{2}$ given some times to this model.} were then exhibited in 
\cite{NJG-BS-ChP-BBP}. Bethe ansatz type equations play an important role in the
special sub-variety of the super-integrable chP-model as it was first shown in 
\cite{NJG-BS-ChP-AMcP0,NJG-BS-ChP-AMcP1,NJG-BS-ChP-AMcP2}. The connection between the Bazhanov-Stroganov model and the
chP-model allowed to introduce rigorously \cite{NJG-BS-ChP-TarasovSChP} the description
of the super-integrable chP spectrum using algebraic Bethe ansatz. The Bethe
ansatz construction was applied to the transfer matrix $\tau_2$ of the Bazhanov-Stroganov model, thus 
obtaining in a different way the Baxter results \cite{NJG-BS-ChP-Ba89-1} on the subset
of the translation-invariant eigenvectors of the super-integrable chP-model%
\footnote{%
For further analysis of the eigenstates of super-integrable chP-model see
also \cite{NJG-BS-ChP-auYP08, NJG-BS-ChP-auYP09, NJG-BS-ChP-NisD2008, NJG-BS-ChP-Roa2010}. It is interesting to mention here also that in all these analysis the underlying Onsager algebra \cite{NJG-BS-ChP-Ons1944} and  realizations of the sl2 loop algebra \cite{NJG-BS-ChP-FabM2001}, which are  symmetries for these super-integrable representations \cite{NJG-BS-ChP-BaPauY, NJG-BS-ChP-vGR85} and \cite{NJG-BS-ChP-Dav1990, NJG-BS-ChP-DatR2000, NJG-BS-ChP-Roa2005, NJG-BS-ChP-NisD2006, NJG-BS-ChP-Roa2007} have played fundamental roles.}. More recently, the extension of the eigenvalue
analysis of the Bazhanov-Stroganov model to completely general cyclic representations
was done by Baxter \cite{NJG-BS-ChP-B04}. The main tool used there was the construction
of a generalized Q-operator which satisfies the Baxter equation with the transfer matrix $\tau_2$  and the extension to these
representations of the functional relations of the fused transfer matrices.

Another important feature of the chP-model which has been the subject of
recent attention is the spontaneous magnetization. This order parameter
was first described in \cite{NJG-BS-ChP-AMcCPT89} on the basis of perturbativet
calculations developed for the special class of super-integrable
representations\footnote{%
This case both obeys Yang-Baxter integrability \cite{NJG-BS-ChP-BaPauY} and has an
underlying Onsager algebra \cite{NJG-BS-ChP-auYMcCPTY87}.}. The first non perturbative
derivation of this order parameter was achieved only recently by Baxter
under some natural analyticity assumptions and the use of a technique
introduced by Jimbo et al. \cite{NJG-BS-ChP-JMN93}. More classical techniques, like the
corner transfer matrix \cite{NJG-BS-ChP-Bax82}, could not be used, mainly because of
the very nature of the chP-model \cite{NJG-BS-ChP-Ba06}. The proof of the spontaneous magnetization
formula \cite{NJG-BS-ChP-AMcCPT89} starting from direct computations on the finite
lattice of matrix elements of the spin operators could only be achieved
after the recent introduction by Baxter \cite{NJG-BS-ChP-Ba08,NJG-BS-ChP-Ba09} of a generalized
version of the Onsager algebra for the special class of super-integrable
representations of chP-model. The matrix elements used for this proof have
been first analyzed by Au-Yang and Perk in a series of papers \cite{NJG-BS-ChP-auYP08,NJG-BS-ChP-auYP09}, \cite{NJG-BS-ChP-auYP10, NJG-BS-ChP-auYP11, NJG-BS-ChP-auYP11-2} for the case of the
super-integrable chP-model. Their factorized form, first conjectured by
Baxter \cite{NJG-BS-ChP-Ba08-1}, has been proven\footnote{%
Note that factorized formulas for the spin matrix elements exist also for
the 2D Ising model \cite{NJG-BS-ChP-BL04} and for the quantum XY-chain \cite{NJG-BS-ChP-Iorgov11}.}
by Iorgov et al \cite{NJG-BS-ChP-IPSTvG09} and used to derive the spontaneous
magnetization formula conjectured in \cite{NJG-BS-ChP-AMcCPT89}. Finally, it is worth
recalling that,  in the algebraic framework of generalized Onsager
algebra, Baxter has also first conjectured \cite{NJG-BS-ChP-Ba10} and successively
proven in \cite{NJG-BS-ChP-Ba10-2} a determinant formula for the spontaneous magnetization
of the super-integrable chP-model; this result is also used for a
further derivation of the known formula of the order parameter in the
thermodynamical limit.

\subsection{Motivations for the use of SOV}

Let us comment that in the literature we just recalled, the spectral analysis
has usually one or more of the following problems: there is no eigenstates
construction for the functional methods based only on the Baxter Q-operator
and the fusion of transfer matrices. The ABA applies only to very special
representations of the Bazhanov-Stroganov model as well as the algebraic framework
of the generalized Onsager algebra is proven to exist only in the class of
super-integrable representations of chiral Potts model. The proof of the
completeness of eigenstates is not ensured by these methods and it was so far missing in
the general p-state chP-model and Bazhanov-Stroganov model. Existing results about
this issue are mainly restricted to the case of the 3-state super-integrable
chP-model \cite{NJG-BS-ChP-DKMcCoy93} and to the reduction of the 3-state Potts model
to the trivial algebraic curve case \cite{NJG-BS-ChP-ADMcCoy92}, i.e. the
Fateev-Zamolodchikov model \cite{NJG-BS-ChP-FZ}, see also \cite{NJG-BS-ChP-FMcCoy01} and \cite{NJG-BS-ChP-NR03} for further applications of this method.

The circumstance interesting for us is that, in the case of the cyclic
representations of the Bazhanov-Stroganov model for which the algebraic Bethe 
ansatz does not apply, Sklyanin's quantum SOV can be developed to analyze the
system. This means that, for most\footnote{%
The values of the parameters of the representations for which ABA applies
define a proper sub-variety in the full space of the parameters of the
representations of the Bazhanov-Stroganov model.} of the representations of this
model, we have the opportunity to use the SOV method, which appears quite 
promising as it leads to both the eigenvalues and the eigenstates of the transfer matrix of the Bazhanov-Stroganov model with a complete spectrum construction if some
simple conditions are satisfied. The SOV analysis of these representations
was first introduced\footnote{%
There the eigenvector analysis developed in \cite{NJG-BS-ChP-I06} was used to obtain
the SOV representations of the Bazhanov-Stroganov model. See also the series of works 
\cite{NJG-BS-ChP-GIPST07,NJG-BS-ChP-GIPST08,NJG-BS-ChP-GIPS09} where the form factors of local spin operators
were computed by SOV for the special case ($p$=2) of the generalized Ising
model.} in \cite{NJG-BS-ChP-GIPS06} and further developed in \cite{NJG-BS-ChP-GN12}.  Here we will
use these SOV results as setup for the computation of the form factors of
local operators. Let us recall that in \cite{NJG-BS-ChP-GN12}, the functional equation
characterization of the transfer matrix spectrum has been derived purely on the
basis of the SOV spectrum characterization\footnote{
Note that for cyclic representations the SOV does not lead directly to the
spectrum characterization by functional equations and so, in particular, it
does not lead to Bethe equations.} together with a first proof of the
completeness of the system of equations of Bethe ansatz type\footnote{%
For Bethe ansatz methods, as the coordinate Bethe ansatz \cite{NJG-BS-ChP-Be31, NJG-BS-ChP-Bax82,
NJG-BS-ChP-ABBQ87}, the algebraic Bethe ansatz \cite{NJG-BS-ChP-FT79, NJG-BS-ChP-KS79, NJG-BS-ChP-FST80} and the
analytic Bethe ansatz \cite{NJG-BS-ChP-R83-I,NJG-BS-ChP-R83-II}, a proof of the completeness was
achieved only for few integrable quantum models, see as concrete
examples \cite{NJG-BS-ChP-MTV} for the $XXX$ Heisenberg model, \cite{NJG-BS-ChP-ORR09} for the
infinite $XXZ$ spin chain with domain wall boundary conditions and \cite{NJG-BS-ChP-Korff-11} for the nonlinear quantum Schroedinger model.} for some classes
of representations of Bazhanov-Stroganov model and chP-model and the simplicity of
these transfer matrix spectra in the inhomogeneous models.

Beyond these motivations on the spectrum analysis, the summary presented in
the previous subsection makes clear that the computations of matrix elements
of local operators are so far mainly confined to the special class of
super-integrable representations of chP-model as they were derived in the
algebraic framework of the generalized Onsager algebra. This stresses the
relevance of our approach using quantum separation of variables which leads to
form factors of local operators and applies to generic representations of Bazhanov-Stroganov model and chiral Potts model to which the methods based on
generalized Onsager algebra do not apply up to now.

\subsection{Paper organization}
In order to make the paper self-contained we dedicate Sections 2 and 3 to review the material presented in \cite{NJG-BS-ChP-GN12} simultaneously integrating it with the presentation of new results needed for our purposes. In particular, Section 2 provides the definition of the Bazhanov-Stroganov model and the main results of \cite{NJG-BS-ChP-GN12} on SOV while Subsection 2.3.1 and 2.4.2 contain new results on the SOV decomposition of the identity and the characterization of the transfer matrix eigenstates. Section 3 provides the definition of the chiral Potts model and the main results obtained by SOV method in \cite{NJG-BS-ChP-GN12}. The scalar products of separate states and the decomposition of the identity w.r.t. the transfer matrix eigenbasis are derived in Section 4. Section 5 contains the characterization of the  propagator operator of the Bazhanov-Stroganov model in terms of the chiral Potts transfer matrices. The reconstruction of local operators in terms of separate variables is given in Section 6 while their form factors are expressed in terms of finite size determinants in Section 7. The last section addresses some comments on these results and a comparison with the existing literature.
\section{The Bazhanov-Stroganov model}

We use this section to give our notations and  to briefly  recall the main results derived in \cite{NJG-BS-ChP-GN12} on the
spectrum description by SOV of the Bazhanov-Stroganov model and chiral Potts model that are useful for our purposes.

\subsection{The Bazhanov-Stroganov model: definitions and first properties}

We define in  the $\mathsf{N}$ sites of the chain $\mathsf{N}$ local Weyl
algebras $\mathcal{W}_n$ and denote by $\mathsf{u}_{n}$ and $\mathsf{v}_{n}$
their generators: 
\begin{equation}
\mathsf{u}_{n}\mathsf{v}_{m}=q^{\delta _{n,m}}\mathsf{v}_{m}\mathsf{u}_{n}%
\text{ \ \ }\forall n,m\in \{1,...,\mathsf{N}\}.
\end{equation}
The Lax operator of the Bazhanov-Stroganov model reads\footnote{%
Up to different notations, this Lax operator coincides with the one
introduced in \cite{NJG-BS-ChP-BS}.}: 
\begin{equation}
\mathsf{L}_{n}(\lambda )\equiv \left( 
\begin{array}{cc}
\lambda \alpha _{n}\mathsf{v}_{n}-\beta _{n}\lambda ^{-1}\mathsf{v}_{n}^{-1}
& \mathsf{u}_{n}\left( q^{-1/2} \mathbbm{a}_{n}\mathsf{v}_{n}+q^{1/2} %
\mathbbm{b}_{n}\mathsf{v}_{n}^{-1}\right) \\ 
\mathsf{u}_{n}^{-1}\left( q^{1/2} \mathbbm{c}_{n}\mathsf{v}_{n}+q^{-1/2} %
\mathbbm{d}_{n}\mathsf{v}_{n}^{-1}\right) & \gamma _{n}\mathsf{v}%
_{n}/\lambda -\delta _{n}\lambda /\mathsf{v}_{n}%
\end{array}%
\right) ,
\end{equation}%
where $\alpha_n$, $\beta_n$, $\gamma_n$, $\delta_n$, $\mathbbm{a}_n$, $%
\mathbbm{b}_n$, $\mathbbm{c}_n$ and $\mathbbm{d}_n$ are constants associated
to the site $n$ of the chain subject to the relations : 
\begin{equation}
\alpha_n \gamma _{n}=\mathbbm{a}_{n}\mathbbm{c}_{n},\text{ \ \ \ \ }\beta_n
\delta _{n}=\mathbbm{b}_{n}\mathbbm{d}_{n}.
\end{equation}
The monodromy matrix of the model is defined in terms of the Lax operators
by: 
\begin{equation}
\mathsf{M}(\lambda )=\left( 
\begin{array}{cc}
\mathsf{A}(\lambda ) & \mathsf{B}(\lambda ) \\ 
\mathsf{C}(\lambda ) & \mathsf{D}(\lambda )%
\end{array}%
\right) \equiv \mathsf{L}_{\mathsf{N}}(\lambda )\cdots \mathsf{L}%
_{1}(\lambda ).  \label{NJG-BS-ChP-Mdef}
\end{equation}%
It satisfies the quadratic Yang-Baxter relation :%
\begin{equation}
R(\lambda /\mu )\,(\mathsf{M}(\lambda )\otimes 1)\,(1\otimes \mathsf{M}(\mu
))\,=\,(1\otimes \mathsf{M}(\mu ))\,(\mathsf{M}(\lambda )\otimes 1)R(\lambda
/\mu )\,,  \label{NJG-BS-ChP-YBA}
\end{equation}%
driven by the six-vertex (standard) $R$-matrix:%
\begin{equation}
R(\lambda )=\left( 
\begin{array}{cccc}
q\lambda -q^{-1}\lambda ^{-1} &  &  &  \\[-1mm] 
& \lambda -\lambda ^{-1} & q-q^{-1} &  \\[-1mm] 
& q-q^{-1} & \lambda -\lambda ^{-1} &  \\[-1mm] 
&  &  & q\lambda -q^{-1}\lambda ^{-1}%
\end{array}%
\right) \,.  \label{NJG-BS-ChP-Rlsg}
\end{equation}%
Then the elements of $\mathsf{M}(\lambda )$ generate a representation $%
\mathcal{R}_{\mathsf{N}}$ of the so-called Yang-Baxter algebra. In
particular, (\ref{NJG-BS-ChP-YBA}) yields the relation $\left[ \mathsf{B}(\lambda), 
\mathsf{B}(\mu) \right]=0$, for all $\lambda$ and $\mu$, and the mutual
commutativity of the elements of the one parameter family of transfer matrix operators: 
\begin{equation}
\tau _{2}(\lambda )\equiv \,\mathrm{tr}_{\mathbb{C}^{2}}\mathsf{M}(\lambda
)\,=\mathsf{A}(\lambda )+\mathsf{D}(\lambda ). \label{NJG-BS-ChP-Tdef}
\end{equation}%
Let us introduce the operator: 
\begin{equation}
\Theta =\prod_{n=1}^{\mathsf{N}}\mathsf{v}_{n},  \label{NJG-BS-ChP-topological-charge}
\end{equation}%
which plays the role of a \textit{grading operator} in the Yang-Baxter
algebra\footnote{%
The proof of the lemma is given following the same steps of that of
Proposition 6 of \cite{NJG-BS-ChP-NT}.}:

\begin{lemma}
\label{NJG-BS-ChP-YB-G-T} \textbf{(Lemma 1 of \cite{NJG-BS-ChP-GN12})} $\Theta $ commutes with the
transfer matrix $T(\lambda)$. More precisely, its commutation relations with the
elements of the monodromy matrix are: 
\begin{eqnarray}
\Theta \mathsf{C}(\lambda ) &=&q\mathsf{C}(\lambda )\Theta \text{, \ \ \ }[%
\mathsf{A}(\lambda ),\Theta ]=0, \\
\mathsf{B}(\lambda )\Theta &=&q\Theta \mathsf{B}(\lambda ),\text{ \ \ }[%
\mathsf{D}(\lambda ),\Theta ]=0.
\end{eqnarray}
\end{lemma}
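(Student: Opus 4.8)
The plan is to prove the commutation relations directly from the explicit form of the monodromy matrix together with the Weyl algebra relations, then deduce that $\Theta$ commutes with $\tau_2(\lambda)$ as an immediate corollary. The key observation is that $\Theta=\prod_{n=1}^{\mathsf{N}}\mathsf{v}_n$ is built from the $\mathsf{v}_n$ generators alone, so its action by conjugation on any monomial in the Weyl generators is controlled entirely by how it moves past the $\mathsf{u}_n$'s. From the defining relation $\mathsf{u}_n\mathsf{v}_m=q^{\delta_{n,m}}\mathsf{v}_m\mathsf{u}_n$ one reads off that $\mathsf{v}_m\mathsf{u}_n=q^{\delta_{n,m}}\mathsf{u}_n\mathsf{v}_m$, hence $\Theta\,\mathsf{u}_n=q\,\mathsf{u}_n\,\Theta$ and $\Theta\,\mathsf{u}_n^{-1}=q^{-1}\,\mathsf{u}_n^{-1}\,\Theta$, while $\Theta$ commutes with every $\mathsf{v}_m$. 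This is really the only algebraic input needed.

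First I would establish the claim at the level of the individual Lax operators $\mathsf{L}_n(\lambda)$, and here it is cleanest to work entrywise. Inspecting the matrix entries, the diagonal entries $\mathsf{L}_n^{11}$ and $\mathsf{L}_n^{22}$ depend only on $\mathsf{v}_n^{\pm 1}$, so $\Theta$ commutes with them. The upper-right entry carries a single factor $\mathsf{u}_n$ and the lower-left entry a single factor $\mathsf{u}_n^{-1}$; thus $\Theta\,\mathsf{L}_n^{12}=q\,\mathsf{L}_n^{12}\,\Theta$ and $\Theta\,\mathsf{L}_n^{21}=q^{-1}\,\mathsf{L}_n^{21}\,\Theta$. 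Writing $\mathsf{G}=\mathrm{diag}(1,q)$ acting in the auxiliary space $\mathbb{C}^2$, these four relations package into the single matrix identity $\Theta\,\mathsf{L}_n(\lambda)=\mathsf{G}\,\mathsf{L}_n(\lambda)\,\mathsf{G}^{-1}\,\Theta$, where on the right $\Theta$ is a scalar in auxiliary space. The crucial point is that the grading matrix $\mathsf{G}$ is the same at every site, independent of $n$.

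Next I would propagate this through the ordered product defining the monodromy matrix. Since $\Theta$ is a single operator acting on the whole quantum space while $\mathsf{G}$ acts only in auxiliary space, the per-site relations telescope: moving $\Theta$ from the right of $\mathsf{L}_{\mathsf{N}}(\lambda)\cdots\mathsf{L}_1(\lambda)$ all the way to the left produces one conjugation by $\mathsf{G}$ at each site, and because $\mathsf{G}$ is site-independent the intermediate $\mathsf{G}\,\mathsf{G}^{-1}$ factors cancel, leaving $\Theta\,\mathsf{M}(\lambda)=\mathsf{G}\,\mathsf{M}(\lambda)\,\mathsf{G}^{-1}\,\Theta$. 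Reading off the four entries of $\mathsf{G}\,\mathsf{M}(\lambda)\,\mathsf{G}^{-1}$ with $\mathsf{G}=\mathrm{diag}(1,q)$ gives exactly $\Theta\mathsf{A}=\mathsf{A}\Theta$, $\Theta\mathsf{B}=q^{-1}\mathsf{B}\Theta$, $\Theta\mathsf{C}=q\mathsf{C}\Theta$, $\Theta\mathsf{D}=\mathsf{D}\Theta$, which are the stated relations.

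Finally, commutation with the transfer matrix follows at once: $\tau_2(\lambda)=\mathsf{A}(\lambda)+\mathsf{D}(\lambda)$, and since $\Theta$ commutes with both diagonal entries we get $[\tau_2(\lambda),\Theta]=0$. I do not expect any genuine obstacle here; the only point requiring a little care is the bookkeeping in the telescoping step, namely confirming that the grading matrix $\mathsf{G}$ really is the same at every site so that the conjugations compose cleanly rather than accumulating site-dependent factors. This uniformity is guaranteed because the same $\mathsf{u}_n$-grading (weight $+1$ in the off-diagonal, with the fixed auxiliary-space pattern) appears identically in each $\mathsf{L}_n$, which is precisely the structural reason $\Theta$ functions as a grading operator for the whole Yang-Baxter algebra.
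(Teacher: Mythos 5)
Your overall strategy---analyzing the Lax matrix entry by entry, packaging the result as conjugation by a site-independent diagonal matrix $\mathsf{G}$ acting only in auxiliary space, and telescoping through the ordered product $\mathsf{L}_{\mathsf{N}}(\lambda)\cdots\mathsf{L}_{1}(\lambda)$---is sound, and it is essentially the proof the paper intends: the paper gives no explicit proof of this lemma, deferring in a footnote to Proposition 6 of \cite{NJG-BS-ChP-NT}, whose steps are exactly of this direct-computation type. The final relations you state, and the deduction $[\tau_{2}(\lambda),\Theta]=0$ from $[\mathsf{A},\Theta]=[\mathsf{D},\Theta]=0$, are correct.

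However, as written your argument contains two compensating sign errors, so the intermediate claims are false even though the conclusion is right. From $\mathsf{u}_{n}\mathsf{v}_{m}=q^{\delta_{n,m}}\mathsf{v}_{m}\mathsf{u}_{n}$ one reads off $\mathsf{v}_{m}\mathsf{u}_{n}=q^{-\delta_{n,m}}\mathsf{u}_{n}\mathsf{v}_{m}$, \emph{not} $q^{+\delta_{n,m}}$; hence $\Theta\,\mathsf{u}_{n}=q^{-1}\mathsf{u}_{n}\Theta$, and therefore $\Theta\,(\mathsf{L}_{n})_{12}=q^{-1}(\mathsf{L}_{n})_{12}\,\Theta$ and $\Theta\,(\mathsf{L}_{n})_{21}=q\,(\mathsf{L}_{n})_{21}\,\Theta$ --- the opposite of what you assert. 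Your packaging step then contains the compensating slip: with your (incorrect) entrywise relations, the conjugating matrix would have to be $\mathrm{diag}(q,1)$, whereas it is the \emph{corrected} relations that are equivalent to $\Theta\,\mathsf{L}_{n}(\lambda)=\mathsf{G}\,\mathsf{L}_{n}(\lambda)\,\mathsf{G}^{-1}\,\Theta$ with $\mathsf{G}=\mathrm{diag}(1,q)$, since $(\mathsf{G}\mathsf{L}\mathsf{G}^{-1})_{12}=q^{-1}\mathsf{L}_{12}$ and $(\mathsf{G}\mathsf{L}\mathsf{G}^{-1})_{21}=q\,\mathsf{L}_{21}$. Once these two signs are fixed, everything else goes through verbatim: the telescoping works precisely because $\mathsf{G}$ is site-independent and the intermediate $\mathsf{G}^{-1}\mathsf{G}$ factors cancel, giving $\Theta\,\mathsf{M}(\lambda)=\mathsf{G}\,\mathsf{M}(\lambda)\,\mathsf{G}^{-1}\,\Theta$, whose entries are $\Theta\mathsf{A}=\mathsf{A}\Theta$, $\Theta\mathsf{B}=q^{-1}\mathsf{B}\Theta$ (equivalently $\mathsf{B}\Theta=q\Theta\mathsf{B}$), $\Theta\mathsf{C}=q\mathsf{C}\Theta$, $\Theta\mathsf{D}=\mathsf{D}\Theta$, as the lemma claims.
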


Besides, the $\Theta $-charge allows to express the following
asymptotics in both $\lambda \to 0$ and $\lambda \to \infty$ of the leading
operators of the Yang-Baxter algebras: 
\begin{align}
\mathsf{A}(\lambda )& =\left( \lambda ^{\mathsf{N}}\Theta \prod_{a=1}^{%
\mathsf{N}}\alpha _{n}+(-1)^{\mathsf{N}}\lambda ^{-\mathsf{N}}\Theta
^{-1}\prod_{a=1}^{\mathsf{N}}\beta _{a}\right) +\sum_{i=1}^{\mathsf{N}-1} 
\mathsf{A}_i \lambda^{\mathsf{N}-2i},  \label{NJG-BS-ChP-asymp-A} \\
\mathsf{D}(\lambda )& =\left( \lambda ^{-\mathsf{N}}\Theta \prod_{a=1}^{%
\mathsf{N}}\gamma _{a}+(-1)^{\mathsf{N}}\lambda ^{\mathsf{N}}\Theta
^{-1}\prod_{a=1}^{\mathsf{N}}\delta _{a}\right) +\sum_{i=1}^{\mathsf{N}-1} 
\mathsf{D}_i \lambda^{\mathsf{N}-2i},  \label{NJG-BS-ChP-asymp-D}
\end{align}%
with $\mathsf{A}_i$ and $\mathsf{D}_i$ being operators, and so%
\begin{equation}
\lim_{\log \lambda \rightarrow \mp \infty }\lambda ^{\pm \mathsf{N}}\tau
_{2}(\lambda )=\left( \Theta ^{\mp 1}a_{\mp }+\Theta ^{\pm 1}d_{\mp }\right)
,  \label{NJG-BS-ChP-asymptotics-t}
\end{equation}%
where $\lim_{\log \lambda \to - \infty}$ means $\lim_{\lambda \to 0}$, $%
\lim_{\log \lambda \to + \infty}$ means $\lim_{\lambda \to \infty}$ and:%
\begin{equation}
a_{+}\equiv \prod_{a=1}^{\mathsf{N}}\alpha _{a},\text{ \ \ }a_{-}\equiv
(-1)^{\mathsf{N}}\prod_{a=1}^{\mathsf{N}}\beta _{a},\text{\ \ }d_{+}\equiv
(-1)^{\mathsf{N}}\prod_{a=1}^{\mathsf{N}}\delta _{a},\text{ \ }d_{-}\equiv
\prod_{a=1}^{\mathsf{N}}\gamma _{a}.  \label{NJG-BS-ChP-Asymptotic-A-D}
\end{equation}%
We only consider here representations for which the Weyl algebra generators $%
\mathsf{u}_{n}$ and $\mathsf{v}_{n}$ are unitary operators; then the
following Hermitian conjugation properties of the generators of Yang-Baxter
algebra hold:

\begin{lemma}
\textbf{(Lemma 2 of \cite{NJG-BS-ChP-GN12})} Let $\epsilon \in \{ +1,-1\}$, then under
the following constrains on the parameters: 
\begin{equation}
\mathbbm{c}_{n}=-\epsilon \mathbbm{b}_{n}^{\ast },\text{ \ }\mathbbm{d}%
_{n}=-\epsilon \mathbbm{a}_{n}^{\ast },\text{ \ }\beta _{n}=\epsilon \left( %
\mathbbm{a}_{n}^{\ast }\mathbbm{b}_{n}\right) /\alpha _{n}^{\ast },\qquad ,
\label{NJG-BS-ChP-Self-adjointness Condition}
\end{equation}%
the generators of the Yang-Baxter algebra satisfy the following
transformations under Hermitian conjugation: 
\begin{equation}
\mathsf{M}(\lambda )^{\dagger }\equiv \left( 
\begin{array}{cc}
\mathsf{A}^{\dagger }(\lambda ) & \mathsf{B}^{\dagger }(\lambda ) \\ 
\mathsf{C}^{\dagger }(\lambda ) & \mathsf{D}^{\dagger }(\lambda )%
\end{array}%
\right) =\left( 
\begin{array}{cc}
\mathsf{D}(\lambda ^{\ast }) & -\epsilon \mathsf{C}(\lambda ^{\ast }) \\ 
-\epsilon \mathsf{B}(\lambda ^{\ast }) & \mathsf{A}(\lambda ^{\ast })%
\end{array}%
\right) ,  \label{NJG-BS-ChP-Hermit-Monodromy}
\end{equation}%
which, in particular, imply the self-adjointness of the transfer matrix ${%
\tau _{2}}(\lambda )$ for real $\lambda $.
\end{lemma}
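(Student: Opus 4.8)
The plan is to establish the identity first for a single Lax operator and then to propagate it to the monodromy matrix by multiplicativity. As a preliminary I would record the compatibility forced by unitarity: taking the adjoint of the Weyl relation $\mathsf{u}_{n}\mathsf{v}_{n}=q\,\mathsf{v}_{n}\mathsf{u}_{n}$ and using $\mathsf{u}_{n}^{\dagger}=\mathsf{u}_{n}^{-1}$, $\mathsf{v}_{n}^{\dagger}=\mathsf{v}_{n}^{-1}$ gives $q^{\ast}=q^{-1}$, hence $|q|=1$ and $(q^{\pm 1/2})^{\ast}=q^{\mp 1/2}$; this is the elementary fact that lets every factor close. The target single-site statement is
\[
\mathsf{L}_{n}(\lambda)^{\dagger}=K\,\mathsf{L}_{n}(\lambda^{\ast})\,K^{-1},\qquad K\equiv\begin{pmatrix}0 & 1\\ -\epsilon & 0\end{pmatrix},
\]
where $\dagger$ acts on the local quantum space only and leaves the auxiliary $\mathbb{C}^{2}$ structure intact (entrywise conjugation). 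A one-line computation shows that conjugation by this fixed $K$ implements exactly the swap of the two diagonal entries together with the multiplication of the off-diagonal entries by $-\epsilon$, the sign $(-\epsilon)^{2}=1$ being what makes $K^{-1}$ produce the same $-\epsilon$ on both off-diagonal slots.

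The verification of the single-site identity is a direct entry-by-entry check fed by the stated constraints. I would first derive, combining those constraints with the structural relations $\alpha_{n}\gamma_{n}=\mathbbm{a}_{n}\mathbbm{c}_{n}$ and $\beta_{n}\delta_{n}=\mathbbm{b}_{n}\mathbbm{d}_{n}$, the two scalar identities $\gamma_{n}=-\beta_{n}^{\ast}$ and $\delta_{n}=-\alpha_{n}^{\ast}$. Conjugating the $(1,1)$ entry $\lambda\alpha_{n}\mathsf{v}_{n}-\beta_{n}\lambda^{-1}\mathsf{v}_{n}^{-1}$ and inserting these identities reproduces the $(2,2)$ entry evaluated at $\lambda^{\ast}$, and symmetrically the conjugate of the $(2,2)$ entry gives the $(1,1)$ entry at $\lambda^{\ast}$. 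For the off-diagonal entries, which carry no $\lambda$-dependence, one conjugates $\mathsf{u}_{n}(q^{-1/2}\mathbbm{a}_{n}\mathsf{v}_{n}+q^{1/2}\mathbbm{b}_{n}\mathsf{v}_{n}^{-1})$, uses $|q|=1$ and the Weyl relation to restore the ordering $\mathsf{u}_{n}^{-1}(\cdots)$, and then invokes $\mathbbm{c}_{n}=-\epsilon\mathbbm{b}_{n}^{\ast}$ and $\mathbbm{d}_{n}=-\epsilon\mathbbm{a}_{n}^{\ast}$ to identify the result with $-\epsilon$ times the $(2,1)$ entry; the transposed computation gives the $(1,2)$ slot. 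This completes the single-site identity with a $K$ that is the same at every site.

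The propagation to $\mathsf{M}(\lambda)=\mathsf{L}_{\mathsf{N}}(\lambda)\cdots\mathsf{L}_{1}(\lambda)$ is where the only genuinely delicate point sits, and I expect it to be the main obstacle to state correctly. Hermitian conjugation reverses operator products, so a priori $\mathsf{M}(\lambda)^{\dagger}$ would involve the reversed product $\mathsf{L}_{1}^{\dagger}\cdots\mathsf{L}_{\mathsf{N}}^{\dagger}$. However, every matrix entry of $\mathsf{M}(\lambda)$ is a sum of products of local operators sitting on pairwise distinct sites, which therefore commute on the full Hilbert space; reversing their order is harmless, so the auxiliary-space product retains its original order and $\mathsf{M}(\lambda)^{\dagger}=\mathsf{L}_{\mathsf{N}}(\lambda)^{\dagger}\cdots\mathsf{L}_{1}(\lambda)^{\dagger}$ as an entrywise-conjugated matrix product. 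Substituting the single-site identity, the internal factors $K^{-1}K$ telescope and leave $\mathsf{M}(\lambda)^{\dagger}=K\,\mathsf{M}(\lambda^{\ast})\,K^{-1}$; expanding this conjugation reproduces \rf{Hermit-Monodromy}. Self-adjointness of the transfer matrix then follows immediately by reading off the diagonal: $\tau_{2}(\lambda)^{\dagger}=\mathsf{A}(\lambda)^{\dagger}+\mathsf{D}(\lambda)^{\dagger}=\mathsf{D}(\lambda^{\ast})+\mathsf{A}(\lambda^{\ast})=\tau_{2}(\lambda^{\ast})$, which collapses to $\tau_{2}(\lambda)^{\dagger}=\tau_{2}(\lambda)$ for real $\lambda$. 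The remaining entrywise manipulations, though requiring care with the half-integer powers of $q$ and with the Weyl reorderings, are routine once the constraints are inserted.
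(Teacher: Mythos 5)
Your proof is correct. The paper itself contains no proof of this statement --- it is imported verbatim as Lemma 2 of \cite{NJG-BS-ChP-GN12} --- but your argument is the natural one such a proof follows: the scalar identities $\gamma_{n}=-\beta_{n}^{\ast}$ and $\delta_{n}=-\alpha_{n}^{\ast}$ do follow from the constraints \rf{Self-adjointness Condition} combined with $\alpha_{n}\gamma_{n}=\mathbbm{a}_{n}\mathbbm{c}_{n}$, $\beta_{n}\delta_{n}=\mathbbm{b}_{n}\mathbbm{d}_{n}$; the single-site identity $\mathsf{L}_{n}(\lambda)^{\dagger}=K\,\mathsf{L}_{n}(\lambda^{\ast})\,K^{-1}$ checks entry by entry (including the Weyl reordering of $\mathsf{u}_{n}^{\pm 1}$ past $\mathsf{v}_{n}^{\pm 1}$, which is exactly what converts the conjugated $q^{\mp 1/2}$ prefactors back into the ones appearing in the opposite off-diagonal entry); and your handling of the product reversal under $\dagger$ --- restoring the original order because entries of different Lax matrices act on distinct sites --- is precisely the point that makes the $K^{-1}K$ telescoping legitimate, after which \rf{Hermit-Monodromy} and the self-adjointness of $\tau_{2}(\lambda)$ for real $\lambda$ follow as you state.
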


\subsection{General cyclic representations}

Here, we will consider general cyclic representations for which $\mathsf{v}_{n}$ and $%
\mathsf{u}_{n}\,$ have discrete spectra, and we will restrict our study to the case where $q$ is a root of unity:%
\begin{equation}  \label{NJG-BS-ChP-beta}
q=e^{-i\pi \beta ^{2}},\text{ \ \ \ }\beta ^{2}\,=\,\frac{p^{\prime }}{p}%
\,,\qquad p,p^{\prime }\in \mathbb{Z}^{>0}\,,
\end{equation}%
with $p$ odd and $p^{\prime }$ even being two co-prime numbers so that $q^{p}=1$. The condition (\ref{NJG-BS-ChP-beta}) implies that the powers $p$ of the generators $\mathsf{u}_{n}$ and $%
\mathsf{v}_{n}$ are central elements of each Weyl algebra $\mathcal{W}_{n}$.
In this case, we fix them to the identity:%
\begin{equation}
\mathsf{v}_{n}^{p}=1,\text{ \ }\mathsf{u}_{n}^{p}=1.
\end{equation}
We associate to any site $n$ of the chain a $p$-dimensional linear space R$_{n}$ ; we can define on it the following cyclic representation of $%
\mathcal{W}_{n}$:%
\begin{equation}
\mathsf{v}_{n}|k_{n}\rangle \equiv q^{k_{n}}|k_{n}\rangle ,\text{ \ }\mathsf{%
u}_{n}|k_{n}\rangle \equiv |k_{n}-1\rangle ,\text{\ \ \ \ }\forall k_{n}\in
\{0,...,p-1\},  \label{NJG-BS-ChP-v-eigenbasis}
\end{equation}%
with the following cyclic condition:%
\begin{equation}
|k_{n}+p\rangle \equiv |k_{n}\rangle .
\end{equation}%
The vectors $|k_{n}\rangle $ give a $\mathsf{v}_{n}$-eigenbasis of the
local space R$_{n}$. Let L$_{n}$ be the linear space dual of R$_{n}$ and let 
$\langle k_{n}|$ be the vectors of the dual basis defined by: 
\begin{equation}
\langle k_{n}|k_{n}^{\prime }\rangle =(|k_{n}\rangle ,|k_{n}^{\prime
}\rangle )\equiv \delta _{k_{n},k_{n}^{\prime }}\text{ \ \ }\forall
k_{n},k_{n}^{\prime }\in \{0,...,p-1\}.
\end{equation}%
The generators $\mathsf{u}_{n}$ and $\mathsf{v}_{n}$ being unitary,
the covectors $\langle k_{n}|$ define a $\mathsf{v}_{n}$-eigenbasis in the
dual space L$_{n}$. This induces the following left representation of Weyl algebra $%
\mathcal{W}_{n}$:%
\begin{equation}
\langle k_{n}|\mathsf{v}_{n}=q^{k_{n}}\langle k_{n}|,\text{ \ }\langle k_{n}|%
\mathsf{u}_{n}=\langle k_{n}+1|,\text{\ \ \ \ }\forall k_{n}\in
\{0,...,p-1\},
\end{equation}%
with the cyclic condition: 
\begin{equation}
\langle k_{n}|=\langle k_{n}+p|.
\end{equation}

In the \textit{left} and \textit{right} linear spaces:%
\begin{equation}
\mathcal{L}_{\mathsf{N}}\equiv \otimes _{n=1}^{\mathsf{N}}\text{L}_{n},\text{
\ \ \ \ }\mathcal{R}_{\mathsf{N}}\equiv \otimes _{n=1}^{\mathsf{N}}\text{R}%
_{n},
\end{equation}%
these representations of the Weyl algebras $\mathcal{W}_{n}$ determine left and right cyclic
representations of dimension $p^{\mathsf{N}}$ of the
monodromy matrix elements, and therefore of the Yang-Baxter
algebra. In the following, we will denote with $\mathcal{R}_{\mathsf{N}}^{%
{\small \text{S-adj}}}$ the sub-variety of the space of representations $%
\mathcal{R}_{\mathsf{N}}$ defined by the condition (\ref{NJG-BS-ChP-Self-adjointness
Condition}).

\subsubsection{Centrality of operator averages}

We define the average value $\mathcal{O}$ of any operator matrix element  $\mathsf{O}$ of the
monodromy matrix $\mathsf{M}(\lambda )$ by
\begin{equation}
\mathcal{O}(\Lambda )\,=\,\prod_{k=1}^{p}\mathsf{O}(q^{k}\lambda )\,,\qquad
\Lambda \,=\,\lambda ^{p},  \label{NJG-BS-ChP-avdef}
\end{equation}
then the commutativity of each family of operators $\mathsf{A}(\lambda )$, $%
\mathsf{B}(\lambda )$, $\mathsf{C}(\lambda )$ and $\mathsf{D}(\lambda )$
implies that the corresponding average values are functions of $\Lambda $.

\begin{proposition}
\label{NJG-BS-ChP-central} \textbf{(Proposition 1 of \cite{NJG-BS-ChP-GN12})}

\begin{itemize}
\item[a)] The average values of the monodromy matrix
entries, $\mathcal{A}(\Lambda )$, $\mathcal{B}(\Lambda )$%
, $\mathcal{C}(\Lambda )$, $\mathcal{D}(\Lambda )$, are central elements. They also satisfy, in the case of self-adjoint representations $\mathcal{R}_{\mathsf{N}}^{{\small \text{S-adj}}}$, the following relations under complex conjugation: 
\begin{equation}
(\mathcal{A}(\Lambda ))^{\ast }\equiv \mathcal{D}(\Lambda ^{\ast }),\ \ \ \
\ (\mathcal{B}(\Lambda ))^{\ast }\equiv -\epsilon \mathcal{C}(\Lambda ^{\ast
}),  \label{NJG-BS-ChP-H-cj-A-D}
\end{equation}%

\item[b)] Let%
\begin{equation}
\mathcal{M}(\Lambda )\,\equiv \,\left( 
\begin{array}{cc}
\mathcal{A}(\Lambda ) & \mathcal{B}(\Lambda ) \\ 
\mathcal{C}(\Lambda ) & \mathcal{D}(\Lambda )%
\end{array}%
\right)
\end{equation}%
be the 2$\times $2 matrix made of the average values of the elements of the
monodromy matrix $\mathsf{M}(\lambda )$, then it holds: 
\begin{equation}
\mathcal{M}(\Lambda )\,=\,\mathcal{L}_{\mathsf{N}}(\Lambda )\,\mathcal{L}_{%
\mathsf{N}-1}(\Lambda )\,\dots \,\mathcal{L}_{1}(\Lambda )\,,
\end{equation}%
where: 
\begin{equation}
\mathcal{L}_{n}(\Lambda )\equiv \left( 
\begin{array}{cc}
\Lambda \alpha _{n}^{p}-\beta _{n}^{p}/\Lambda & q^{p/2}(\mathbbm{a}_{n}^{p}+%
\mathbbm{b}_{n}^{p}) \\ 
q^{p/2}(\mathbbm{c}_{n}^{p}+\mathbbm{d}_{n}^{p}) & \gamma _{n}^{p}/\Lambda
-\Lambda \delta _{n}^{p}%
\end{array}%
\right) ,  \label{NJG-BS-ChP-Average-L}
\end{equation}%
is the 2$\times $2 matrix made of the average values of the elements of the
Lax matrix $\mathsf{L}_{n}(\lambda )$.
\end{itemize}
\end{proposition}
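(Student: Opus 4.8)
The plan is to reduce everything to a single-site computation and then to propagate it along the chain. First I would treat one site in isolation. The two diagonal entries of $\mathsf{L}_n(\lambda)$ are Laurent monomials in $\lambda$ times, respectively, $\mathsf{v}_n$ and $\mathsf{v}_n^{-1}$, while the two off-diagonal entries are $\lambda$-independent; hence the entrywise average $\prod_{k=1}^{p}\mathsf{L}_n(q^{k}\lambda)$ of each entry is either a product of $p$ commuting diagonal factors or a $p$-th power of a fixed operator. Evaluating such an expression on a $\mathsf{v}_n$-eigenvector $|k_n\rangle$ and using that $\{q^{2(k_n+k)}\}_{k=1}^{p}$ runs over all $p$-th roots of unity (here $p$ is odd), one finds an eigenvalue independent of $k_n$, so each local average is a scalar. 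The elementary identities $\prod_{k=1}^{p}(xq^{k}-x^{-1}q^{-k})=x^{p}-x^{-p}$ and $\prod_{k=0}^{p-1}(Aq^{k}+Bq^{-k})=A^{p}+B^{p}$, together with $q^{p(p\pm 1)/2}=1$ and $q^{-p/2}=q^{p/2}$, then reproduce exactly the four entries of $\mathcal{L}_n(\Lambda)$ in \rf{Average-L}. This settles the single-site case and provides the concrete engine for the rest.

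Next I would establish part~a). Since $[\mathsf{A}(\lambda),\mathsf{A}(\mu)]=0$ and likewise for each individual family, it is immediate that $\mathcal{A}(\Lambda)$ commutes with every $\mathsf{A}(\mu)$, and that each average depends only on $\Lambda=\lambda^{p}$. The content is the commutation with the other three families. From \rf{YBA} one has relations of the schematic form $\mathsf{A}(\lambda)\mathsf{B}(\mu)=\frac{a(\lambda/\mu)}{b(\lambda/\mu)}\mathsf{B}(\mu)\mathsf{A}(\lambda)-\frac{c(\lambda/\mu)}{b(\lambda/\mu)}\mathsf{B}(\lambda)\mathsf{A}(\mu)$, with $a,b,c$ the weights of \rf{Rlsg}. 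Commuting $\mathcal{A}(\Lambda)=\prod_{k}\mathsf{A}(q^{k}\lambda)$ through $\mathsf{B}(\mu)$, the scalar prefactor telescopes to $\prod_{k}a(q^{k}\lambda/\mu)/b(q^{k}\lambda/\mu)=1$ by the same root-of-unity product identity, while the terms carrying some $\mathsf{B}(q^{k}\lambda)$ reorganize and cancel using $[\mathsf{B}(\lambda),\mathsf{B}(\mu)]=0$ and $[\mathsf{A}(\lambda),\mathsf{A}(\mu)]=0$; the careful bookkeeping of these cross terms is the nonroutine part here. The conjugation relations \rf{H-cj-A-D} then follow from the Hermitian-conjugation relations \rf{Hermit-Monodromy}: since $\mathsf{A}(\lambda)^{\dagger}=\mathsf{D}(\lambda^{\ast})$ and $(q^{k}\lambda)^{\ast}=q^{-k}\lambda^{\ast}$, relabelling $k\mapsto -k$ inside the commuting product gives $\mathcal{A}(\Lambda)^{\dagger}=\mathcal{D}(\Lambda^{\ast})$, and similarly $\mathcal{B}(\Lambda)^{\dagger}=(-\epsilon)^{p}\mathcal{C}(\Lambda^{\ast})=-\epsilon\,\mathcal{C}(\Lambda^{\ast})$, the last step using that $p$ is odd; as these averages are central, Hermitian conjugation coincides with complex conjugation, which is the form stated in \rf{H-cj-A-D}.

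For part~b) I would proceed by induction on the number of sites, writing $\mathsf{M}(\lambda)=\mathsf{L}_{\mathsf{N}}(\lambda)\,\mathsf{M}^{(\mathsf{N}-1)}(\lambda)$ and using that the entries of $\mathsf{L}_{\mathsf{N}}$ (site $\mathsf{N}$) commute as operators with those of $\mathsf{M}^{(\mathsf{N}-1)}$ (sites $1,\dots,\mathsf{N}-1$). For instance $\mathsf{A}(\lambda)=\mathsf{L}_{\mathsf{N},11}(\lambda)\mathsf{A}^{(\mathsf{N}-1)}(\lambda)+\mathsf{L}_{\mathsf{N},12}\,\mathsf{C}^{(\mathsf{N}-1)}(\lambda)$, and the goal is to show $\prod_{k}\mathsf{A}(q^{k}\lambda)=\mathcal{L}_{\mathsf{N},11}(\Lambda)\mathcal{A}^{(\mathsf{N}-1)}(\Lambda)+\mathcal{L}_{\mathsf{N},12}(\Lambda)\mathcal{C}^{(\mathsf{N}-1)}(\Lambda)$, i.e. that entrywise averaging is a homomorphism for the auxiliary-space product. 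Expanding the product over $k$ of a sum of two noncommuting terms produces $2^{p}$ contributions; the two pure ones give the desired expression by the inductive hypothesis, so the heart of the argument is that the remaining $2^{p}-2$ mixed contributions cancel. This cancellation, which must use the root-of-unity structure in full strength through the vanishing of the relevant mixed products, is the main obstacle of the whole statement. Once b) is established, the entries of $\mathcal{M}(\Lambda)=\mathcal{L}_{\mathsf{N}}\cdots\mathcal{L}_{1}$ are manifestly scalar combinations of the local averages, giving an independent confirmation of the centrality asserted in a).
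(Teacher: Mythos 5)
First, a point of reference: the present paper does not actually prove this proposition --- it is quoted as Proposition 1 of \cite{NJG-BS-ChP-GN12}, whose proof in turn rests on Tarasov's results on cyclic representations \cite{NJG-BS-ChP-Ta} --- so your attempt has to be judged on its own merits rather than against an in-paper argument. The parts you genuinely carry out are correct: the single-site computation of \rf{Average-L} (diagonal entries via the $\mathsf{v}_{n}$-eigenbasis and the root-of-unity product identities, off-diagonal entries via the reduction of the $p$-th power of $\mathsf{u}_{n}g(\mathsf{v}_{n})$ to a commuting product, equivalently the $q$-binomial collapse $(aX+bY)^{p}=a^{p}X^{p}+b^{p}Y^{p}$ for $XY=q^{\mp 2}YX$, together with $q^{p/2}=q^{-p/2}$), and the deduction of \rf{H-cj-A-D} from \rf{Hermit-Monodromy} using the commutativity within each family, the relabelling $k\mapsto -k$, and the parity of $p$.

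However, the two claims that constitute the entire content of the proposition --- centrality in a) and multiplicativity of the entrywise average in b) --- are not proved in your text: both are reduced to cancellations that you assert and explicitly defer (``the careful bookkeeping\dots is the nonroutine part'', ``this cancellation\dots is the main obstacle of the whole statement''). This is a genuine gap, not a presentational one, because the route you indicate does not close as described. The mixed-term cancellation in b) is transparent only when both factors of the product carry single-site Weyl structure: there the off-diagonal entries are $\lambda$-independent and obey index-shift exchange relations such as $\mathsf{L}_{n,12}\,\mathsf{L}_{n,11}(q^{k}\lambda)=\mathsf{L}_{n,11}(q^{k+1}\lambda)\,\mathsf{L}_{n,12}$, so that after normal ordering the mixed words regroup with the common vanishing factor $\sum_{k=1}^{p}\mathsf{L}_{n,11}(q^{k+c}\lambda)=0$. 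In your inductive step, one factor is the partial monodromy matrix $\mathsf{M}^{(\mathsf{N}-1)}(\lambda)$, and its entries $\mathsf{A}^{(\mathsf{N}-1)}(q^{k}\lambda)$, $\mathsf{C}^{(\mathsf{N}-1)}(q^{j}\lambda)$ satisfy the two-term exchange relations following from \rf{YBA}, not monomial ones; normal ordering a mixed word then produces a sum of many words with argument-dependent coefficients, the $2^{p}-2$ mixed contributions no longer share a common factor that vanishes, and no cancellation is visible at the level of bookkeeping you invoke. The same objection applies to a): commuting $\prod_{k}\mathsf{A}(q^{k}\lambda)$ through $\mathsf{B}(\mu)$ leaves, besides the telescoping term you correctly identify, terms proportional to $\mathsf{B}(q^{j}\lambda)$ whose coefficients must be computed and shown to vanish through root-of-unity sum identities; this computation is precisely the substance of the proof in \cite{NJG-BS-ChP-Ta} and is absent here. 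In short, you have correctly handled the routine ingredients and accurately mapped where the difficulty lies, but the difficulty itself is untouched.
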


\subsubsection{Quantum determinant}

The following linear combination of products of the Yang-Baxter generators: 
\begin{equation}  \label{NJG-BS-ChP-q-det-f}
\mathrm{det_{q}}\mathsf{M}(\lambda )\,\equiv \,\mathsf{A}(\lambda )\mathsf{D}%
(\lambda /q)-\mathsf{B}(\lambda )\mathsf{C}(\lambda /q),
\end{equation}
is called quantum determinant and it is central\footnote{%
The centrality of the quantum determinant in the Yang-Baxter algebra was
first discovered in \cite{NJG-BS-ChP-IK81}, see also \cite{NJG-BS-ChP-IK09}.
} in this algebra. It admits the following factorized form: 
\begin{equation}
\text{det}_{\text{q}}\mathsf{M}(\lambda )=\prod_{n=1}^{\mathsf{N}}\text{det}%
_{\text{q}}\mathsf{L}_{n}(\lambda ),
\end{equation}%
in terms of the local quantum determinants: 
\begin{equation}
\mathrm{det_{q}}\mathsf{L}_{n}(\lambda )\equiv \left( \mathsf{L}_{n}(\lambda
)\right) _{11}\left( \mathsf{L}_{n}(\lambda /q)\right) _{22}-\left( \mathsf{L%
}_{n}\right) _{12}\left( \mathsf{L}_{n}\right) _{21}.
\end{equation}%
In the Bazhanov-Stroganov model it reads: 
\begin{eqnarray}
\mathrm{det_{q}}\mathsf{M}(\lambda ) &=&\prod_{n=1}^{\mathsf{N}}k_{n}(\frac{%
\lambda }{\mu _{n,+}}-\frac{\mu _{n,+}}{\lambda })(\frac{\lambda }{\mu _{n,-}%
}-\frac{\mu _{n,-}}{\lambda })  \notag  \label{NJG-BS-ChP-explicit-q-det} \\
&=&(-q)^{\mathsf{N}}\prod_{n=1}^{\mathsf{N}}\frac{\beta _{n}\mathbbm{a}_{n}%
\mathbbm{c}_{n}}{\alpha _{n}}(\frac{1}{\lambda }+q^{-1}\frac{\mathbbm{b}%
_{n}\alpha _{n}}{\mathbbm{a}_{n}\beta _{n}}\lambda )(\frac{1}{\lambda }%
+q^{-1}\frac{\mathbbm{d}_{n}\alpha _{n}}{\mathbbm{c}_{n}\beta _{n}}\lambda ),
\end{eqnarray}%
where: 
\begin{equation}
k_{n}\equiv \left( \mathbbm{a}_{n}\mathbbm{b}_{n}\mathbbm{c}_{n}\mathbbm{d}%
_{n}\right) ^{1/2},\text{ \ }\mu _{n,h}\equiv \left\{ 
\begin{array}{c}
iq^{1/2}\left( \mathbbm{a}_{n}\beta _{n}/\alpha _{n}\mathbbm{b}_{n}\right)
^{1/2}\text{ \ \ }h=+, \\ 
iq^{1/2}\left( \mathbbm{c}_{n}\beta _{n}/\alpha _{n}\mathbbm{d}_{n}\right)
^{1/2}\text{ \ \ }h=-.%
\end{array}%
\right.
\end{equation}%
Moreover, for the representations that satisfy (\ref{NJG-BS-ChP-Self-adjointness
Condition}) the quantum determinant reads\footnote{%
Remark that it depends on the parameters in Lax operators only through their
modules.}: 
\begin{equation}
\mathrm{det_{q}}\mathsf{M}(\lambda )=q^{\mathsf{N}}\prod_{n=1}^{\mathsf{N}}%
\frac{|\mathbbm{a}_{n}|^{2}|\mathbbm{b}_{n}|^{2}}{|\alpha _{n}|^{2}}(\frac{1%
}{\lambda }+\epsilon q^{-1}\frac{|\alpha _{n}|^{2}}{|\mathbbm{a}_{n}|^{2}}%
\lambda )(\frac{1}{\lambda }+\epsilon q^{-1}\frac{|\alpha _{n}|^{2}}{|%
\mathbbm{b}_{n}|^{2}}\lambda ).
\end{equation}%
Let us define the following functions that will be crucial in the rest of
the paper:%
\begin{equation}
\bar{\textsc{a}}(\lambda )\equiv \alpha (\lambda )\text{\textsc{a}}(\lambda
),\qquad \,\,\bar{\textsc{d}}(\lambda )\equiv \alpha ^{-1}(q\lambda )\text{%
\textsc{d}}(\lambda )
\end{equation}%
where:%
\begin{equation}
\text{\textsc{a}}(\lambda )\equiv \prod_{n=1}^{\mathsf{N}}(\beta _{n}\alpha
_{n})^{1/2}(\frac{\lambda }{\mu _{n,+}}-\frac{\mu _{n,+}}{\lambda }),\qquad 
\text{\textsc{d}}(\lambda )\equiv \prod_{n=1}^{\mathsf{N}}(\frac{\mathbbm{a}%
_{n}\mathbbm{b}_{n}\mathbbm{c}_{n}\mathbbm{d}_{n}}{\alpha _{n}\beta _{n}}%
)^{1/2}(\frac{q\lambda }{\mu _{n,-}}-\frac{\mu _{n,-}}{q\lambda }).
\label{NJG-BS-ChP-L-poly-a-d}
\end{equation}%
They always satisfy the condition: 
\begin{equation}
\text{det}_{\text{q}}\mathsf{M}(\lambda )=\bar{\textsc{a}}(\lambda )\bar{%
\textsc{d}}(\lambda /q),  \label{NJG-BS-ChP-relation-q-det}
\end{equation}%
while the function $\alpha (\lambda )$ is defined by the requirement: 
\begin{equation}
\prod_{n=1}^{p}\bar{\textsc{a}}(\lambda q^{n})+\prod_{n=1}^{p}\bar{\textsc{%
d}}(\lambda q^{n})=\mathcal{A}(\Lambda )+\mathcal{D}(\Lambda ).
\label{NJG-BS-ChP-relation-averages}
\end{equation}%
Note that this last condition is a second order equation in the average $%
\prod_{n=1}^{p}\alpha (q^{n}\lambda )$ and then we have only two possible
choices for the averages of the functions $\bar{\textsc{a}}(\lambda )$ and $%
\bar{\textsc{d}}(\lambda )$: 
\begin{equation}
\prod_{n=1}^{p}\bar{\textsc{a}}(\lambda q^{n})=\Omega _{\epsilon }\left(
\Lambda \right) ,\text{ \ \ }\prod_{n=1}^{p}\bar{\textsc{d}}(\lambda
q^{n})=\Omega _{-\epsilon }\left( \Lambda \right) ,
\end{equation}%
where $\epsilon =\mp $ and $\Omega _{\pm }$ are the two eigenvalues of the $%
2\times 2$ matrix $\mathcal{M}(\Lambda )$ composed by the averages of the
Yang-Baxter generators.

\subsection{SOV-representations and the Yang-Baxter algebra}

\label{NJG-BS-ChP-SOV-Gen} 
The spectral problem of the transfer matrix $\tau _{2}(\lambda )$ admits a separate variables representation in the basis which diagonalize the commutative family of operators $\mathsf{B}(\lambda )$ as generally argued by Sklyanin in \cite{NJG-BS-ChP-Sk1,NJG-BS-ChP-Sk2,NJG-BS-ChP-Sk3}. In \cite{NJG-BS-ChP-GN12} it has been proven:

\begin{theorem}
\textbf{(Theorem 1 of \cite{NJG-BS-ChP-GN12})} For almost all the values of the
parameters of the representation, there exists a SOV representation for the Bazhanov-Stroganov model; in this case $\mathsf{B}(\lambda )$\ is diagonalizable and
has simple spectrum.
\end{theorem}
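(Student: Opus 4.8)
The plan is to produce an explicit common eigenbasis of the commuting family $\{\mathsf{B}(\lambda)\}_{\lambda}$ and then to count its joint eigenvalues. Since the Yang--Baxter relation \rf{YBA} already gives $[\mathsf{B}(\lambda),\mathsf{B}(\mu)]=0$ for all $\lambda,\mu$, and since $\mathsf{B}(\lambda)$ is a Laurent polynomial in $\lambda$ of fixed parity and degree $\mathsf{N}-1$ (the off-diagonal entries of each $\mathsf{L}_{n}$ are $\lambda$-independent), diagonalizing the whole family is equivalent to simultaneously diagonalizing its finitely many, mutually commuting coefficient operators, i.e. the $\mathsf{N}-1$ commuting operators arising as its zeros in $\lambda^{2}$. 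I would first fix the \emph{classical data} of these zeros: by Proposition~\ref{NJG-BS-ChP-central} the average $\mathcal{B}(\Lambda)$ is a central, \emph{scalar} polynomial in $\Lambda=\lambda^{p}$, so any eigenvalue $b(\lambda)$ of $\mathsf{B}(\lambda)$ must satisfy $\prod_{k=1}^{p}b(q^{k}\lambda)=\mathcal{B}(\lambda^{p})$. This forces the $p$-th powers of the operator zeros $\mathsf{y}_{n}$ of $\mathsf{B}$ to equal the fixed roots $\{\Lambda_{n}\}$ of $\mathcal{B}(\Lambda)$, which for almost all parameters are nonzero and pairwise distinct; this non-degeneracy is the precise content of ``almost all values of the parameters''.

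Next I would convert the Yang--Baxter algebra into ladder relations. Extracting from \rf{YBA} the commutation rules of $\mathsf{B}(\lambda)$ with $\mathsf{A}(\mu)$ and $\mathsf{D}(\mu)$ and evaluating them at $\mu$ a zero of $\mathsf{B}$, one finds that $\mathsf{A}$ and $\mathsf{D}$ act as raising/lowering operators on the separated variables, multiplying a chosen $\mathsf{y}_{n}$ by $q^{\pm1}$ while leaving the others fixed; the grading Lemma~\ref{NJG-BS-ChP-YB-G-T} controls the accompanying shift of the $\Theta$-charge. Because $q^{p}=1$, repeated application closes into an orbit of length $p$ in each separated direction. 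I would then build the SOV basis by starting from one $\mathsf{B}$-eigenvector and sweeping out these orbits with $\mathsf{A}$ and $\mathsf{D}$, producing a family of joint eigenvectors labelled by an independent choice of $p$-th root for each of the $\mathsf{N}-1$ separated variables together with the remaining leading/boundary quantum number tied by Lemma~\ref{NJG-BS-ChP-YB-G-T} to $\Theta$; this yields $p\cdot p^{\mathsf{N}-1}=p^{\mathsf{N}}$ vectors in all.

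The final step is the count. For generic parameters each label gives a distinct eigenvalue function $b(\lambda)$, since distinct choices of $p$-th roots of the distinct, nonzero $\Lambda_{n}$ produce distinct Laurent polynomials; hence the $p^{\mathsf{N}}$ constructed vectors carry pairwise-different eigenvalues and are therefore linearly independent. As the representation space has dimension $p^{\mathsf{N}}$, they form a basis, which forces $\mathsf{B}(\lambda)$ to be diagonalizable with simple spectrum, and this eigenbasis is precisely the sought SOV representation.

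I expect the main obstacle to lie in the non-degeneracy analysis of the construction step: one must show that the ladder action never vanishes (the normalization coefficients generated by \rf{YBA} stay nonzero) and that the leading coefficient operator of $\mathsf{B}(\lambda)$ is invertible, so that every orbit really has full length $p$ and all $p^{\mathsf{N}}$ states are nonzero with distinct eigenvalue functions. The locus where some $\Lambda_{n}$ collide or vanish, or where a ladder coefficient degenerates, must be excluded, and this is exactly what produces the ``for almost all values of the parameters'' hypothesis; checking that this excluded set is a proper algebraic subvariety of the parameter space is the crux of the argument.
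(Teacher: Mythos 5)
First, a structural point: the paper you were given never proves this statement. The theorem is imported verbatim from \cite{NJG-BS-ChP-GN12} (its Theorem 1), and Section 2.3 only recalls its output, namely the left and right SOV representations \rf{SAdef}--\rf{SDdef} and \rf{SOV-B-R}--\rf{SOV-D-R}; the paper's own footnotes attribute the underlying mechanism to the recursive, site-by-site construction of the left and right $\mathsf{B}$-eigenbases of \cite{NJG-BS-ChP-GIPS06,NJG-BS-ChP-GIPST07}, based on the eigenvector analysis of \cite{NJG-BS-ChP-I06}. Compared with that construction, your route is genuinely different but sound in strategy: the cited proof is explicitly constructive, building the $\mathsf{B}$-eigenvectors by induction on the chain length out of the local $p$-dimensional Weyl representations and then reading off the actions of $\mathsf{A}$ and $\mathsf{D}$ from the Yang--Baxter relations, with the genericity condition surfacing as $(\eta_a^{(0)})^p\neq(\eta_b^{(0)})^p$ (see the footnote following \rf{Bdef}); you instead argue abstractly in Sklyanin's manner, pinning the ``classical data'' through the centrality of the averages (Proposition \ref{NJG-BS-ChP-central}), taking one common eigenvector of the commuting family $\{\mathsf{B}(\lambda)\}$, and sweeping out $q$-orbits with $\mathsf{A}$, $\mathsf{D}$ and $\Theta$ (Lemma \ref{NJG-BS-ChP-YB-G-T}) to produce $p^{\mathsf{N}}$ states with pairwise distinct eigenvalue functions. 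Your version buys independence from the explicit local formulas; the recursive version buys the explicit wavefunctions of the $\mathsf{B}$-eigenstates, which the rest of the paper (the measure \rf{2M_jj}, the scalar products of Section 4) actually relies on --- so your argument would prove the theorem as stated, but would not by itself supply the concrete basis used later.

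Second, the obstacle you flag as the crux is real but closes with the very tool you already invoke for $\mathsf{B}$, so it is a gap of execution rather than of idea. The operators $\mathsf{A}(q^{k}\eta_a)$, $k=1,\dots,p$, commute among themselves and their product over a full $q$-orbit equals the central scalar $\mathcal{A}(Z_a)$ by Proposition \ref{NJG-BS-ChP-central}; hence, acting on a $\mathsf{B}$-eigencovector $\langle\boldsymbol\eta|$ whose eigenvalue vanishes at $\eta_a$, one gets $\langle\boldsymbol\eta|\prod_{k=1}^{p}\mathsf{A}(q^{k}\eta_a)=\mathcal{A}(Z_a)\,\langle\boldsymbol\eta|$, so whenever $\mathcal{A}(Z_a)\neq0$ no intermediate ladder step can annihilate the state and every orbit has full length $p$. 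The $\Theta$-direction is automatic because $\Theta$ is invertible, and a vanishing leading coefficient $\eta_{\mathsf{N}}$ of an eigenvalue is excluded since it would force the whole eigenvalue, and hence $\mathcal{B}(\Lambda)$, to vanish identically. Finally, the exceptional parameter set is manifestly a proper algebraic subvariety: the conditions to avoid --- multiple or vanishing roots of $\mathcal{B}(\Lambda)$, or $\mathcal{A}(Z_a)=0$ --- are polynomial in the parameters because all averages are computed explicitly from the factorization $\mathcal{M}(\Lambda)=\mathcal{L}_{\mathsf{N}}(\Lambda)\cdots\mathcal{L}_{1}(\Lambda)$ of Proposition \ref{NJG-BS-ChP-central}, and they are not identically satisfied (one checks any convenient decoupled point); this is precisely the ``almost all values of the parameters'' in the statement.
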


Let us recall here the left SOV-representations of the generators of the
Yang-Baxter algebra for the Bazhanov-Stroganov model. Let $\langle \,\boldsymbol\eta_{\mathbf{k}}\,|$ be the generic element of a basis of eigenvectors of $%
\mathsf{B}(\lambda )$: 
\begin{equation}
\langle \,{\boldsymbol\eta _{\mathbf{k}}}\,|\mathsf{B}(\lambda )\,=\,\eta _{\mathsf{N}%
}\,b_{\boldsymbol\eta _{\mathbf{k}}}(\lambda )\,\langle \,{\boldsymbol\eta _{\mathbf{k}}}%
\,|\,,\qquad b_{\boldsymbol\eta _{\mathbf{k}}}(\lambda )\,\equiv \,\prod_{a=1}^{\mathsf{%
N}-1}\left( \lambda /{\eta _a^{(k_a)}}-{\eta _a^{(k_a)}}/\lambda \right) \,,  \label{NJG-BS-ChP-Bdef}
\end{equation}%
and 
\begin{equation}
\boldsymbol\eta_{\mathbf{k}}\in {\mathsf{Z}_{\mathsf{B}}}\,\equiv \,\left\{
\,({\eta _1^{(k_1)}}\equiv q^{k_{1}}\eta _{1}^{(0)},\dots ,{\eta _\mathsf{N}^{(k_\mathsf{N})}}\equiv q^{k_{\mathsf{N}}}\eta _{\mathsf{N}%
}^{(0)})\,;\,{\mathbf{k}}\equiv (k_{1},\dots ,k_{\mathsf{N}})\in \mathbbm{Z}%
_{p}^{\mathsf{N}}\,\right\} \,,
\end{equation}%
where $\eta _{a}^{(0)}$ are fixed constants\footnote{Here, the simplicity of the spectrum of $\mathsf{B}(\lambda )$ is equivalent
to the requirement $\left( \eta _{a}^{(0)}\right) ^{p}\neq \left( \eta
_{b}^{(0)}\right) ^{p}$ for any $a\neq b\in \{1,\dots ,\mathsf{N}-1\}$.} of the representations.
For simplicity, when possible we will omit the subscript ${\mathbf{k}}$ in $\langle {\boldsymbol\eta _{\mathbf{k}}}\,|$. The action of the remaining generators of the
Yang-Baxter algebra on arbitrary states $\langle \,\boldsymbol\eta |$ reads: 
\begin{align}
\langle \,\boldsymbol\eta \,|\mathsf{A}(\lambda )\,=\,& \,b_{\boldsymbol\eta }(\lambda )\left[
\lambda \eta _{\mathsf{A}}^{(+)}\langle \,q^{-\delta _{\mathsf{N}}}\boldsymbol\eta
\,|+\lambda ^{-1}\eta _{\mathsf{A}}^{(-)}\langle \,q^{\delta _{\mathsf{N}%
}}\boldsymbol\eta \,|\right] +\sum_{a=1}^{\mathsf{N}-1}\prod_{b\neq a}\frac{\lambda
/\eta _{b}-\eta _{b}/\lambda }{\eta _{a}/\eta _{b}-\eta _{b}/\eta _{a}}\,%
\mathtt{a}^{(SOV)}(\eta _{a})\,\langle \,q^{-\delta _{a}}\boldsymbol\eta \,|\,,
\label{NJG-BS-ChP-SAdef} \\
\langle \,\boldsymbol\eta \,|\mathsf{D}(\lambda )\,=\,& \,b_{\boldsymbol\eta }(\lambda )\left[
\lambda \eta _{\mathsf{D}}^{(+)}\langle \,q^{\delta _{\mathsf{N}}}\boldsymbol\eta
\,|+\lambda ^{-1}\eta _{\mathsf{D}}^{(-)}\langle \,q^{-\delta _{\mathsf{N}%
}}\boldsymbol\eta \,|\right] +\sum_{a=1}^{\mathsf{N}-1}\prod_{b\neq a}\frac{\lambda
/\eta _{b}-\eta _{b}/\lambda }{\eta _{a}/\eta _{b}-\eta _{b}/\eta _{a}}\,%
\mathtt{d}^{(SOV)}(\eta _{a})\,\langle \,q^{\delta _{a}}\boldsymbol\eta \,|\,,
\label{NJG-BS-ChP-SDdef}
\end{align}%
where: 
\begin{equation}
\eta _{\mathsf{A}}^{(\pm )}=(\pm 1)^{\mathsf{N}-1}a_{\pm }\prod_{n=1}^{%
\mathsf{N}-1}\eta _{n}^{\pm 1},\,\,\,\,\,\,\,\,\,\,\eta _{\mathsf{D}}^{(\pm
)}=(\pm 1)^{\mathsf{N}-1}d_{\pm }\prod_{n=1}^{\mathsf{N}-1}\eta _{n}^{\pm 1},
\label{NJG-BS-ChP-ZAD-asymp}
\end{equation}%
and the states $\langle \,q^{\pm \delta _{a}}\boldsymbol\eta \,|$ are defined by: 
\begin{equation}
\langle \,q^{\pm \delta _{a}}\boldsymbol\eta \,|\equiv \langle \,\eta _{1},\dots
,q^{\pm 1}\eta _{a},\dots ,\eta _{\mathsf{N}}\,|\,.
\end{equation}%
Finally, the quantum determinant relation defines uniquely $\mathsf{C}(\lambda )$. The expressions (\ref{NJG-BS-ChP-SAdef}) and (\ref{NJG-BS-ChP-SDdef})
contain complex-valued coefficients $\mathtt{a}^{(SOV)}(\eta _{a})$ and $%
\mathtt{d}^{(SOV)}(\eta _{a})$ which completely characterize the SOV
representation. These coefficients have to be solution of the quantum
determinant conditions: 
\begin{equation}
\mathrm{det_{q}}\mathsf{M}(\eta _{r})\,=\,\mathtt{a}^{(SOV)}(\eta _{r})%
\mathtt{d}^{(SOV)}(q^{-1}\eta _{r})\,,\quad \forall r=1,\dots ,\mathsf{N}%
-1\,,  \label{NJG-BS-ChP-addet}
\end{equation}%
and of the average conditions: 
\begin{equation}
\mathcal{A}(Z_{r}\equiv\eta
_{r}^p)\,\equiv \,\prod_{k=1}^{p}\mathtt{a}^{(SOV)}(q^{k}\eta
_{r})\,,\qquad \mathcal{D}(Z_{r})\,\equiv \,\prod_{k=1}^{p}\mathtt{d}%
^{(SOV)}(q^{k}\eta _{r})\,,\qquad \forall r\in \{1,\dots ,\mathsf{N}-1\}.
\label{NJG-BS-ChP-ADaver}
\end{equation}%
In a SOV representation, some freedom is left in the choice of $\mathtt{a}%
^{(SOV)}(\eta _{r})$ and $\mathtt{d}^{(SOV)}(\eta _{r})$. It  can be
parametrized by the gauge transformation written in terms of an arbitrary function $f$: 
\begin{equation}
\mathtt{\tilde{a}}^{(SOV)}(\eta _{r})\,=\,\mathtt{a}^{(SOV)}(\eta _{r})\frac{%
f(\eta _{r}q^{-1})}{f(\eta _{r})}\,,\qquad \mathtt{\tilde{d}}^{(SOV)}(\eta
_{r})\,=\,\mathtt{d}^{(SOV)}(\eta _{r})\frac{f(\eta _{r}q)}{f(\eta _{r})}\,;
\label{NJG-BS-ChP-gauge}
\end{equation}%
which just amounts to the following change of normalization for the states of the $\mathsf{B}$%
-eigenbasis: 
\begin{equation}
\langle \,\boldsymbol\eta \,|\,\rightarrow \,\prod_{r=1}^{\mathsf{N}-1}f^{-1}(\eta
_{r})\langle \,\boldsymbol\eta \,|\,.
\end{equation}%
Similarly, we can construct a right SOV-representation of the Yang-Baxter
generators by the following actions:%
\begin{align}
\mathsf{B}(\lambda )|\boldsymbol\eta \rangle & =|\boldsymbol\eta \rangle \eta _{\mathsf{N}}b_{\boldsymbol\eta
}(\lambda )\,,  \label{NJG-BS-ChP-SOV-B-R} \\
&  \notag \\
\mathsf{A}(\lambda )|\boldsymbol\eta \rangle & =[|q^{\delta _{\mathsf{N}}}\boldsymbol\eta \rangle
\eta _{\mathsf{A}}^{(+)}\lambda +|q^{-\delta _{\mathsf{N}}}\boldsymbol\eta \rangle 
\frac{\eta _{\mathsf{A}}^{(-)}}{\lambda }]b_{\boldsymbol\eta }(\lambda )\,+\sum_{a=1}^{%
\mathsf{N}-1}|q^{\delta _{a}}\boldsymbol\eta \rangle \prod_{b\neq a}\frac{(\lambda
/\eta _{b}-\eta _{b}/\lambda )}{(\eta _{a}/\eta _{b}-\eta _{b}/\eta _{a})}%
\mathtt{\bar{a}}^{(SOV)}(\eta _{a}),  \label{NJG-BS-ChP-SOV-A-R} \\
&  \notag \\
\mathsf{D}(\lambda )|\boldsymbol\eta \rangle & =[|q^{-\delta _{\mathsf{N}}}\boldsymbol\eta \rangle
\eta _{\mathsf{D}}^{(+)}\lambda +|q^{\delta _{\mathsf{N}}}\boldsymbol\eta \rangle \frac{%
\eta _{\mathsf{D}}^{(-)}}{\lambda }]b_{\boldsymbol\eta }(\lambda )\,+\sum_{a=1}^{%
\mathsf{N}-1}|q^{-\delta _{a}}\boldsymbol\eta \rangle \prod_{b\neq a}\frac{(\lambda
/\eta _{b}-\eta _{b}/\lambda )}{(\eta _{a}/\eta _{b}-\eta _{b}/\eta _{a})}%
\mathtt{\bar{d}}^{(SOV)}(\eta _{a}),  \label{NJG-BS-ChP-SOV-D-R}
\end{align}%
where $|\boldsymbol\eta \rangle \in \mathcal{R}_{\mathsf{N}}$ is the right $\mathsf{B}$%
-eigenstate corresponding to the generic $\boldsymbol{\eta }\in \mathsf{Z}{_{\mathsf{B}}%
}$. The coefficients $\mathtt{\bar{a}}^{(SOV)}(\eta _{a})$ and $\mathtt{\bar{%
d}}^{(SOV)}(\eta _{a})$ are solutions of the same average $\left( \ref{NJG-BS-ChP-ADaver}\right) $ and quantum determinant:  
\begin{equation}
\mathrm{det_{q}}\mathsf{M}(\eta _{r})\,=\,\mathtt{\bar{d}}^{(SOV)}(\eta _{r})%
\mathtt{\bar{a}}^{(SOV)}(q^{-1}\eta _{r})\,,\quad \forall r=1,\dots ,\mathsf{N}%
-1\,  \label{NJG-BS-ChP-o-addet}
\end{equation} conditions while $\mathsf{C}%
(\lambda )$ is uniquely defined by the quantum determinant relation $(\ref{NJG-BS-ChP-q-det-f})$.

\subsubsection{SOV-decomposition of the identity}

The diagonalizability of the Yang-Baxter generator $\mathsf{B}(\lambda )$\
and the simplicity of its spectrum imply the following spectral
decomposition of the identity $\mathbb{I}$ in terms of the $\mathsf{B}$%
-eigenbasis: 
\begin{equation}
\mathbb{I}\equiv \sum_{{\mathbf{k}}\in \mathbb{Z}_{p}^{\mathsf{N}}}\mu {_{%
\mathbf{k}}}|\boldsymbol{\eta _{\mathbf{k}}}\rangle \langle \boldsymbol{\eta _{\mathbf{k}}}|,
\end{equation}%
where:%
\begin{equation}
\mu {_{\mathbf{k}}}\equiv \langle {\boldsymbol\eta _{\mathbf{k}}}|{\boldsymbol\eta _{\mathbf{k}}}%
\rangle ^{-1}\text{ \ \ }\forall\, {\mathbf{k}}\in \mathbb{Z}_{p}^{\mathsf{N}},
\end{equation}%
is the equivalent of the so-called Sklyanin's measure\footnote{%
Sklyanin's measure has been first introduced by Sklyanin in his article \cite{NJG-BS-ChP-Sk1} on quantum Toda chain, \cite{NJG-BS-ChP-GM}-\cite{NJG-BS-ChP-KL99}; see also \cite{NJG-BS-ChP-Sm98} and 
\cite{NJG-BS-ChP-BT06} for further discussions on the measure in the quantum Toda chain
and in the sinh-Gordon model, respectively.}. The non-Hermitian character of
the operator family $\mathsf{B}(\lambda )$\ clearly implies that, for
generic ${\mathbf{k}}\in \mathbb{Z}_{p}^{\mathsf{N}}$, $\left( |\boldsymbol{\eta _{%
\mathbf{k}}}\rangle \right) ^{\dag }$ and $\langle {\boldsymbol\eta _{\mathbf{k}}}|$
are not proportional covectors in $\mathcal{L}_{\mathsf{N}}$; then, $\mu {_{%
\mathbf{k}}}$ is not a standard positive definite measure in our cyclic
representations. Nevertheless, we will show that the above formula defines a proper orthogonal decomposition of the identity operator. 

Now we compute\footnote{%
Let us recall that this measure has been first derived in \cite{NJG-BS-ChP-GIPST07} for
cyclic representations of Bazhanov-Stroganov model \cite{NJG-BS-ChP-BS,NJG-BS-ChP-BBP,NJG-BS-ChP-B04} through the
recursion in the construction of left and right SOV-basis.} this
\textquotedblleft measure\textquotedblright\ $\mu {_{\mathbf{k}}}$ and we
show that up to an overall constant (i.e. a constant w.r.t. ${\mathbf{k}}\in 
\mathbb{Z}_{p}^{\mathsf{N}}$) it is completely fixed by the given left and
right SOV-representations of the Yang-Baxter algebras when the gauges are
fixed.

\begin{proposition}
The following identities hold:%
\begin{eqnarray}
\langle {\boldsymbol\eta _{\mathbf{k}}}|{\boldsymbol\eta _{\mathbf{h}}}\rangle &=&\langle {\boldsymbol\eta _{%
\mathbf{h}}}|{\boldsymbol\eta _{\mathbf{h}}}\rangle \prod_{j=1}^{\mathsf{N}}\delta
_{k_{i},h_{i}}\text{, \ }\forall\, {\mathbf{k}},{\mathbf{h}}\in \mathbb{Z}%
_{p}^{\mathsf{N}},  \label{NJG-BS-ChP-1M_jj} \\
\mu {_{\mathbf{h}}} &=&\frac{\prod_{1\leq a<b\leq \mathsf{N}-1}(({\eta }%
_{a}^{(h_{a})})^{2}-({\eta }_{b}^{(h_{b})})^{2})}{C_{\mathsf{N}}\prod_{a=1}^{%
\mathsf{N}-1}\omega _{a}({\eta }_{a}^{(h_{a})})},\text{ \ }\forall\, {\mathbf{h%
}}\in \mathbb{Z}_{p}^{\mathsf{N}},  \label{NJG-BS-ChP-2M_jj}
\end{eqnarray}%
where:%
\begin{equation}
\omega _{a}({\eta }_{a}^{(h_{a})})\equiv \left( {\eta }_{a}^{(h_{a})}\right)
^{\mathsf{N}-1}\prod_{l_{a}=1}^{h_{a}}\mathtt{a}^{(SOV)}({\eta }%
_{a}^{(l_{a})})/\mathtt{\bar{a}}^{(SOV)}({\eta }_{a}^{(l_{a}-1)})
\end{equation}%
are gauge dependent parameters and  $C_{\mathsf{N}}$ in the formula for  $\mu {_{\mathbf{h}%
}}$ is a constant w.r.t.\ ${\mathbf{h}}\in \mathbb{Z}_{p}^{\mathsf{N}}$.
Then, the SOV-decomposition of the identity explicitly reads: 
\begin{equation}
\mathbb{I}\equiv \sum_{h_{1},...,h_{\mathsf{N}}=1}^{p}\prod_{1\leq a<b\leq 
\mathsf{N}-1}(({\eta }_{a}^{(h_{a})})^{2}-({\eta }_{b}^{(h_{b})})^{2})\frac{|%
{\eta }_{1}^{(h_{1})},...,{\eta }_{\mathsf{N}}^{(h_{\mathsf{N}})}\rangle
\langle {\eta }_{1}^{(h_{1})},...,{\eta }_{\mathsf{N}}^{(h_{\mathsf{N}})}|}{%
C_{\mathsf{N}}\prod_{b=1}^{\mathsf{N}-1}\omega _{b}({\eta }_{b}^{(h_{b})})},
\label{NJG-BS-ChP-Decomp-Id}
\end{equation}
Note that the constant $C_{\mathsf{N}}$ can be put equal to one by a trivial (constant) gauge transformation that does not affect the functions $\mathtt{a}^{(SOV)}$ and $\mathtt{\bar{a}}^{(SOV)}$. 
\end{proposition}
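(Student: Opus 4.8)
The plan is to prove the three assertions in turn: the biorthogonality \rf{1M_jj}, the closed form \rf{2M_jj} of the ``measure'', and then the resolution \rf{Decomp-Id}. The first is essentially free. Since $\langle\,\boldsymbol\eta_{\mathbf k}\,|$ and $|\,\boldsymbol\eta_{\mathbf h}\,\rangle$ are left and right eigenvectors of the commuting family $\mathsf B(\lambda)$, I would sandwich $\mathsf B(\lambda)$ between them and act on each side through \rf{Bdef}, obtaining $\big(\eta_{\mathsf N}^{(k_{\mathsf N})}b_{\boldsymbol\eta_{\mathbf k}}(\lambda)-\eta_{\mathsf N}^{(h_{\mathsf N})}b_{\boldsymbol\eta_{\mathbf h}}(\lambda)\big)\langle\,\boldsymbol\eta_{\mathbf k}\,|\,\boldsymbol\eta_{\mathbf h}\,\rangle=0$ for all $\lambda$. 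By the simplicity of the spectrum of $\mathsf B(\lambda)$ (Theorem 1 of \cite{NJG-BS-ChP-GN12}) the full eigenvalue separates all $\mathbf k\in\mathbb Z_p^{\mathsf N}$ — the prefactor $\eta_{\mathsf N}^{(k_{\mathsf N})}$ resolving the $\mathsf N$-th index and the zeros of $b_{\boldsymbol\eta_{\mathbf k}}$ the remaining ones under $(\eta_a^{(0)})^p\neq(\eta_b^{(0)})^p$ — so for $\mathbf k\neq\mathbf h$ there is a $\lambda$ making the bracket nonzero, forcing $\langle\,\boldsymbol\eta_{\mathbf k}\,|\,\boldsymbol\eta_{\mathbf h}\,\rangle=0$, which is \rf{1M_jj}.

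The computational core is a recursion for the diagonal pairings. Fix $r\in\{1,\dots,\mathsf N-1\}$ and consider $F(\lambda)\equiv\langle\,\boldsymbol\eta_{\mathbf h}\,|\,\mathsf A(\lambda)\,|\,\boldsymbol\eta_{\mathbf h-\delta_r}\,\rangle$. Expanding the left action \rf{SAdef} and projecting with \rf{1M_jj}, all terms drop except the single interpolation term $a=r$: the $b_{\boldsymbol\eta}$-proportional leading pieces carry the shift $q^{\pm\delta_{\mathsf N}}$ and are annihilated by orthogonality (they move the $\mathsf N$-th index, not the $r$-th), while only $\langle\,q^{-\delta_r}\boldsymbol\eta_{\mathbf h}\,|=\langle\,\boldsymbol\eta_{\mathbf h-\delta_r}\,|$ survives among the rest. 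Hence $F(\lambda)=L_r(\lambda)\,\mathtt{a}^{(SOV)}(\eta_r^{(h_r)})\,\langle\,\boldsymbol\eta_{\mathbf h-\delta_r}\,|\,\boldsymbol\eta_{\mathbf h-\delta_r}\,\rangle$, with $L_r$ the normalized Lagrange coefficient of that term built on the bra's nodes. I would then read $F$ at two special points: at $\lambda=\eta_r^{(h_r)}$ the left action collapses ($L_r=1$), while at $\lambda=\eta_r^{(h_r-1)}$ the right action \rf{SOV-A-R} collapses to $\mathsf A(\eta_r^{(h_r-1)})|\,\boldsymbol\eta_{\mathbf h-\delta_r}\,\rangle=|\,\boldsymbol\eta_{\mathbf h}\,\rangle\,\mathtt{\bar{a}}^{(SOV)}(\eta_r^{(h_r-1)})$. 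Equating the two evaluations gives the first-order recursion
\[
\langle\,\boldsymbol\eta_{\mathbf h}\,|\,\boldsymbol\eta_{\mathbf h}\,\rangle\,\mathtt{\bar{a}}^{(SOV)}(\eta_r^{(h_r-1)})=L_r(\eta_r^{(h_r-1)})\,\mathtt{a}^{(SOV)}(\eta_r^{(h_r)})\,\langle\,\boldsymbol\eta_{\mathbf h-\delta_r}\,|\,\boldsymbol\eta_{\mathbf h-\delta_r}\,\rangle ,
\]
in which $L_r(\eta_r^{(h_r-1)})$ produces precisely the ratio of Vandermonde factors $\prod_{b\neq r}\big((\eta_r^{(h_r)})^2-(\eta_b^{(h_b)})^2\big)/\big((\eta_r^{(h_r-1)})^2-(\eta_b^{(h_b)})^2\big)$ (times a power of $q$), so that the step $\mu_{\mathbf h}/\mu_{\mathbf h-\delta_r}$ combines this Vandermonde ratio with $\mathtt{\bar{a}}^{(SOV)}(\eta_r^{(h_r-1)})/\mathtt{a}^{(SOV)}(\eta_r^{(h_r)})$.

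These $\mathsf N-1$ recursions decouple in $h_1,\dots,h_{\mathsf N-1}$, so telescoping each from a reference value generates the denominators $\omega_a(\eta_a^{(h_a)})$ and the global Vandermonde numerator of \rf{2M_jj}, up to the $\mathbf h$-independent constant $C_{\mathsf N}$. Two points need care. First, the $\mathsf A$-recursions never move $h_{\mathsf N}$, so to see that $\mu_{\mathbf h}$ is independent of $h_{\mathsf N}$ I would use the grading operator $\Theta$ of Lemma~\ref{NJG-BS-ChP-YB-G-T}, which sends $\mathsf B$-eigenstates to $\mathsf B$-eigenstates with eigenvalue rescaled by $q$ (hence shifts $h_{\mathsf N}$) and, being unitary, relates the charge sectors with a pairing fixed by the chosen normalization. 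Second — and this is the main obstacle — the recursion must close consistently around the cyclic range $h_r\in\mathbb Z_p$: running it once over a full period multiplies $\omega_r$ by $q^{(\mathsf N-1)p}=1$ and by $\prod_{l=1}^{p}\mathtt{a}^{(SOV)}(\eta_r^{(l)})\big/\prod_{l=1}^{p}\mathtt{\bar{a}}^{(SOV)}(\eta_r^{(l-1)})$, which equals $\mathcal A(Z_r)/\mathcal A(Z_r)=1$ by the average conditions \rf{ADaver} shared by the left and right SOV data (together with \rf{addet} and \rf{o-addet}). This is exactly where the root-of-unity structure $q^p=1$ is indispensable: without it the closed form would fail to be single-valued on $\mathbb Z_p^{\mathsf N}$. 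Finally, substituting \rf{2M_jj} into $\mathbb I=\sum_{\mathbf k}\mu_{\mathbf k}\,|\,\boldsymbol\eta_{\mathbf k}\,\rangle\langle\,\boldsymbol\eta_{\mathbf k}\,|$ — a genuine resolution of the identity because the simple spectrum of $\mathsf B$ makes $\{|\,\boldsymbol\eta_{\mathbf k}\,\rangle\}$ a basis of the $p^{\mathsf N}$-dimensional space — yields \rf{Decomp-Id}, and $C_{\mathsf N}$ is set to $1$ by the residual constant gauge that leaves $\mathtt{a}^{(SOV)}$ and $\mathtt{\bar{a}}^{(SOV)}$ untouched.
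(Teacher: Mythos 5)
Your proposal is correct and follows essentially the same route as the paper: orthogonality \rf{1M_jj} from sandwiching $\mathsf{B}(\lambda)$ and invoking the simplicity of its spectrum, and then the norm recursion obtained by evaluating $\langle\,\boldsymbol\eta_{\mathbf h}\,|\mathsf{A}(\lambda)|\,\boldsymbol\eta_{\mathbf h-\delta_r}\,\rangle$ two ways --- left action collapsing to $\mathtt{a}^{(SOV)}(\eta_r^{(h_r)})$ times a Lagrange coefficient, right action at $\lambda=\eta_r^{(h_r-1)}$ collapsing to $\mathtt{\bar{a}}^{(SOV)}(\eta_r^{(h_r-1)})$ --- which is precisely the paper's relation \rf{F1}, afterwards telescoped to \rf{2M_jj}. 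Your two supplementary checks, the $h_{\mathsf N}$-independence of the norm via $\Theta$ and the cyclic ($q^{p}=1$, average-condition) consistency of $\omega_a$ on $\mathbb{Z}_p$, are careful patches of details the paper leaves implicit rather than a different method.
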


\begin{proof}
Computing $\langle {\boldsymbol\eta _{\mathbf{k}}}|\mathsf{B}(\lambda )|{\boldsymbol\eta _{\mathbf{%
h}}}\rangle $, we get:%
\begin{equation}
(b_{\boldsymbol{\eta _{\mathbf{k}}}}(\lambda )-b_{\boldsymbol{\eta _{\mathbf{h}}}}(\lambda
))\langle {\boldsymbol\eta _{\mathbf{k}}}|{\boldsymbol\eta _{\mathbf{h}}}\rangle =0\text{ \ \ }%
\forall \lambda \in \mathbb{C},\text{ }\forall\, {\mathbf{k}},{\mathbf{h}}\in 
\mathbb{Z}_{p}^{\mathsf{N}}
\end{equation}%
and then the simplicity of the spectrum of $\mathsf{B}(\lambda )$ implies $%
\left( \ref{NJG-BS-ChP-1M_jj}\right) $. To compute $\mu {_{\mathbf{h}}}$, we compute
the following matrix elements $\theta _{a}\equiv \langle {\eta }%
_{1}^{(h_{1})},...,{\eta }_{a}^{(h_{a}-1)},...,{\eta }_{\mathsf{N}}^{(h_{%
\mathsf{N}})}|\mathsf{A}(\eta _{a}^{(h_{a}-1)})|{\eta }_{1}^{(h_{1})},...,{%
\eta }_{a}^{(h_{a})},...,{\eta }_{\mathsf{N}}^{(h_{\mathsf{N}})}\rangle $,
by using first the left action of $\mathsf{A}(\eta _{a}^{(h_{a}-1)})$, then
the right action of $\mathsf{A}(\eta _{a}^{(h_{a}-1)})$ together with $%
\left( \ref{NJG-BS-ChP-1M_jj}\right) $ and finally equating the two results we get:%
\begin{align}
\frac{\langle {\eta }_{1}^{(h_{1})},...,{\eta }_{a}^{(h_{a})},...,{\eta }_{%
\mathsf{N}}^{(h_{\mathsf{N}})}|{\eta }_{1}^{(h_{1})},...,{\eta }%
_{a}^{(h_{a})},...,{\eta }_{\mathsf{N}}^{(h_{\mathsf{N}})}\rangle }{\langle {%
\eta }_{1}^{(h_{1})},...,{\eta }_{a}^{(h_{a}-1)},...,{\eta }_{\mathsf{N}%
}^{(h_{\mathsf{N}})}|{\eta }_{1}^{(h_{1})},...,{\eta }_{a}^{(h_{a}-1)},...,{%
\eta }_{\mathsf{N}}^{(h_{\mathsf{N}})}\rangle }& =\delta _{a,\mathsf{N}%
}+(1-\delta _{a,\mathsf{N}})\frac{\mathtt{a}^{(SOV)}({\eta }_{a}^{(h_{a})})}{%
\mathtt{\bar{a}}^{(SOV)}({\eta }_{a}^{(h_{a}-1)})}  \notag \\
& \times \prod_{b\neq a,b=1}^{\mathsf{N}-1}\frac{({\eta }_{a}^{(h_{a}-1)}/{%
\eta }_{b}^{(h_{b})}-{\eta }_{b}^{(h_{b})}/{\eta }_{a}^{(h_{a}-1)})}{({\eta }%
_{a}^{(h_{a})}/{\eta }_{b}^{(h_{b})}-{\eta }_{b}^{(h_{b})}/{\eta }%
_{a}^{(h_{a})})},  \label{NJG-BS-ChP-F1}
\end{align}%
from which $\left( \ref{NJG-BS-ChP-2M_jj}\right) $ simply follows.
\end{proof}

\subsection{SOV-characterization of the spectrum}

Let us denote with $\Sigma _{\tau _{2}}$ the set of eigenvalue functions 
$t(\lambda )$ of the transfer matrix $\tau _{2}(\lambda )$. We have then: 
\begin{equation}
\Sigma _{\tau _{2}}\subset \mathbb{C}_{even}[\lambda ,\lambda ^{-1}]_{%
\mathsf{N}}\text{ for }\mathsf{N}\text{ even, \ \ \ }\Sigma _{\tau
_{2}}\subset \mathbb{C}_{odd}[\lambda ,\lambda ^{-1}]_{\mathsf{N}}\text{ for 
}\mathsf{N}\text{ odd},  \label{NJG-BS-ChP-set-t}
\end{equation}%
where $\mathbb{C}_{\epsilon }[x,x^{-1}]_{\mathsf{M}}$ denotes the linear
space in the field $\mathbb{C}$ of the Laurent polynomials of degree $%
\mathsf{M}$ in the variable $x$ which are even or odd as stated in the index 
$\epsilon $. The $\Theta $-charge naturally induces the grading $\Sigma
_{\tau _{2}}=\bigcup_{k=0}^{2l}\Sigma _{\tau _{2}}^{k}$, where: 
\begin{equation}
\Sigma _{\tau _{2}}^{k}\equiv \left\{ t(\lambda )\in \Sigma _{\tau
_{2}}:\lim_{\log \lambda \rightarrow \mp \infty }\lambda ^{\pm \mathsf{N}%
}t(\lambda )=\left( q^{\mp k}a_{\mp }+q^{\pm k}d_{\mp }\right) \right\} .
\end{equation}%
This simply follows from the commutativity of $\tau _{2}(\lambda )$ with $\Theta $ and from
its asymptotics. In particular, any $t_{k}(\lambda )\in
\Sigma _{\tau _{2}}^{k}$ is a $\tau _{2}$-eigenvalue corresponding to
simultaneous eigenstates of ${\tau _{2}}(\lambda )$ and $\Theta $ with $%
\Theta $-eigenvalue $q^{k}$.

\subsubsection{Eigenvalues and wave-funtions}
In the SOV representations, the spectral problem for ${\tau _{2}}(\lambda)$
is reduced to the following discrete system of Baxter-like equations in the
wave-function $\Psi _{t}(\boldsymbol\eta )\equiv \langle \,\boldsymbol\eta \,|\,t\,\rangle $ of a $%
\tau _{2}$-eigenstate $|\,t\,\rangle $: 
\begin{equation}
t(\eta _{r})\Psi_t(\boldsymbol\eta)\,=\,{\tt a}^{(SOV)}(\eta _{r})\Psi_t( q^{-\delta_{r}} \boldsymbol\eta)+{\tt d}^{(SOV)}(\eta _{r})\Psi_t(q^{\delta_{r}} \boldsymbol\eta)\, \qquad \text{ \ }\forall r\in \{1,...,\mathsf{N}-1\},  \label{NJG-BS-ChP-SOVBax1}
\end{equation}%
plus the following equation in the variable $\eta _{\mathsf{N}}$: 
\begin{equation}
\Psi_{t}(q^{\delta_{\mathsf{N}}}\boldsymbol\eta)\,=\,q^{-k}\Psi_{t}(\boldsymbol\eta),  \text{ \ where \ } q^{\pm\delta_{r}} \boldsymbol\eta \equiv(\eta_1,\dots,q^{\pm 1}\eta_r,\dots,\eta_\mathsf{N}), 
\label{NJG-BS-ChP-SOVBax2}
\end{equation}%
for $t(\lambda )\in \Sigma _{{\tau _{2}}}^{k}\ $with $k\in \{0,...,2l\}$. Let us introduce the one parameter family $D(\lambda )$ of $p\times p$
matrix:%
\begin{equation}
D(\lambda )\equiv 
\begin{pmatrix}
t(\lambda ) & -\bar{\text{\textsc{d}}}(\lambda ) & 0 & \cdots  & 0 & -\bar{%
\text{\textsc{a}}}(\lambda ) \\ 
-\bar{\text{\textsc{a}}}(q\lambda ) & t(q\lambda ) & -\bar{\text{\textsc{d}}}%
(q\lambda ) & 0 & \cdots  & 0 \\ 
0 & {\quad }\ddots  &  &  &  & \vdots  \\ 
\vdots  &  & \cdots  &  &  & \vdots  \\ 
\vdots  &  &  & \cdots  &  & \vdots  \\ 
t\vdots  &  &  &  & \ddots {\qquad } & 0 \\ 
0 & \ldots  & 0 & -\bar{\text{\textsc{a}}}(q^{2l-1}\lambda ) & 
t(q^{2l-1}\lambda ) & -\bar{\text{\textsc{d}}}(q^{2l-1}\lambda ) \\ 
-\bar{\text{\textsc{d}}}(q^{2l}\lambda ) & 0 & \ldots  & 0 & -\bar{\text{%
\textsc{a}}}(q^{2l}\lambda ) & t(q^{2l}\lambda )%
\end{pmatrix}
\label{NJG-BS-ChP-D-matrix}
\end{equation}%
then when we make the following choice of gauge for the left
SOV-representation: 
\begin{equation}
\mathtt{a}^{(SOV)}(\lambda )\equiv \bar{\text{\textsc{a}}}(\lambda ),\text{
\ \ \ \ \ \ \ \ \ \ \ \ }\mathtt{d}^{(SOV)}(\lambda )\equiv \bar{\text{%
\textsc{d}}}(\lambda ),  \label{NJG-BS-ChP-L-gauge}
\end{equation}%
it holds:

\begin{theorem}
\label{NJG-BS-ChP-C:T-eigenstates} \textbf{(Theorems 2, 3 and 4 of \cite{NJG-BS-ChP-GN12})} For
almost all the values of the parameters of a Bazhanov-Stroganov representation, the spectrum of $\tau _{2}(\lambda )$ is simple. Moreover:

\begin{itemize}
\item[\textsf{I)}] $\Sigma _{{\tau _{2}}}$ coincides with the set of
functions in (\ref{NJG-BS-ChP-set-t}) which are solutions of the functional equation:%
\begin{equation}
\det_{p}\text{$D$}(\Lambda )=0,\text{ \ \ }\forall \Lambda \in \mathbb{C}.
\label{NJG-BS-ChP-I-Functional-eq}
\end{equation}%
Then, up to an overall normalization, we can fix the $\tau _{2}$-eigenstate
corresponding to $t_{k}(\lambda )\in \Sigma _{{\tau _{2}}}^{k}$ by: 
\begin{equation}
\Psi _{t_{k}}(\boldsymbol\eta )\equiv \langle \,\eta _{1},...,\eta _{\mathsf{N}%
}\,|\,t_{k}\,\rangle =\eta _{\mathsf{N}}^{-k}\prod_{r=1}^{\mathsf{N}%
-1}Q_{t_{k}}(\eta _{r}),  \label{NJG-BS-ChP-Qeigenstate-even}
\end{equation}%
where $Q_{t_{k}}(\lambda )$ is the only solution (up to quasi-constants)
corresponding to $t_{k}(\lambda )$ of the Baxter equation:%
\begin{equation}
t_{k}(\lambda )Q_{t_{k}}(\lambda )=\bar{\text{\textsc{a}}}(\lambda
)Q_{t_{k}}(\lambda /q)+\bar{\text{\textsc{d}}}(\lambda )Q_{t_{k}}(q\lambda ).
\label{NJG-BS-ChP-Baxter-eq-eigenvalues}
\end{equation}

\item[\textsf{II)}] In the self-adjoint representations of the Bazhanov-Stroganov 
model under the further constrains: 
\begin{equation}
\prod_{h=1}^{N}\frac{\alpha _{h}^{\ast }}{\alpha _{h}}=1,\quad \quad \text{
\ \ }\frac{\mathbbm{b}_{n}}{\mathbbm{b}_{n}^{\ast }}=\frac{\mathbbm{a}_{n}}{%
\mathbbm{a}_{n}^{\ast }},\quad \quad \text{ \ \ }\frac{\alpha _{n+1}^{\ast
}\alpha _{n}^{\ast }}{\alpha _{n+1}\alpha _{n}}=\frac{\mathbbm{b}%
_{n+1}^{\ast }\mathbbm{b}_{n}}{\mathbbm{b}_{n+1}\mathbbm{b}_{n}^{\ast }}%
,\quad \text{ \ \ \ \ \ }\forall n\in \{1,...,\mathsf{N}\},  \label{NJG-BS-ChP-S-Adj-c}
\end{equation}the functions $\bar{\text{\textsc{a}}}(\lambda )$ and $\bar{\text{\textsc{d}}%
}(\lambda )$ are gauge equivalent to the Laurent polynomials:%
\begin{equation}
\mathtt{a}(\lambda )\equiv i^{\mathsf{N}}\prod_{n=1}^{\mathsf{N}}\frac{\beta
_{n}}{\lambda }(1-i^{(1+\epsilon )/2}q^{-1/2}\frac{|\alpha _{n}|}{|%
\mathbbm{a}_{n}|}\lambda )(1-i^{(1+\epsilon )/2}q^{-1/2}\frac{|\alpha _{n}|}{%
|\mathbbm{b}_{n}|}\lambda ),\qquad \mathtt{d}(\lambda )\equiv q^{\mathsf{N}}%
\mathtt{a}(-\lambda q),  \label{NJG-BS-ChP-S-adj-coeff}
\end{equation}%
respectively, and for any $t_{k}(\lambda )\in \Sigma _{{\tau _{2}}}^{k}$, we
can construct uniquely up to quasi-constants a $\epsilon $-real polynomial%
\footnote{%
i.e. it satisfies the following complex-conjugation conditions: $\left(
Q_{t}(\lambda )\right) ^{\ast }\equiv Q_{t}(\epsilon \lambda ^{\ast })\text{
\ }\forall \lambda \in \mathbb{C}$.}$^{,}$\footnote{%
Note that $Q_{t}(\lambda )$ has been constructed in terms of the cofactors
of the matrix $D$$(\Lambda )$ in Theorem 3 of \cite{NJG-BS-ChP-GN12}.}:%
\begin{equation}
Q_{t_{k}}(\lambda )=\lambda ^{a_{t_{k}}}\prod_{h=1}^{2l\mathsf{N}%
-(b_{t_{k}}+a_{t_{k}})}(\lambda _{h}-\lambda ),\,\,\,\,\,\,\,\,0\leq
a_{t_{k}}\leq 2l,\,\,0\leq b_{t_{k}}+a_{t_{k}}\leq 2l\mathsf{N},
\label{NJG-BS-ChP-Q_t-definition}
\end{equation}%
which is a solution of the Baxter functional equation (\ref{NJG-BS-ChP-Baxter-eq-eigenvalues}) in the gauge (\ref{NJG-BS-ChP-S-adj-coeff}) and: 
\begin{equation}
a_{t_{k}}=\pm k\,\,\mathsf{mod}\,p,\,\,\,\,\,\,\,\,\,b_{t_{k}}=\pm k\,\,%
\mathsf{mod}\,p\text{.}  \label{NJG-BS-ChP-asymptitics-Q}
\end{equation}
\end{itemize}
\end{theorem}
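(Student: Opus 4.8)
The plan is to use that, in the basis $\langle\,\boldsymbol\eta\,|$ diagonalizing the commutative family $\mathsf{B}(\lambda)$, the spectral problem for $\tau_2(\lambda)=\mathsf{A}(\lambda)+\mathsf{D}(\lambda)$ separates. First I would evaluate the eigenvalue equation on $\langle\,\boldsymbol\eta\,|\,t\,\rangle$ at the special points $\lambda=\eta_r$, $r=1,\dots,\mathsf{N}-1$. In the left actions \rf{SAdef} and \rf{SDdef} the asymptotic (boundary) terms carry the overall factor $b_{\boldsymbol\eta}(\lambda)=\prod_{a=1}^{\mathsf{N}-1}(\lambda/\eta_a-\eta_a/\lambda)$, which vanishes at $\lambda=\eta_r$, while in the remaining sum the interpolation coefficients $\prod_{b\neq a}(\lambda/\eta_b-\eta_b/\lambda)/(\eta_a/\eta_b-\eta_b/\eta_a)$ reduce to $\delta_{a,r}$ at $\lambda=\eta_r$, so that only the $a=r$ term survives. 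This collapses $\langle\,\boldsymbol\eta\,|\,\tau_2(\eta_r)$ to $\mathtt{a}^{(SOV)}(\eta_r)\langle\,q^{-\delta_r}\boldsymbol\eta\,|+\mathtt{d}^{(SOV)}(\eta_r)\langle\,q^{\delta_r}\boldsymbol\eta\,|$ and yields the separated relations \rf{SOVBax1}; the residual dependence on $\eta_{\mathsf{N}}$ is fixed by simultaneous diagonalization with $\Theta$, i.e. by \rf{SOVBax2}. Since each relation \rf{SOVBax1} shifts only the variable $\eta_r$ with coefficients depending only on $\eta_r$, the wave-function must factorize as $\eta_{\mathsf{N}}^{-k}\prod_{r=1}^{\mathsf{N}-1}Q_{t}(\eta_r)$, which is the form \rf{Qeigenstate-even}.

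Next I would read \rf{SOVBax1}, in the gauge \rf{L-gauge} where $\mathtt{a}^{(SOV)}=\bar{\text{\textsc{a}}}$ and $\mathtt{d}^{(SOV)}=\bar{\text{\textsc{d}}}$, as a cyclic three-term recursion for $Q_t$ along each $q$-orbit $\eta_r^{(k_r)}=q^{k_r}\eta_r^{(0)}$, i.e. as the Baxter equation \rf{Baxter-eq-eigenvalues} sampled on $\mathbb{Z}_p$. Because $p=2l+1$, this recursion is governed by the $p\times p$ periodic-tridiagonal matrix $D(\lambda)$ of \rf{D-matrix}, whose invariance under $\lambda\to q\lambda$ (a cyclic permutation) makes its determinant a function of $\Lambda=\lambda^p$; a nonzero solution exists if and only if $\det_p D(\Lambda)=0$, which is exactly the functional equation \rf{I-Functional-eq} and gives the inclusion of $\Sigma_{\tau_2}$ in its solution set. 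For the converse I would construct the kernel of $D(\Lambda)$ explicitly: since $D\cdot\mathrm{adj}(D)=0$, the one-dimensional null space of a vanishing-determinant periodic-tridiagonal matrix is spanned by the vector of cofactors along any fixed row, and these cofactors provide the values $Q_{t_k}(\eta_r^{(k_r)})$, hence a genuine eigenstate through \rf{Qeigenstate-even}.

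The simplicity assertion I would establish by a counting argument refined by genericity. Distributing the independent solutions of \rf{I-Functional-eq} over the $p=2l+1$ charge sectors of $\Theta$ must reproduce $\dim\mathcal{R}_{\mathsf{N}}=p^{\mathsf{N}}$; for almost all values of the representation parameters these solutions are pairwise distinct and the associated cofactor vectors are nonzero, so the constructed states are nonzero, linearly independent and complete, forcing every $\tau_2$-eigenspace to be one-dimensional. I expect this completeness-and-distinctness step, rather than the mere consistency encoded by the determinant, to be the main obstacle: it is precisely where the qualifier ``for almost all the values of the parameters'' has to be made quantitative, and where one must exclude the degenerate loci on which the cofactor vector could vanish or on which two sectors could collide.

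For Part II the plan is first to exploit the gauge freedom \rf{gauge}: under the extra constraints \rf{S-Adj-c} one selects the gauge function $f$ so that the non-polynomial prefactor $\alpha(\lambda)$ is eliminated and $\bar{\text{\textsc{a}}},\bar{\text{\textsc{d}}}$ become the Laurent polynomials $\mathtt{a},\mathtt{d}$ of \rf{S-adj-coeff}, the identity $\mathtt{d}(\lambda)=q^{\mathsf{N}}\mathtt{a}(-q\lambda)$ being then a direct check from the explicit products and \rf{S-Adj-c}. With polynomial coefficients the Baxter equation \rf{Baxter-eq-eigenvalues} admits, for each $t_k$, a polynomial solution whose total degree $2l\mathsf{N}$ and whose exponent $a_{t_k}$ at the origin are fixed by matching the $\lambda\to\infty$ and $\lambda\to 0$ asymptotics of the three terms, giving \rf{Q_t-definition}. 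The congruences \rf{asymptitics-Q} then follow by equating the $\Theta$-eigenvalue $q^{k}$, which controls the leading asymptotics of $t_k(\lambda)$, with the exponents of $Q_{t_k}$ imposed by the Baxter relation, so that $a_{t_k}\equiv b_{t_k}\equiv\pm k\pmod p$. Finally, the $\epsilon$-reality $(Q_t(\lambda))^{\ast}=Q_t(\epsilon\lambda^{\ast})$ comes from the conjugation relations on $\mathtt{a},\mathtt{d}$ inherited from Proposition 1: they imply that $\overline{Q_t(\epsilon\bar\lambda)}$ again solves \rf{Baxter-eq-eigenvalues}, and uniqueness up to quasi-constants then forces proportionality, the constant being absorbed into the normalization.
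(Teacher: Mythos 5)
First, a point of context: the paper you are working from does not prove this theorem at all — it is recalled verbatim from Theorems 2, 3 and 4 of \cite{NJG-BS-ChP-GN12}, and the present text only supplies the surrounding setup: the separated equations \rf{SOVBax1}--\rf{SOVBax2}, the matrix $D(\lambda)$ of \rf{D-matrix}, the gauge \rf{L-gauge}, and the footnote stating that $Q_t$ is built from the cofactors of $D(\Lambda)$. Your reconstruction follows exactly this strategy (separation of the eigenvalue equation at $\lambda=\eta_r$, cyclic three-term recursion along each $q$-orbit, cofactor construction of the kernel, gauge reduction and asymptotics matching for Part II), so in outline it is the route of the cited work.

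There is, however, a genuine gap at the central step of Part I. An eigenstate of $\tau_2(\lambda)$ only forces the cyclic system attached to the orbit of $\eta_r^{(0)}$ to be solvable, i.e.\ it only yields $\det_p D(\Lambda)=0$ at the $\mathsf{N}-1$ points $\Lambda=Z_r=(\eta_r^{(0)})^p$. Your sentence ``a nonzero solution exists if and only if $\det_p D(\Lambda)=0$, which \dots gives the inclusion of $\Sigma_{\tau_2}$ in its solution set'' silently upgrades these finitely many zeros to the identity $\det_p D(\Lambda)=0$ for \emph{all} $\Lambda\in\mathbb{C}$, which is what \rf{I-Functional-eq} demands; equivalently, it assumes that the eigenvalue admits a solution of the Baxter equation \rf{Baxter-eq-eigenvalues} on \emph{every} $q$-orbit, not just on the $\mathsf{N}-1$ orbits where the wave-function lives. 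This upgrade is precisely the nontrivial content of the cited theorems: the paper itself warns (footnote in Section 1.2) that ``for cyclic representations the SOV does not lead directly to the spectrum characterization by functional equations''. Closing it requires showing that, for $t$ in the class \rf{set-t} with the asymptotics \rf{asymptotics-t} dictated by the $\Theta$-charge and with the average relations \rf{relation-averages} controlling the cyclic terms of the determinant, $\det_p D(\Lambda)$ is a Laurent polynomial in $\Lambda$ of low enough degree that its vanishing at the $\mathsf{N}-1$ distinct inhomogeneity points $Z_r$ forces identical vanishing. Without this degree/asymptotics analysis, the equivalence between the discrete SOV conditions and the functional equation — and hence the completeness statement of Part I and the extension of $Q_{t_k}$ to a function of $\lambda$ used in Part II — is unproven. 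The remaining steps (cofactor kernel for the converse, genericity for simplicity, gauge and reality arguments for Part II) are consistent with \cite{NJG-BS-ChP-GN12}, though, as you note yourself, the degenerate loci where the cofactor vector vanishes or sectors collide must be excluded quantitatively.
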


\subsubsection{Eigenvectors and eigencovectors\label{NJG-BS-ChP-LReigenstates}}

The SOV-decomposition of the identity $\left( \ref{NJG-BS-ChP-Decomp-Id}\right) $ and
the results of the previous subsections imply that the state:%
\begin{equation}
|t_{k}\rangle =\sum_{h_{1},...,h_{\mathsf{N}}=1}^{p}\frac{q^{kh_{\mathsf{N}}}%
}{p^{1/2}}\prod_{a=1}^{\mathsf{N}-1}Q_{t_{k}}({\eta }_{a}^{(h_{a})})\prod_{1%
\leq a<b\leq \mathsf{N}-1}(({\eta }_{a}^{(h_{a})})^{2}-({\eta }%
_{b}^{(h_{b})})^{2})\frac{|{\eta }_{1}^{(h_{1})},...,{\eta }_{\mathsf{N}%
}^{(h_{\mathsf{N}})}\rangle }{\prod_{b=1}^{\mathsf{N}-1}\omega _{b}({\eta }%
_{b}^{(h_{b})})},
\end{equation}%
is, up to an overall normalization, the only right $\tau _{2}$-eigenstate
associated to $t_{k}(\lambda )\in \Sigma _{\mathsf{T}}^{k}$. Here, $%
Q_{t_{k}}(\lambda )$ is the only solution (up to quasi-constants) of the
Baxter equation:%
\begin{equation}
t_{k}(\lambda )Q_{t_{k}}(\lambda )=\bar{\text{\textsc{a}}}(\lambda
)Q_{t_{k}}(\lambda q^{-1})+\bar{\text{\textsc{d}}}(\lambda
)Q_{t_{k}}(\lambda q),  \label{NJG-BS-ChP-EQ-Baxter-R}
\end{equation}%
as defined in Theorem \ref{NJG-BS-ChP-C:T-eigenstates}. Similarly, we can prove that
the state:%
\begin{equation}
\langle t_{k}|=\sum_{h_{1},...,h_{\mathsf{N}}=1}^{p}\frac{q^{kh_{\mathsf{N}}}%
}{p^{1/2}}\prod_{a=1}^{\mathsf{N}-1}\bar{Q}_{t_{k}}({\eta }%
_{a}^{(h_{a})})\prod_{1\leq a<b\leq \mathsf{N}-1}(({\eta }%
_{a}^{(h_{a})})^{2}-({\eta }_{b}^{(h_{b})})^{2})\frac{\langle {\eta }%
_{1}^{(h_{1})},...,{\eta }_{\mathsf{N}}^{(h_{\mathsf{N}})}|}{\prod_{b=1}^{%
\mathsf{N}-1}\omega _{b}({\eta }_{b}^{(h_{b})})},
\end{equation}%
is, up to an overall normalization, the only left $\tau _{2}$-eigenstate
associated to $t_{k}(\lambda )\in \Sigma _{\mathsf{T}}^{k}$. Here, $\bar{Q}%
_{t_{k}}(\lambda )$ is the only solution (up to quasi-constants) of the
Baxter equation:%
\begin{equation}
t_{k}(\lambda )\bar{Q}_{t_{k}}(\lambda )=\bar{\text{\textsc{d}}}(\lambda /q)%
\bar{Q}_{t_{k}}(\lambda /q)+\bar{\text{\textsc{a}}}(\lambda q)\bar{Q}%
_{t_{k}}(\lambda q),  \label{NJG-BS-ChP-EQ-Baxter-L}
\end{equation}%
when we make the following choice of gauge for the right SOV-representation:%
\begin{equation}
\mathtt{\bar{a}}^{(SOV)}(\lambda )\equiv \bar{\text{\textsc{a}}}(\lambda q),%
\text{ \ \ \ \ \ \ \ \ \ \ \ \ }\mathtt{\bar{d}}^{(SOV)}(\lambda )\equiv 
\bar{\text{\textsc{d}}}(\lambda /q).  \label{NJG-BS-ChP-R-gauge}
\end{equation}

\section{The inhomogeneous chiral Potts model}

\subsection{Definitions and first properties\label{NJG-BS-ChP-chP-def}}

The connections between the integrable chiral Potts model and the Bazhanov-Stroganov model restricted to parametrization by points on the algebraic curves $%
\mathcal{C}_{k}$ were first remarked in \cite{NJG-BS-ChP-BS}. We can summarize them as follows: \\
\textsf{I)} the fundamental R-matrix intertwining the Bazhanov-Stroganov Lax operator in the quantum space is given by the product of four chiral
Potts Boltzmann weights; \\
\textsf{II)} the transfer matrix of the chiral
Potts model is a Baxter $\mathsf{Q}$-operator for the Bazhanov-Stroganov model.\\
\noindent
Let us recall here how the spectrum of the inhomogeneous chiral Potts
transfer matrix is characterized by SOV construction thanks to the property 
\textsf{(II)}. The algebraic curve $\mathcal{C}_{k}$ of modulus $k$ is by
definition the locus of the points p $\equiv (a_{\text{p}},b_{\text{p}},c_{%
\text{p}},d_{\text{p}})\in \mathbb{C}^{4}$ which satisfy the equations: 
\begin{equation}
x_{\text{p}}^{p}+y_{\text{p}}^{p}=k(1+x_{\text{p}}^{p}y_{\text{p}}^{p}),%
\text{ \ \ }kx_{\text{p}}^{p}=1-k^{^{\prime }}s_{\text{p}}^{-p},\text{ \ \ }%
ky_{\text{p}}^{p}=1-k^{^{\prime }}s_{\text{p}}^{p}, \label{NJG-BS-ChP-curve-eq}
\end{equation}%
where:%
\begin{equation}
x_{\text{p}}\equiv a_{\text{p}}/d_{\text{p}},\text{ \ }y_{\text{p}}\equiv b_{%
\text{p}}/c_{\text{p}},\text{ \ }s_{\text{p}}\equiv d_{\text{p}}/c_{\text{p}%
},\,t_{\text{p}}\equiv x_{\text{p}}y_{\text{p}},\text{ \ }k^{2}+(k^{^{\prime
}})^{2}=1.
\end{equation}%
Let us introduce the following cyclic dilogarithm functions\footnote{%
They are the Boltzmann weights of the chiral Potts model \cite{NJG-BS-ChP-BaPauY}, see
also \cite{NJG-BS-ChP-BT06,NJG-BS-ChP-FK2}-\cite{NJG-BS-ChP-V2} for the study of the properties of dilogarithm
functions.}; here we use the notation: 
\begin{equation}
\frac{W_{\text{qp}}(z(n))}{W_{\text{qp}}(z(0))}=(\frac{s_{\text{q}}}{s_{%
\text{p}}})^{n}\prod_{k=1}^{n}\frac{y_{\text{p}}-q^{-2k}x_{\text{q}}}{y_{%
\text{q}}-q^{-2k}x_{\text{p}}},\text{ \ \ \ \ }\frac{\bar{W}_{\text{qp}%
}(z(n))}{\bar{W}_{\text{qp}}(z(0))}=(s_{\text{p}}s_{\text{q}%
})^{n}\prod_{k=1}^{n}\frac{q^{-2}x_{\text{q}}-q^{-2k}x_{\text{p}}}{y_{\text{p%
}}-q^{-2k}y_{\text{q}}},  \label{NJG-BS-ChP-W-function}
\end{equation}%
where $z(n)=q^{-2n}\text{,\ }n\in \{0,...,2l\}$. They are solutions of the
following recursion relations:%
\begin{equation}
\frac{W_{\text{qp}}(zq)}{W_{\text{qp}}(zq^{-1})}=-z\frac{s_{\text{p}}}{s_{%
\text{q}}}\frac{x_{\text{p}}}{y_{\text{p}}}q^{-1}\frac{1-\frac{y_{\text{q}}}{%
x_{\text{p}}}qz^{-1}}{1-\frac{x_{\text{q}}}{y_{\text{p}}}q^{-1}z},\text{ \ \ 
}\frac{\bar{W}_{\text{qp}}(zq)}{\bar{W}_{\text{qp}}(zq^{-1})}=-\frac{qz^{-1}%
}{s_{\text{p}}s_{\text{q}}}\frac{y_{\text{p}}}{x_{\text{p}}}\frac{1-\frac{y_{%
\text{q}}}{y_{\text{p}}}q^{-1}z}{1-\frac{x_{\text{q}}}{x_{\text{p}}}%
q^{-1}z^{-1}}.  \label{NJG-BS-ChP-Recursion-w-1}
\end{equation}%
If the points p and q belong to the curves $\mathcal{C}_{k}$, they are well
defined functions of $z\in \mathbb{S}_{p}\equiv \{q^{2n};$ $n=0,...,2l\}$
which satisfy the cyclicity condition:%
\begin{equation}
\frac{\bar{W}_{\text{qp}}(z(p))}{\bar{W}_{\text{qp}}(z(0))}=1,\text{ \ \ \ \ 
}\frac{W_{\text{qp}}(z(p))}{W_{\text{qp}}(z(0))}=1.
\end{equation}%
Then, in the left and right $\mathsf{u}_{n}$-eigenbasis, the transfer matrix 
\textsf{T}$_{\lambda }^{{\small \text{chP}}}$ of the inhomogeneous chiral
Potts model\footnote{%
For a direct comparison see formula (4.12) of \cite{NJG-BS-ChP-Ba08} with the following
identifications:%
\begin{equation*}
z_{j}\equiv q^{2\sigma _{j}^{\prime }},\text{ \ }z_{j}^{\prime }\equiv
q^{2\sigma _{j}}\text{ \ }\forall j\in \{1,...,\mathsf{N}\}.
\end{equation*}%
Note that $\mathsf{T}_{\lambda }^{{\small \text{chP}}}$ is well defined
since the $W$-functions (\ref{NJG-BS-ChP-W-function})\ are cyclic functions of their
arguments.} \cite{NJG-BS-ChP-BS} is characterized by the following kernel:%
\begin{equation}
\mathsf{T}_{\lambda }^{{\small \text{chP}}}(\text{z},\text{z}^{\prime
})\equiv \langle \text{z}|\mathsf{T}_{\lambda }^{{\small \text{chP}}}|\text{z%
}^{\prime }\rangle =\prod_{n=1}^{\mathsf{N}}W_{\text{q}_{n}\text{p}%
}(z_{n}/z_{n}^{\prime })\bar{W}_{\text{r}_{n}\text{p}}(z_{n}/z_{n+1}^{\prime
}),  \label{NJG-BS-ChP-kernel}
\end{equation}%
where:%
\begin{equation}
\lambda =t_{\text{p}}^{-1/2}\mathsf{c}_{0},\text{ \ \ p, r}_{n}\text{, q}%
_{n}\in \mathcal{C}_{k},\mathsf{c}_{0}\in \mathbb{C}\text{.}
\end{equation}%
Let us denote with $\mathcal{R}_{\mathsf{N}}^{{\small \text{chP}}}$ the
sub-variety of the representations defined by the following
parametrization of the Bazhanov-Stroganov Lax operator in terms of points of the
curve:%
\begin{align}
\alpha _{n}& =-b_{\text{q}_{n}}^{2}/\mathsf{c}_{0},\text{ \ \ \ }\mathbbm{b}%
_{n}=-\mathbbm{d}_{n}/q=-a_{\text{q}_{n}}d_{\text{q}_{n}}/q^{3/2},
\label{NJG-BS-ChP-Par-on-chP-inho-1} \\
\beta _{n}& =-\mathsf{c}_{0}d_{\text{q}_{n}}^{2},\text{ \ \ \ \ }\mathbbm{c}%
_{n}=-\mathbbm{a}_{n}q=b_{\text{q}_{n}}c_{\text{q}_{n}}q^{1/2},
\label{NJG-BS-ChP-Par-on-chP-inho-SP}
\end{align}%
and $\text{q}_{n}\in \mathcal{C}_{k}$, $k\in \mathbb{C}$. \textsf{T}$%
_{\lambda }^{{\small \text{chP}}}$ is then a Baxter $\mathsf{Q}$-operator%
\footnote{%
It is worth pointing out that while the Baxter equation (\ref{NJG-BS-ChP-Bax-chP-T-II})
holds in the general inhomogeneous representations the commutativity
properties are proven only under the further restrictions q$_{n}\equiv $ r$%
_{n}$ \ \ $\forall n\{1,...,\mathsf{N}\}$ under which is characterized $%
\mathcal{R}_{\mathsf{N}}^{{\small \text{chP}}}$.} w.r.t. the transfer matrix of the Bazhanov-Stroganov model in $\mathcal{R}_{\mathsf{N}}^{{\small \text{chP}}}$: 
\begin{equation}
\tau _{2}(\lambda )\mathsf{T}_{\lambda }^{{\small \text{chP}}}=a_{\text{BS}%
}(\lambda )\mathsf{T}_{\lambda /q}^{{\small \text{chP}}}+d_{\text{BS}%
}(\lambda )\mathsf{T}_{q\lambda }^{{\small \text{chP}}},
\label{NJG-BS-ChP-Bax-chP-T-II}
\end{equation}%
\begin{equation}
\lbrack \tau _{2}(\lambda ),\mathsf{T}_{\lambda }^{{\small \text{chP}}}]=0,%
\text{ \ \ \ }[\Theta ,\mathsf{T}_{\lambda }^{{\small \text{chP}}}]=0,\text{
\ \ \ }[\mathsf{T}_{\lambda }^{{\small \text{chP}}},\mathsf{T}_{\mu }^{%
{\small \text{chP}}}]=0\text{ \ \ \ }\forall \lambda ,\mu \in \mathbb{C},
\label{NJG-BS-ChP-chP-TII.C.M.R.}
\end{equation}%
with $a_{\text{BS}}$ and $d_{\text{BS}}$ defined in (5.8) and (5.9) of \cite{NJG-BS-ChP-GN12}.

\subsection{SOV-spectrum characterization\label{NJG-BS-ChP-SOV-Char-chP}}

\begin{theorem}
\textbf{(Proposition 3, Theorem 5 and Lemma 13 of \cite{NJG-BS-ChP-GN12})} For almost
all the representations in $\mathcal{R}_{\mathsf{N}}^{{\small \text{chP}}}$
the spectrum of the chiral Potts transfer matrix \textsf{T}$_{\lambda }^{%
{\small \text{chP}}}$ is simple. Moreover:

\textsf{I)} All right and left eigenstates of the chiral Potts transfer
matrix \textsf{T}$_{\lambda }^{{\small \text{chP}}}$ are eigenstates of $%
\tau _{2}(\lambda )$ and they admit the SOV construction presented in point 
\textsf{I)} of Theorem \ref{NJG-BS-ChP-C:T-eigenstates}. The solution $Q_{t}(\lambda )$
of the functional Baxter equation $(\ref{NJG-BS-ChP-Baxter-eq-eigenvalues})$ is gauge
equivalent to the corresponding \textsf{T}$_{\lambda }^{{\small \text{chP}}}$%
-eigenvalue \textsf{q}$_{\lambda }^{{\small \text{chP}}}$ being the
coefficients $a_{\text{BS}}(\lambda )$ and $d_{\text{BS}}(\lambda )$ of $(%
\ref{NJG-BS-ChP-Bax-chP-T-II})$ gauge equivalent to the SOV-ones:%
\begin{equation}
a_{\text{BS}}(\lambda )=h_{\text{BS}}(\lambda )\bar{\textsc{a}}(\lambda )%
\text{ \ \ \ \ \ }d_{\text{BS}}(\lambda )=h_{\text{BS}}^{-1}(\lambda q)\bar{%
\textsc{d}}(\lambda ).
\end{equation}%
Here $h_{\text{BS}}(\lambda )$ is a function whose average value is 1 for
any $\lambda \in \mathbb{C}$.

\textsf{II)} In the sub-variety $\mathcal{R}_{\mathsf{N}}^{{\small \text{%
chP,S-adj}}}\equiv \mathcal{R}_{\mathsf{N}}^{{\small \text{chP}}}\cap 
\mathcal{R}_{\mathsf{N}}^{{\small \text{S-adj}}}$, characterized by (\ref{NJG-BS-ChP-Par-on-chP-inho-1})-(\ref{NJG-BS-ChP-Par-on-chP-inho-SP}) under the following
constrains:%
\begin{equation}
\text{q}_{n}=(a_{\text{q}_{n}},\epsilon q\epsilon _{0,n}a_{\text{q}%
_{n}}^{\ast },\epsilon _{0,n}d_{\text{q}_{n}}^{\ast },d_{\text{q}_{n}})\in 
\mathcal{C}_{k},\ \ \ \epsilon _{0,n}=\pm 1,\ \ \ k^{\ast }=\epsilon k,
\label{NJG-BS-ChP-q-S-adj}
\end{equation}%
the operator \textsf{T}$_{\lambda }^{{\small \text{chP}}}$ is normal and $%
\tau _{2}(\lambda )$ is self-adjoint. Then, point \textsf{I)} of Theorem \ref{NJG-BS-ChP-C:T-eigenstates} allows to construct the full simultaneous (\textsf{T}$%
_{\lambda }^{{\small \text{chP}}},\tau _{2}(\lambda ),\Theta $)-eigenbasis
associating to any $t(\lambda )\in \Sigma _{\tau _{2}}$ the corresponding
eigenstate.
\end{theorem}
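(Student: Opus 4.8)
The plan is to derive part \textsf{I)} from the functional relation $(\ref{NJG-BS-ChP-Bax-chP-T-II})$ together with the simplicity of the $\tau_2$-spectrum, and to obtain part \textsf{II)} by an operator-theoretic argument resting on normality. First I would observe that every eigenstate of $\mathsf{T}_{\lambda}^{\text{chP}}$ is automatically an eigenstate of $\tau_2(\lambda)$: since $[\tau_2(\lambda),\mathsf{T}_{\lambda}^{\text{chP}}]=0$ by $(\ref{NJG-BS-ChP-chP-TII.C.M.R.})$ and $\tau_2$ has simple spectrum by Theorem \ref{NJG-BS-ChP-C:T-eigenstates}, each one-dimensional $\tau_2$-eigenspace is preserved by $\mathsf{T}_{\lambda}^{\text{chP}}$, so the two families share the common SOV eigenbasis already constructed. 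In particular the SOV wave-functions of point \textsf{I)} of Theorem \ref{NJG-BS-ChP-C:T-eigenstates} describe the $\mathsf{T}^{\text{chP}}$-eigenstates as well.

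Next I would evaluate the operator relation $(\ref{NJG-BS-ChP-Bax-chP-T-II})$ on a common eigenstate $|t\rangle$. Writing $\mathsf{q}_{\lambda}^{\text{chP}}$ for the associated $\mathsf{T}^{\text{chP}}$-eigenvalue and $t(\lambda)$ for the $\tau_2$-eigenvalue, it collapses to the scalar Baxter equation
\[
t(\lambda)\,\mathsf{q}_{\lambda}^{\text{chP}}
= a_{\text{BS}}(\lambda)\,\mathsf{q}_{\lambda/q}^{\text{chP}}
+ d_{\text{BS}}(\lambda)\,\mathsf{q}_{q\lambda}^{\text{chP}},
\]
to be compared with $(\ref{NJG-BS-ChP-Baxter-eq-eigenvalues})$, whose unique (up to quasi-constants) solution is $Q_{t}(\lambda)$ with coefficients $\bar{\textsc{a}},\bar{\textsc{d}}$. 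I would then introduce $h_{\text{BS}}$ through $a_{\text{BS}}(\lambda)=h_{\text{BS}}(\lambda)\bar{\textsc{a}}(\lambda)$ and $d_{\text{BS}}(\lambda)=h_{\text{BS}}^{-1}(q\lambda)\bar{\textsc{d}}(\lambda)$, and verify that its average value is one. The crucial check is that both coefficient pairs obey the \emph{same} quantum-determinant constraint, since $a_{\text{BS}}(\lambda)d_{\text{BS}}(\lambda/q)=\bar{\textsc{a}}(\lambda)\bar{\textsc{d}}(\lambda/q)=\mathrm{det_q}\mathsf{M}(\lambda)$ by $(\ref{NJG-BS-ChP-relation-q-det})$, which forces the two gauge factors to be reciprocal after the $q$-shift, while the averaging conditions $(\ref{NJG-BS-ChP-ADaver})$ pin $\langle h_{\text{BS}}\rangle=1$. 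Under this gauge the rescaling $Q_{t}(\lambda)\to(\prod h_{\text{BS}})\,Q_{t}(\lambda)$ identifies $\mathsf{q}_{\lambda}^{\text{chP}}$ with the gauge-transformed $Q_{t}(\lambda)$, establishing the asserted gauge equivalence.

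For part \textsf{II)} I would first invoke Lemma 2 of \cite{NJG-BS-ChP-GN12}: under the self-adjointness conditions $(\ref{NJG-BS-ChP-Self-adjointness Condition})$, hence within $\mathcal{R}_{\mathsf{N}}^{\text{chP,S-adj}}$, the transfer matrix $\tau_2(\lambda)$ is self-adjoint for real $\lambda$, so it is normal with simple spectrum and an orthogonal eigenbasis. The remaining task is the normality of $\mathsf{T}_{\lambda}^{\text{chP}}$. I would compute the Hermitian conjugate of its kernel $(\ref{NJG-BS-ChP-kernel})$ using the conjugation properties of the $W$- and $\bar{W}$-weights that follow from the curve constraint $(\ref{NJG-BS-ChP-q-S-adj})$, namely $k^{\ast}=\epsilon k$ together with $\text{q}_{n}=(a_{\text{q}_{n}},\epsilon q\epsilon_{0,n}a_{\text{q}_{n}}^{\ast},\epsilon_{0,n}d_{\text{q}_{n}}^{\ast},d_{\text{q}_{n}})$; these identify $(\mathsf{T}_{\lambda}^{\text{chP}})^{\dagger}$ with a chiral Potts transfer matrix at a conjugated spectral parameter still lying on $\mathcal{C}_{k}$. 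Since all such transfer matrices commute by $(\ref{NJG-BS-ChP-chP-TII.C.M.R.})$, it follows that $(\mathsf{T}_{\lambda}^{\text{chP}})^{\dagger}$ commutes with $\mathsf{T}_{\lambda}^{\text{chP}}$, i.e. the latter is normal. Finally, $\tau_2(\lambda)$, $\mathsf{T}_{\lambda}^{\text{chP}}$ and the unitary $\Theta$ are mutually commuting normal operators, so the spectral theorem yields a common orthonormal eigenbasis; simplicity of the $\tau_2$-spectrum then makes the assignment of an eigenstate to each $t(\lambda)\in\Sigma_{\tau_2}$ unique, which is exactly the claimed simultaneous $(\mathsf{T}_{\lambda}^{\text{chP}},\tau_2(\lambda),\Theta)$-eigenbasis.

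I expect the main obstacle to be the normality of $\mathsf{T}_{\lambda}^{\text{chP}}$ in part \textsf{II)}: it is not a formal consequence of the Yang-Baxter algebra but requires the explicit conjugation identities for the cyclic dilogarithm weights under $(\ref{NJG-BS-ChP-q-S-adj})$ and a careful check that the conjugated weights reassemble into a genuine chiral Potts transfer matrix with parameters again on $\mathcal{C}_{k}$. The gauge bookkeeping in part \textsf{I)} — tracking quasi-constants and confirming $\langle h_{\text{BS}}\rangle=1$ from the shared quantum-determinant and averaging constraints — is the other delicate point, although it is largely forced once the scalar Baxter equation above is written down.
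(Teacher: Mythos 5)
You should first be aware that the paper you are working from contains no proof of this statement: it is recalled, with explicit attribution, from Proposition 3, Theorem 5 and Lemma 13 of \cite{NJG-BS-ChP-GN12}, so your argument has to stand entirely on its own. As it stands it has a genuine gap at the theorem's very first claim, the simplicity of the spectrum of $\mathsf{T}_{\lambda }^{{\small \text{chP}}}$, which you never establish, and this gap propagates into part \textsf{I)} because your commutativity argument runs in the wrong direction. Simplicity of the $\tau _{2}$-spectrum together with $[\tau _{2}(\lambda ),\mathsf{T}_{\lambda }^{{\small \text{chP}}}]=0$ shows that every $\tau _{2}$-eigenvector is a $\mathsf{T}_{\lambda }^{{\small \text{chP}}}$-eigenvector, hence that a common eigenbasis exists; but the theorem asserts that \emph{all} right and left eigenstates of $\mathsf{T}_{\lambda }^{{\small \text{chP}}}$ are $\tau _{2}$-eigenstates, and this is false in general when $\mathsf{T}_{\lambda }^{{\small \text{chP}}}$ has a degenerate eigenvalue: inside a degenerate eigenspace one can form $\mathsf{T}^{{\small \text{chP}}}$-eigenvectors that are not $\tau _{2}$-eigenvectors. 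What closes the circle is precisely the simplicity of the chiral Potts spectrum, which must be proven, for instance by showing that the eigenvalue function $\mathsf{q}_{\lambda }^{{\small \text{chP}}}$ determines $t(\lambda )$: from the scalar Baxter equation one recovers $t(\lambda )=[a_{\text{BS}}(\lambda )\mathsf{q}_{\lambda /q}^{{\small \text{chP}}}+d_{\text{BS}}(\lambda )\mathsf{q}_{q\lambda }^{{\small \text{chP}}}]/\mathsf{q}_{\lambda }^{{\small \text{chP}}}$ wherever $\mathsf{q}_{\lambda }^{{\small \text{chP}}}\neq 0$, so two distinct $\tau _{2}$-eigenvalues cannot share a $\mathsf{T}^{{\small \text{chP}}}$-eigenvalue. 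You need this step (or an equivalent one) and do not take it.

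Three further points. First, Theorem \ref{NJG-BS-ChP-C:T-eigenstates} is stated for almost all parameters of a general Bazhanov-Stroganov representation, whereas the present theorem concerns almost all points of the proper sub-variety $\mathcal{R}_{\mathsf{N}}^{{\small \text{chP}}}$ cut out by the curve parametrization (\ref{NJG-BS-ChP-Par-on-chP-inho-1})-(\ref{NJG-BS-ChP-Par-on-chP-inho-SP}); genericity in the ambient parameter space implies nothing about genericity on a sub-variety, so your implicit use of Theorem \ref{NJG-BS-ChP-C:T-eigenstates} inside $\mathcal{R}_{\mathsf{N}}^{{\small \text{chP}}}$ needs a separate justification (this is one reason why \cite{NJG-BS-ChP-GN12} devotes dedicated statements to the chiral Potts case). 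Second, the two identities on which your gauge argument rests, namely $a_{\text{BS}}(\lambda )d_{\text{BS}}(\lambda /q)=\mathrm{det_{q}}\mathsf{M}(\lambda )$ and $\prod_{k=1}^{p}h_{\text{BS}}(q^{k}\lambda )=1$, are not consequences of anything available in this paper: the first is a property of the explicit functions (5.8)-(5.9) of \cite{NJG-BS-ChP-GN12}, which are not reproduced here, and even granting it you only obtain that the two gauge factors are reciprocal after the $q$-shift; the quantum determinant constraint is equally compatible with $\prod_{k=1}^{p}h_{\text{BS}}(q^{k}\lambda )=\Omega _{-\epsilon }(\Lambda )/\Omega _{\epsilon }(\Lambda )$, i.e.\ with $a_{\text{BS}}$ averaging to the \emph{other} root of the average equation, so pinning the average of $h_{\text{BS}}$ to one requires actually computing the averages of $a_{\text{BS}}$ and $d_{\text{BS}}$ on the curve. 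In other words, this portion of your proof assumes essentially what the statement asserts. Third, your part \textsf{II)} follows the right strategy (conjugation of the Boltzmann weights under (\ref{NJG-BS-ChP-q-S-adj}), then the spectral theorem for commuting normal operators), but, as you yourself concede, the conjugation identities for $W_{\text{qp}}$ and $\bar{W}_{\text{qp}}$ constitute the entire content of the normality claim and are left unverified, so this part remains a plan rather than a proof.
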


\section{Decomposition of the identity in the transfer matrix eigenbasis}

\subsection{Action of left separate states on right separate states}

Here we compute the action of covectors on vectors which in the left and
right SOV-basis have a \textit{separate form} similar to that of the
transfer matrix eigenstates. To be more precise, let us give the
following definition of a left $\langle \alpha _{k}|$ and a right $|\beta
_{k}\rangle $ separate states characterized by the given arbitrary set of functions $\alpha _{a}$ and $\beta _{a}$:%
\begin{align}
\langle \alpha _{k}|& =\sum_{h_{1},...,h_{\mathsf{N}}=1}^{p}\frac{q^{kh_{%
\mathsf{N}}}}{p^{1/2}}\prod_{a=1}^{\mathsf{N}-1}\alpha _{a}({\eta }%
_{a}^{(h_{a})})\prod_{1\leq a<b\leq \mathsf{N}-1}(\left( {\eta }%
_{a}^{(h_{a})}\right) ^{2}-\left( {\eta }_{b}^{(h_{b})}\right) ^{2})\frac{%
\langle {\eta }_{1}^{(h_{1})},...,{\eta }_{\mathsf{N}}^{(h_{\mathsf{N}})}|}{%
\prod_{b=1}^{\mathsf{N}-1}\omega _{b}({\eta }_{b}^{(h_{b})})},
\label{NJG-BS-ChP-Fact-left-SOV} \\
|\beta _{k}\rangle & =\sum_{h_{1},...,h_{\mathsf{N}}=1}^{p}\frac{q^{-kh_{%
\mathsf{N}}}}{p^{1/2}}\prod_{a=1}^{\mathsf{N}-1}\beta _{a}({\eta }%
_{a}^{(h_{a})})\prod_{1\leq a<b\leq \mathsf{N}-1}(\left( {\eta }%
_{a}^{(h_{a})}\right) ^{2}-\left( {\eta }_{b}^{(h_{b})}\right) ^{2})\frac{|{%
\eta }_{1}^{(h_{1})},...,{\eta }_{\mathsf{N}}^{(h_{\mathsf{N}})}\rangle }{%
\prod_{b=1}^{\mathsf{N}-1}\omega _{b}({\eta }_{b}^{(h_{b})})}.
\label{NJG-BS-ChP-Fact-right-SOV}
\end{align}

\begin{proposition}
The action of the left separate state $\langle \alpha _{k}|$ of form $(\ref{NJG-BS-ChP-Fact-left-SOV})$ on the right separate state $|\beta _{h}\rangle $ of form $(\ref{NJG-BS-ChP-Fact-right-SOV})$ reads:%
\begin{equation}
\langle \alpha _{k}|\beta _{h}\rangle =\delta _{k,h}\det_{\mathsf{N}-1}||%
\mathcal{M}_{a,b}^{\left( \alpha ,\beta \right) }||\text{ \ \ with \ }%
\mathcal{M}_{a,b}^{\left( \alpha ,\beta \right) }\equiv \left( {\eta }%
_{a}^{(0)}\right) ^{2(b-1)}\sum_{h=1}^{p}\frac{\alpha _{a}({\eta }%
_{a}^{(h)})\beta _{a}({\eta }_{a}^{(h)})}{\omega _{a}({\eta }_{a}^{(h)})}%
q^{2(b-1)h}.
\end{equation}
\end{proposition}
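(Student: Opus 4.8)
The plan is to compute $\langle \alpha_k|\beta_h\rangle$ by directly substituting the SOV expansions $(\ref{NJG-BS-ChP-Fact-left-SOV})$ and $(\ref{NJG-BS-ChP-Fact-right-SOV})$ and using the orthogonality $(\ref{NJG-BS-ChP-1M_jj})$ of the $\mathsf{B}$-eigenbasis, namely $\langle \boldsymbol\eta_{\mathbf{k}}|\boldsymbol\eta_{\mathbf{h}}\rangle = \mu_{\mathbf{h}}^{-1}\prod_j \delta_{k_j,h_j}$. First I would write out the double sum over the two sets of indices $(h_1,\dots,h_{\mathsf N})$ and $(h_1',\dots,h_{\mathsf N}')$; the orthogonality collapses it to a single sum, identifying all the local indices. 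The surviving normalization factor is $\langle\boldsymbol\eta_{\mathbf h}|\boldsymbol\eta_{\mathbf h}\rangle=\mu_{\mathbf h}^{-1}$, which by $(\ref{NJG-BS-ChP-2M_jj})$ equals $C_{\mathsf N}\prod_{a}\omega_a(\eta_a^{(h_a)})/\prod_{a<b}((\eta_a^{(h_a)})^2-(\eta_b^{(h_b)})^2)$. Crucially, this $\langle\boldsymbol\eta_{\mathbf h}|\boldsymbol\eta_{\mathbf h}\rangle$ carries the \emph{inverse} of exactly the Vandermonde-type product $\prod_{a<b}((\eta_a^{(h_a)})^2-(\eta_b^{(h_b)})^2)$ that appears squared (once from each state) in the product of the two expansions; so one power of the Vandermonde cancels, leaving a single factor $\prod_{a<b}((\eta_a^{(h_a)})^2-(\eta_b^{(h_b)})^2)$ and an overall $\prod_a\omega_a(\eta_a^{(h_a)})^{-1}$ (after the $\omega$'s from the two states combine with those from the measure).

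Next I would handle the $\eta_{\mathsf N}$-index. The factor $q^{kh_{\mathsf N}}$ from $\langle\alpha_k|$ and $q^{-hh_{\mathsf N}}$ from $|\beta_h\rangle$ combine, and since $h_{\mathsf N}$ runs over $\{1,\dots,p\}$ while nothing else in the separate-state data depends on it (the products run only up to $\mathsf N-1$), summing over $h_{\mathsf N}$ gives $\frac1p\sum_{h_{\mathsf N}=1}^{p}q^{(k-h)h_{\mathsf N}}=\delta_{k,h}$, which accounts for the Kronecker delta in the claim and for the $p^{1/2}$ normalizations squaring to $p$. This reduces the computation to a single sum over $(h_1,\dots,h_{\mathsf N-1})\in\mathbb Z_p^{\mathsf N-1}$ of the form
\begin{equation}
\sum_{h_1,\dots,h_{\mathsf N-1}} \prod_{1\le a<b\le \mathsf N-1}\bigl((\eta_a^{(h_a)})^2-(\eta_b^{(h_b)})^2\bigr)\ \prod_{a=1}^{\mathsf N-1}\frac{\alpha_a(\eta_a^{(h_a)})\beta_a(\eta_a^{(h_a)})}{\omega_a(\eta_a^{(h_a)})}.
\end{equation}

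The key step is then to recognize the single remaining Vandermonde product as a determinant and to move the sum inside. Writing $\eta_a^{(h_a)}=q^{h_a}\eta_a^{(0)}$, I would expand the Vandermonde $\prod_{a<b}(x_a-x_b)$ with $x_a=(\eta_a^{(h_a)})^2=(\eta_a^{(0)})^2 q^{2h_a}$ as the determinant $\det_{\mathsf N-1}\|x_a^{b-1}\|=\det_{\mathsf N-1}\|(\eta_a^{(0)})^{2(b-1)}q^{2(b-1)h_a}\|$. Since each row index $a$ carries its own independent summation variable $h_a$, the multi-sum factorizes row-by-row through the multilinearity of the determinant: the sum over all $(h_a)$ of the weighted determinant equals the determinant of the matrix whose $(a,b)$ entry is the single-index sum $(\eta_a^{(0)})^{2(b-1)}\sum_{h=1}^{p}\alpha_a(\eta_a^{(h)})\beta_a(\eta_a^{(h)})q^{2(b-1)h}/\omega_a(\eta_a^{(h)})$. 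This is exactly $\mathcal M_{a,b}^{(\alpha,\beta)}$, and the constant $C_{\mathsf N}$ is set to one by the gauge remark following $(\ref{NJG-BS-ChP-Decomp-Id})$, completing the proof.

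I expect the main obstacle to be bookkeeping rather than conceptual: one must verify precisely how the three sets of $\omega$-factors (one from each separate state and one from $\mu_{\mathbf h}$ through $\langle\boldsymbol\eta_{\mathbf h}|\boldsymbol\eta_{\mathbf h}\rangle$) combine to leave a single inverse power of $\omega_a$ per mode, and confirm the Vandermonde powers balance so that exactly one Vandermonde survives to become the determinant. The interchange of the finite multi-sum with the determinant expansion is the crux, but it is justified purely by multilinearity of the determinant in its rows since the summation variables separate across rows; no analytic subtlety arises because all sums are finite over $\mathbb Z_p$.
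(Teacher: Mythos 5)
Your proposal is correct and follows essentially the same route as the paper's own (much terser) proof: orthogonality of the $\mathsf{B}$-eigenbasis together with the measure formula collapses the double sum to a single Vandermonde-weighted sum, the $h_{\mathsf{N}}$-sum over $p$-th roots of unity yields $\delta_{k,h}$, and row multilinearity of the determinant then produces $\det_{\mathsf{N}-1}||\mathcal{M}_{a,b}^{(\alpha,\beta)}||$. Your write-up additionally makes explicit the bookkeeping of the $\omega$-factors and Vandermonde powers (and the role of $C_{\mathsf{N}}=1$) that the paper leaves implicit.
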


\begin{proof}
The SOV-decomposition of these states implies:
\begin{equation}
\langle \alpha _{k}|\beta _{h}\rangle =\sum_{h_{\mathsf{N}}=1}^{p}\frac{%
q^{(k-h)h_{\mathsf{N}}}}{p}\sum_{h_{1},...,h_{\mathsf{N}-1}=1}^{p}V(({\eta }%
_{1}^{(h_{1})})^{2},...,({\eta }_{\mathsf{N}-1}^{(h_{\mathsf{N}%
-1})})^{2})\prod_{a=1}^{\mathsf{N}-1}\frac{\alpha _{a}({\eta }%
_{a}^{(h_{a})})\beta _{a}({\eta }_{a}^{(h_{a})})}{\omega _{a}({\eta }%
_{a}^{(h_{a})})},
\end{equation}%
where $V(x_{1},...,x_{\mathsf{N}})\equiv \prod_{1\leq a<b\leq \mathsf{N}%
-1}(x_{a}-x_{b})$ is the Vandermonde determinant. Then from the identity:%
\begin{equation}
\delta _{k,h}=\sum_{h_{\mathsf{N}}=1}^{p}\frac{q^{(k-h)h_{\mathsf{N}}}}{p}%
\text{ \ when }q\text{ is a }p\text{-root of unit and }h,k\in \mathbb{Z}_{p}
\end{equation}%
and by using the multilinearity of the determinant w.r.t. the rows we prove
the proposition.
\end{proof}

It is worth remarking that the previous determinant formulae define also
scalar products for vectors in $\mathcal{R}_{\mathsf{N}}$ which have a
separate form in the right $\mathsf{B}$-eigenbasis and in the dual of the
left $\mathsf{B}$-eigenbasis. Indeed, $\left( \langle \alpha _{k}|\right)
^{\dagger }\in \mathcal{R}_{\mathsf{N}}$ is a separate vector\ in the basis
of $\mathcal{R}_{\mathsf{N}}$ formed out of the $\left( \langle {\eta _{%
\mathbf{k}}}|\right) ^{\dagger }$\ dual states of the left $\mathsf{B}$%
-eigenbasis. Then these results represent the SOV analogue of
the scalar product formulae \cite{NJG-BS-ChP-Sl1,NJG-BS-ChP-KMT99} computed for Bethe
states in the framework of the algebraic Bethe ansatz. Note that this formula is not
restricted to the case in which one of the two states is an eigenstate of
the transfer matrix. It is also interesting to remark that the previous
scalar product formulae allow to prove directly, as in the case of the sine-Gordon model, that the action of a transfer matrix
eigencovector on an eigenvector corresponding to different eigenvalue is
zero.

\begin{corollary}
Let $t_{h}(\lambda )$ and $t_{h}^{\prime }(\lambda )\in \Sigma _{\tau
_{2}}^{h}$ and $\langle t_{h}|$ and $|t_{h}^{\prime }\rangle $ the $\tau
_{2} $-eigenstates defined in Section $\ref{NJG-BS-ChP-LReigenstates}$, then for $%
t_{h}(\lambda )\neq t_{h}^{\prime }(\lambda )$ the $\mathsf{N}\times \mathsf{%
N}$ matrix $\mathcal{M}_{a,b}^{\left( t_{h},t_{h}^{\prime }\right) }$
has rank equal or smaller than $\mathsf{N}-1$. Indeed, the non-zero $\mathsf{%
N}\times 1$ vector V$^{\left( t_{h},t_{h}^{\prime }\right) }$ defined by:%
\begin{equation}
\text{V}_{b}^{\left( t_{h},t_{h}^{\prime }\right) }\equiv c_{b}^{\prime
}-c_{b}\text{\ \ \ }\forall b\in \{1,...,\mathsf{N}\},  \label{NJG-BS-ChP-V-vector}
\end{equation}%
where:
\begin{eqnarray}
t_{h}(\lambda ) &=&\sum_{\epsilon =\pm 1}\left( q^{\epsilon h}a_{\epsilon
}+q^{-\epsilon h}d_{\epsilon }\right) \lambda ^{\epsilon \mathsf{N}%
}+\sum_{b=1}^{\mathsf{N}-1}c_{b}\lambda ^{-\mathsf{N}-2+2b},
\label{NJG-BS-ChP-t-decomp} \\
t_{h}^{\prime }(\lambda ) &=&\sum_{\epsilon =\pm 1}\left( q^{\epsilon
h}a_{\epsilon }+q^{-\epsilon h}d_{\epsilon }\right) \lambda ^{\epsilon 
\mathsf{N}}+\sum_{b=1}^{\mathsf{N}-1}c_{b}^{\prime }\lambda ^{-\mathsf{N}%
-2+2b},  \label{NJG-BS-ChP-t1-decomp}
\end{eqnarray}%
is an eigenvector of $||\mathcal{M}_{a,b}^{\left( t_{h},t_{h}^{\prime
}\right) }||$ corresponding to the eigenvalue zero.
\end{corollary}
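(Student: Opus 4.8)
The plan is to prove the sharper statement — that the explicit vector $V^{(t_h,t_h')}$ lies in the kernel of $\|\mathcal{M}^{(t_h,t_h')}\|$ — from which the rank deficiency follows. Conceptually the singularity is forced independently: since $\langle t_h|$ is a left and $|t_h'\rangle$ a right $\tau_2$-eigenstate in the same $\Theta$-sector, inserting $\tau_2(\lambda)$ gives $(t_h(\lambda)-t_h'(\lambda))\langle t_h|t_h'\rangle=0$, and simplicity of the spectrum together with $t_h\neq t_h'$ yields $\langle t_h|t_h'\rangle=\det_{\mathsf{N}-1}\|\mathcal{M}^{(t_h,t_h')}\|=0$ through the preceding Proposition. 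The content of the corollary beyond this is to name the kernel vector, so the real work is the explicit computation of $\mathcal{M}^{(t_h,t_h')}V^{(t_h,t_h')}$.

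First I would compute $(\mathcal{M}^{(t_h,t_h')}V^{(t_h,t_h')})_a$ for each $a$. Here $\alpha_a=\bar{Q}_{t_h}$ (from $\langle t_h|$) and $\beta_a=Q_{t_h'}$ (from $|t_h'\rangle$). Using $(\eta_a^{(0)})^{2(b-1)}q^{2(b-1)j}=(\eta_a^{(j)})^{2(b-1)}$ and $V_b=c_b'-c_b$, the $b$-sum reassembles the eigenvalue difference: the fixed asymptotic coefficients (equal in both $t_h,t_h'$ since they share the $\Theta$-sector) drop out, so $\sum_b V_b(\eta_a^{(j)})^{2(b-1)}$ is a fixed monomial in $\eta_a^{(j)}$ times $(t_h'(\eta_a^{(j)})-t_h(\eta_a^{(j)}))$. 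This gives
\[
(\mathcal{M}V)_a=\sum_{j=1}^{p}\frac{(\eta_a^{(j)})^{\mathsf{N}}}{\omega_a(\eta_a^{(j)})}\,\bar{Q}_{t_h}(\eta_a^{(j)})\,Q_{t_h'}(\eta_a^{(j)})\,\bigl(t_h'(\eta_a^{(j)})-t_h(\eta_a^{(j)})\bigr).
\]

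The heart of the argument is to turn the summand into a discrete total difference. I would invoke the two Baxter equations of Theorem \ref{NJG-BS-ChP-C:T-eigenstates} and Section \ref{NJG-BS-ChP-LReigenstates}: $Q_{t_h'}$ solves the right equation \rf{EQ-Baxter-R} and $\bar{Q}_{t_h}$ the adjoint/left equation \rf{EQ-Baxter-L}. Multiplying the former by $\bar{Q}_{t_h}(\lambda)$, the latter by $Q_{t_h'}(\lambda)$ and subtracting gives the quantum-Wronskian (Lagrange) identity $(t_h'-t_h)Q_{t_h'}\bar{Q}_{t_h}=W(q\lambda)-W(\lambda)$ with
\[
W(\lambda)=\bar{\text{\textsc{d}}}(\lambda/q)\,\bar{Q}_{t_h}(\lambda/q)\,Q_{t_h'}(\lambda)-\bar{\text{\textsc{a}}}(\lambda)\,Q_{t_h'}(\lambda/q)\,\bar{Q}_{t_h}(\lambda),
\]
where the $\bar{\text{\textsc{a}}}$- and $\bar{\text{\textsc{d}}}$-terms pair up under $\lambda\to q\lambda$. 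Since $q^p=1$ the orbit $\eta_a^{(j)}=q^j\eta_a^{(0)}$ is periodic ($\eta_a^{(j+p)}=\eta_a^{(j)}$), so substituting the identity turns $(\mathcal{M}V)_a$ into a cyclic sum of a forward difference, which telescopes to zero, giving $(\mathcal{M}V)_a=0$ for every $a$.

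The main obstacle is the weight bookkeeping: the telescoping collapses cleanly only if the prefactor $(\eta_a^{(j)})^{\mathsf{N}}/\omega_a(\eta_a^{(j)})$ is constant along the orbit, and this requires the power generated by reassembling $t_h'-t_h$ to match exactly the power carried by the measure $\omega_a(\eta_a^{(j)})=(\eta_a^{(j)})^{\mathsf{N}-1}\prod_{l=1}^{j}\mathtt{a}^{(SOV)}(\eta_a^{(l)})/\mathtt{\bar{a}}^{(SOV)}(\eta_a^{(l-1)})$. I would fix this by adopting the canonical gauges \rf{L-gauge} and \rf{R-gauge}, $\mathtt{a}^{(SOV)}(\lambda)=\bar{\text{\textsc{a}}}(\lambda)$ and $\mathtt{\bar{a}}^{(SOV)}(\lambda)=\bar{\text{\textsc{a}}}(q\lambda)$, which make every ratio in the product equal to one, and then tracking the precise exponent offsets in the Laurent decomposition of $t_h,t_h'$ so that the residual power cancels and $W$ is genuinely a periodic function on the cyclic lattice. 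Verifying this exact matching of powers (and the induced periodicity of $W$) is the delicate point; once it holds, the vanishing of the cyclic telescoping sum is immediate, and the remaining manipulations are the routine algebra of summation over $\mathbb{Z}_p$.
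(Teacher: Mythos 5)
Your proposal is correct and takes essentially the same route as the paper's proof: fix the gauges \rf{L-gauge} and \rf{R-gauge} so that the measure factors $\omega_a$ collapse to a pure power and the $b$-sum against $V_b$ reassembles $t_h'-t_h$, then eliminate $(t_h'-t_h)\,Q_{t_h'}\bar{Q}_{t_h}$ through the two Baxter equations \rf{EQ-Baxter-R} and \rf{EQ-Baxter-L}, the resulting cyclic sum over each $q$-orbit cancelling. Your quantum-Wronskian packaging $(t_h'-t_h)Q_{t_h'}\bar{Q}_{t_h}=W(q\lambda)-W(\lambda)$ is exactly the identity the paper substitutes (written there in expanded form), and your telescoping over $\mathbb{Z}_p$ — including the weight-constancy along the orbit that you flag as the delicate point — is precisely the paper's pairwise cancellation, which it secures by the same gauge choice via $\omega_a(\eta_a^{(h)})=(\eta_a^{(h)})^{\mathsf{N}-2}$.
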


\begin{proof}
Note that under the choice (\ref{NJG-BS-ChP-L-gauge}) for the left gauge and (\ref{NJG-BS-ChP-R-gauge}) for the right gauge, it holds:%
\begin{equation}
\omega _{a}({\eta }_{a}^{(h)})=({\eta }_{a}^{(h)})^{\mathsf{N}-2},
\end{equation}%
and then by the definitions $\left( \ref{NJG-BS-ChP-V-vector}\right) $, $\left( \ref{NJG-BS-ChP-t-decomp}\right) $ and $\left(\ref{NJG-BS-ChP-t1-decomp}\right) $ it holds:%
\begin{equation}
\sum_{b=1}^{\mathsf{N}}\mathcal{M}_{a,b}^{\left( t_{h},t_{h}^{\prime
}\right) }\text{V}_{b}^{\left( t_{h},t_{h}^{\prime }\right)
}=\sum_{h=0}^{2s_{a}}Q_{t_{h}^{\prime }}(\eta _{a}^{(h)})\bar{Q}%
_{t_{h}}(\eta _{a}^{(h)})(t_{h}^{\prime }(\eta _{a}^{(h)})-t_{h}(\eta
_{a}^{(h)})).  \label{NJG-BS-ChP-zero-eigenvector-1a}
\end{equation}%
The desired result:%
\begin{equation}
\sum_{b=1}^{\mathsf{N}}\Phi _{a,b}^{\left( t_{h},t_{h}^{\prime }\right) }%
\text{V}_{b}^{\left( t_{h},t_{h}^{\prime }\right) }=0\text{ \ \ \ \ }\forall
a\in \{1,...,\mathsf{N}\},  \label{NJG-BS-ChP-zero-eigenvector}
\end{equation}%
then follows as the Baxter equations $(\ref{NJG-BS-ChP-EQ-Baxter-R})$ and $(\ref{NJG-BS-ChP-EQ-Baxter-L})$ allow to write:%
\begin{align}
Q_{t_{h}^{\prime }}(\eta _{a}^{(k)})\bar{Q}_{t_{h}}(\eta
_{a}^{(k)})(t_{h}^{\prime }(\eta _{a}^{(k)})-t_{h}(\eta _{a}^{(k)}))& =(\bar{%
\text{\textsc{d}}}(\eta _{a}^{(k+1)})Q_{t_{h}^{\prime }}(\eta _{a}^{(k+1)})+%
\bar{\text{\textsc{a}}}(\eta _{a}^{(k-1)})Q_{t_{h}^{\prime }}(\eta
_{a}^{(k-1)}))\bar{Q}_{t}(\eta _{a}^{(k)})  \notag \\
& -(\bar{\text{\textsc{a}}}(\eta _{a}^{(k)})\bar{Q}_{t_{h}}(\eta
_{a}^{(k+1)})+\bar{\text{\textsc{d}}}(\eta _{a}^{(k)})\bar{Q}_{t_{h}}(\eta
_{a}^{(k-1)}))Q_{t_{h}^{\prime }}(\eta _{a}^{(k)}),
\end{align}%
which substituted in (\ref{NJG-BS-ChP-zero-eigenvector-1a}) implies (\ref{NJG-BS-ChP-zero-eigenvector}).
\end{proof}

\subsection{Decomposition of the identity in transfer matrix eigenbasis}

In the representations for which $\tau _{2}(\lambda )$ is diagonalizable
then the simplicity of its spectrum plus the explicit characterizations of
its left and right eigenstates allows to write the following decomposition of
the identity: 
\begin{equation}
\mathbb{I=}\sum_{k=0}^{p-1}\sum_{t(\lambda )\in \Sigma _{\tau _{2}}^{k}}%
\frac{|t_{k}\rangle \langle t_{k}|}{\langle t_{k}|t_{k}\rangle },
\label{NJG-BS-ChP-Id-decomp}
\end{equation}%
where%
\begin{equation}
\langle t_{k}|t_{k}\rangle =\det_{\mathsf{N}-1}||\mathcal{M}_{a,b}^{\left(
t_{k},t_{k}\right) }||\text{ \ with }\mathcal{M}_{a,b}^{\left(
t_{k},t_{k}\right) }\equiv (\eta _{a}^{(0)})^{2(b-1)}\sum_{c=1}^{p}\frac{%
Q_{t_{k}}(\eta _{a}^{(c)})\bar{Q}_{t_{k}}(\eta _{a}^{(c)})}{\omega _{a}(\eta
_{a}^{(c)})}q^{2(b-1)c},
\end{equation}%
is the action of the covector $\langle t_{k}|$ on the vector $|t_{k}\rangle $,
both defined in Section \ref{NJG-BS-ChP-LReigenstates}. Note that in the representations
which define a normal $\tau _{2}(\lambda )$ the simplicity of the spectrum
implies the following identity:%
\begin{equation}
\left( |t_{k}\rangle \right) ^{\dagger }\equiv \alpha _{t_{k}}\langle t_{k}|%
\text{ \ where }\alpha _{t_{k}}=\frac{\left\Vert |t_{k}\rangle \right\Vert
^{2}}{\langle t_{k}|t_{k}\rangle }\in \mathbb{C}
\end{equation}%
for any eigenvector $|t_{k}\rangle $ of $\tau _{2}(\lambda )$. For these
special representations, this stresses the interest in computing the norm 
$\left\Vert |t_{k}\rangle \right\Vert $ as it allows to write left and right 
$\tau _{2}$-eigenstates as one the exact dual of the other.

\section{Propagator for the Bazhanov-Stroganov model}

In this section we construct the propagator operator along the chain of the Bazhanov-Stroganov model for the representations parametrized by points on the chP curves.

\subsection{Fundamental R-matrix of the Bazhanov-Stroganov model}

In the next proposition we report adapting to our notations a fundamental
result of the paper \cite{NJG-BS-ChP-BS}.

\begin{proposition}[\protect\cite{NJG-BS-ChP-BS}]
Let $\mathsf{S}_{(\text{q}_{1},\text{r}_{1}|\text{q}_{2},\text{r}_{2})}$ be
the operator defined on the tensor product of two $p$-dimensional spaces by:%
\begin{equation}
\langle z_{1},z_{2}|\mathsf{S}_{(\text{q}_{1},\text{r}_{1}|\text{q}_{2},%
\text{r}_{2})}|z_{1}^{\prime },z_{2}^{\prime }\rangle \equiv \bar{W}_{\text{q%
}_{2}\text{q}_{1}}(z_{1}/z_{2}^{\prime })W_{\text{r}_{2}\text{q}%
_{1}}(z_{1}^{\prime }/z_{2}^{\prime })\bar{W}_{\text{r}_{2}\text{r}%
_{1}}(z_{2}/z_{1}^{\prime })W_{\text{q}_{2}\text{r}_{1}}(z_{2}/z_{1}),
\end{equation}%
Then, $\mathsf{S}_{(\text{q}_{1},\text{r}_{1}|\text{q}_{2},\text{r}_{2})}$
is the fundamental R-matrix intertwining the Bazhanov-Stroganov  Lax operator in the
quantum space, i.e. it holds:%
\begin{equation}
\mathsf{L}_{0\mathsf{2}}(\lambda |\text{q}_{2},\text{r}_{2})\mathsf{L}_{0%
\mathsf{1}}(\lambda |\text{q}_{1},\text{r}_{1})\mathsf{S}_{(\text{q}_{1},%
\text{r}_{1}|\text{q}_{2},\text{r}_{2})}=\mathsf{S}_{(\text{q}_{1},\text{r}%
_{1}|\text{q}_{2},\text{r}_{2})}\mathsf{L}_{0\mathsf{1}}(\lambda |\text{q}%
_{1},\text{r}_{1})\mathsf{L}_{0\mathsf{2}}(\lambda |\text{q}_{2},\text{r}%
_{2}).
\end{equation}
\end{proposition}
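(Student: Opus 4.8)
The plan is to reduce the operator identity to a finite collection of scalar functional equations for the Boltzmann weights $W$ and $\bar{W}$, and to verify these using the recursion relations \rf{Recursion-w-1} together with the curve equations \rf{curve-eq}. Both sides of the intertwining relation are $2\times 2$ matrices in the auxiliary space $\mathbb{C}^{2}$ (label $0$) whose entries are operators on $\text{R}_1\otimes\text{R}_2$. First I would take the four auxiliary matrix elements $(i,j)\in\{1,2\}^2$, so that $(\mathsf{L}_{02}\mathsf{L}_{01})_{ij}$ and $(\mathsf{L}_{01}\mathsf{L}_{02})_{ij}$ become genuine operators on $\text{R}_1\otimes\text{R}_2$, and then sandwich each resulting identity between $\langle z_1,z_2|$ and $|z_1',z_2'\rangle$ in the quantum $z$-basis. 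Because the entries of each Lax operator $\mathsf{L}_{0n}$ are built from $\mathsf{u}_n^{\pm 1}$ (which shift the local index, cf. \rf{v-eigenbasis}) and $\mathsf{v}_n^{\pm 1}$ (which act diagonally), the $z$-basis matrix elements of $\mathsf{L}_{0n}$ are supported on nearest-neighbour shifts $z_n\mapsto q^{\pm 1}z_n$. Inserting a resolution of the identity between the Lax product and $\mathsf{S}$ therefore collapses the intermediate sum, and each of the four auxiliary components becomes a finite linear relation, with coefficients depending on $\lambda$ and on the representation parameters, among the kernels of $\mathsf{S}$ evaluated at arguments shifted by powers of $q$.

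Next I would divide each such relation by a reference value of the $\mathsf{S}$-kernel, so that only the ratios $W_{\text{qp}}(qz)/W_{\text{qp}}(q^{-1}z)$ and $\bar{W}_{\text{qp}}(qz)/\bar{W}_{\text{qp}}(q^{-1}z)$ survive; these are furnished explicitly by \rf{Recursion-w-1}. Since a single shift $z_n\mapsto q^{\pm 1}z_n$ moves the arguments of two of the four $W/\bar{W}$ factors at once, one must combine several recursion relations. Substituting them, together with the parametrization \rf{Par-on-chP-inho-1}--\rf{Par-on-chP-inho-SP} of the Lax coefficients $\alpha_n,\beta_n,\mathbbm{a}_n,\dots$ in terms of the curve points $\text{q}_n,\text{r}_n$, turns each functional equation into a rational identity in $\lambda$ and in the variables $x_{\text{p}},y_{\text{p}},s_{\text{p}}$ of the four points. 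I would then verify these identities by matching the independent powers of $\lambda$, which vanish separately and yield a small number of algebraic relations among the curve data.

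The closing of these algebraic relations is the heart of the matter and the main obstacle: the cancellations hold precisely because all four points $\text{q}_1,\text{r}_1,\text{q}_2,\text{r}_2$ lie on one and the same curve $\mathcal{C}_k$ with a common modulus $k$, so that \rf{curve-eq} can be used to eliminate the $p$-th powers $x_{\text{p}}^{p},y_{\text{p}}^{p},s_{\text{p}}^{p}$ in favour of $k,k'$. Organizing the four factors so that their shift ratios combine correctly with the $\lambda$-dependent Lax coefficients is combinatorially delicate, and it is here that the shared modulus is indispensable. An equivalent and perhaps more transparent route is to observe that these dressed functional equations are exactly the star-triangle (Yang-Baxter) relations for the chiral Potts weights; one may then either verify the star-triangle relation directly on $\mathcal{C}_k$ or invoke it, since it is the defining integrability property of the model. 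In either presentation the cyclicity of the $W$-functions guarantees that the finite shifts close consistently modulo $p$, which completes the verification.
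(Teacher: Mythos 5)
Your proposal follows essentially the same route as the paper's own proof: the paper likewise reduces the intertwining relation to its four auxiliary matrix elements $(i_{1},i_{2})\in\{1,2\}^{2}$, sandwiches each between $\langle z_{1},z_{2}|$ and $|z_{1}'',z_{2}''\rangle$ so that the identity becomes a finite sum of scalar relations among the $\mathsf{S}$-kernels, and then verifies these by rewriting the Lax entries via the curve parametrization and invoking the defining properties of $W$ and $\bar{W}$. Your additional detail (dividing by a reference kernel value to expose the recursion ratios of \rf{Recursion-w-1}, matching powers of $\lambda$, and the remark that the resulting identities are star-triangle relations) simply makes explicit the verification that the paper compresses into the phrase ``once the elements of $\mathsf{L}_{0\mathsf{i}}$ are rewritten in terms of the points of $\mathcal{C}_{k}$.''
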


\begin{proof}
Let us just point out that the proof can be obtained by proving it for any
matrix element ($i_{1},i_{2}$)$\in \{1,2\}\times \{1,2\}$. Indeed, taking
the matrix elements on the quantum states $\langle z_{1},z_{2}|$ and $%
|z_{1}^{\prime \prime },z_{2}^{\prime \prime }\rangle $, the proposition
simply follows from the identities: 
\begin{align}
& \sum_{z_{2}^{\prime },z_{2}^{\prime }\in \mathbb{S}_{p},j=1,2}\left( 
\mathsf{L}_{0\mathsf{2}}\right) _{z_{2}z_{2}^{\prime }}^{i_{2},j}(\lambda |%
\text{q}_{2},\text{r}_{2})\left( \mathsf{L}_{0\mathsf{1}}\right)
_{z_{1}z_{1}^{\prime }}^{j,i_{1}}(\lambda |\text{q}_{1},\text{r}_{1})\langle
z_{1}^{\prime },z_{2}^{\prime }|\mathsf{S}_{(\text{q}_{1},\text{r}_{1}|\text{%
q}_{2},\text{r}_{2})}|z_{1}^{\prime \prime },z_{2}^{\prime \prime }\rangle
\left. =\right.  \notag \\
& \text{ \ \ \ \ \ \ \ \ \ \ \ \ \ \ \ \ \ \ \ \ \ \ \ \ \ \ \ \ \ }%
\sum_{z_{2}^{\prime },z_{2}^{\prime }\in \mathbb{S}_{p},j=1,2}\langle
z_{1},z_{2}|\mathsf{S}_{(\text{q}_{1},\text{r}_{1}|\text{q}_{2},\text{r}%
_{2})}|z_{1}^{\prime },z_{2}^{\prime }\rangle \left( \mathsf{L}_{0\mathsf{1}%
}\right) _{z_{1}^{\prime }z_{1}^{\prime \prime }}^{i_{1},j}(\lambda |\text{q}%
_{1},\text{r}_{1})\left( \mathsf{L}_{0\mathsf{2}}\right) _{z_{2}^{\prime
}z_{2}^{\prime \prime }}^{j,i_{2}}(\lambda |\text{q}_{2},\text{r}_{2}),
\label{NJG-BS-ChP-LLS=SLL}
\end{align}%
once the elements of $\mathsf{L}_{0\mathsf{i}}$ are rewritten in terms of
the points of $\mathcal{C}_{k}$ and we use the definition of the functions $%
W $ and $\bar{W}$.
\end{proof}

\subsection{Propagator for the Bazhanov-Stroganov model}

The first transfer matrix of the chP-model has been defined in (\ref{NJG-BS-ChP-kernel}%
) while the second chP-transfer matrix reads:%
\begin{equation}
\mathsf{\hat{T}}_{\lambda _{\text{p}},(\text{p$|$}\{\text{q}_{n},\text{r}%
_{n}\})}^{{\small \text{chP}}}(\textbf{z},\textbf{z}^{\prime })\equiv \langle 
\textbf{z}|\mathsf{\hat{T}}_{\lambda _{\text{p}},(\text{p$|$}\{\text{q}_{n},%
\text{r}_{n}\})}^{{\small \text{chP}}}|\textbf{z}^{\prime }\rangle
=\prod_{n=1}^{N}W_{\text{r}_{n}\text{p}}(z_{n+1}/z_{n}^{\prime })\bar{W}_{%
\text{q}_{n}\text{p}}(z_{n}/z_{n}^{\prime }).
\end{equation}
Let us recall that the propagator operator $\mathsf{U}_{n}$ along the Bazhanov-Stroganov chain is defined by:%
\begin{equation}
\mathsf{U}_{n}\mathsf{M}_{1,...,\mathsf{N}}(\lambda )\mathsf{U}%
_{n}^{-1}\equiv \mathsf{M}_{n,...,\mathsf{N},1,...,n-1}(\lambda )\equiv 
\mathsf{L}_{n-1}(\lambda )\cdots \mathsf{L}_{1}(\lambda )\mathsf{L}_{\mathsf{%
N}}(\lambda )\cdots \mathsf{L}_{n}(\lambda ),  \label{NJG-BS-ChP-Def-Un}
\end{equation}%
then we can prove:

\begin{proposition}
The propagator operator $\mathsf{U}_{m}$ has the following representation in
terms of the chP-transfer matrices:%
\begin{equation}
\mathsf{U}_{m}^{-1}\equiv \mathsf{T}_{\lambda _{\text{r}_{1}},(\text{r}_{1}%
\text{$|$}\{\text{q}_{n},\text{r}_{n}\})}^{{\small \text{chP}}}\mathsf{\hat{T%
}}_{\lambda _{\text{q}_{1}},(\text{q}_{1}\text{$|$}\{\text{q}_{n},\text{r}%
_{n}\})}^{{\small \text{chP}}}\cdots \mathsf{T}_{\lambda _{\text{r}_{m-1}},(%
\text{r}_{m-1}\text{$|$}\{\text{q}_{n},\text{r}_{n}\})}^{{\small \text{chP}}}%
\mathsf{\hat{T}}_{\lambda _{\text{q}_{m-1}},(\text{q}_{m-1}\text{$|$}\{\text{%
q}_{n},\text{r}_{n}\})}^{{\small \text{chP}}}.
\end{equation}
\end{proposition}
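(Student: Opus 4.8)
The plan is to verify directly the defining conjugation relation (\ref{NJG-BS-ChP-Def-Un}) for the operator $\mathsf{P}\equiv\mathsf{T}_{\lambda_{\text{r}_1}}^{{\small\text{chP}}}\mathsf{\hat{T}}_{\lambda_{\text{q}_1}}^{{\small\text{chP}}}\cdots\mathsf{T}_{\lambda_{\text{r}_{m-1}}}^{{\small\text{chP}}}\mathsf{\hat{T}}_{\lambda_{\text{q}_{m-1}}}^{{\small\text{chP}}}$ claimed to equal $\mathsf{U}_m^{-1}$; that is, to establish $\mathsf{M}_{1,\dots,\mathsf{N}}(\lambda)\,\mathsf{P}=\mathsf{P}\,\mathsf{M}_{m,\dots,\mathsf{N},1,\dots,m-1}(\lambda)$ for all $\lambda$ and then to fix the overall scalar. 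Since $\mathsf{P}$ is a product of $m-1$ pairs of chiral Potts transfer matrices and the shift of the monodromy matrix advances the ``window'' by $m-1$ sites, I would argue by induction on $m$, the base case $m=1$ being the empty product $\mathsf{U}_1=\mathbb{I}$. The inductive step reduces to an \emph{elementary shift lemma}: a single pair $\mathsf{T}_{\lambda_{\text{r}_j}}^{{\small\text{chP}}}\mathsf{\hat{T}}_{\lambda_{\text{q}_j}}^{{\small\text{chP}}}$ implements the cyclic transport of one Lax operator from one end of the monodromy matrix to the other.

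The engine of the elementary lemma is the fundamental R-matrix intertwining relation of the previous Proposition, $\mathsf{L}_{0\mathsf{2}}\mathsf{L}_{0\mathsf{1}}\mathsf{S}=\mathsf{S}\,\mathsf{L}_{0\mathsf{1}}\mathsf{L}_{0\mathsf{2}}$, which swaps two neighbouring Lax operators at the price of conjugation by $\mathsf{S}$, together with the fact (property \textsf{I} of Section \ref{NJG-BS-ChP-chP-def}) that $\mathsf{S}$ is built out of four chiral Potts weights. First I would record the degeneration of the Boltzmann weights at coincident rapidities, which follows at once from the explicit expressions (\ref{NJG-BS-ChP-W-function}): setting $\text{q}=\text{p}$ one finds $W_{\text{pp}}(z)$ constant, while the $k=1$ factor in the numerator of $\bar{W}_{\text{pp}}$ vanishes, so that $\bar{W}_{\text{pp}}(z)\propto\delta_{z,1}$ is a Kronecker delta on $\mathbb{S}_p$. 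This is the chiral Potts analogue of the collapse of the six-vertex $R$-matrix to the permutation operator at the special point, and it is what makes effective the specialisation of the auxiliary rapidity to a chain rapidity $\text{r}_j$ or $\text{q}_j$.

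Concretely, I would compute the kernel of $\mathsf{T}_{\lambda_{\text{r}_j}}^{{\small\text{chP}}}\mathsf{\hat{T}}_{\lambda_{\text{q}_j}}^{{\small\text{chP}}}$ by composing the two kernels (\ref{NJG-BS-ChP-kernel}) and the corresponding one for $\mathsf{\hat{T}}^{{\small\text{chP}}}$ and summing over the intermediate spins. With the auxiliary points set to $\text{r}_j$ and $\text{q}_j$, the coincident-point factors $\bar{W}_{\text{r}_j\text{r}_j}$ and $\bar{W}_{\text{q}_j\text{q}_j}$ collapse to deltas, pinning the corresponding intermediate sums, and the remaining weights reorganise---using the recursion relations (\ref{NJG-BS-ChP-Recursion-w-1})---precisely into the string of fundamental R-matrices $\mathsf{S}$ that transports one $\mathsf{L}_n$ past all its neighbours. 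Feeding this string into the intertwining relation then slides that Lax operator around the chain and produces the one-site cyclic shift, which is the content of the elementary lemma.

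The main obstacle I anticipate is exactly this middle step: the bookkeeping of the spin sums and the matching, after the delta collapses, of the composed chiral Potts kernels with the product of $\mathsf{S}$-kernels. This is a \emph{train}-type argument in the terminology of the chiral Potts literature, whose heart is the star-triangle relation already encoded in the intertwining Proposition; the delicate part is tracking the correct pairing of rapidities and accumulating the scalar prefactors coming from $W_{\text{pp}}=\text{const}$ and from the normalisation of the deltas. Once the elementary lemma is in place, the induction and the determination of the overall constant---fixed by comparing the action of both sides on a single reference state---are routine, and they upgrade $\mathsf{M}_{1,\dots,\mathsf{N}}(\lambda)\,\mathsf{P}=\mathsf{P}\,\mathsf{M}_{m,\dots,\mathsf{N},1,\dots,m-1}(\lambda)$, together with the irreducibility of the cyclic representation, to the stated identity $\mathsf{P}=\mathsf{U}_m^{-1}$.
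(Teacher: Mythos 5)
Your proposal follows essentially the same route as the paper's proof: both rest on the coincident-rapidity degenerations of the Boltzmann weights ($\bar{W}_{\text{pp}}(z/z')=\delta_{z,z'}$ together with $W_{\text{pq}}(z)W_{\text{qp}}(z)=1$) to identify, by composing kernels and collapsing the intermediate spin sums, each pair $\mathsf{T}_{\lambda_{\text{r}_j}}^{\text{chP}}\mathsf{\hat{T}}_{\lambda_{\text{q}_j}}^{\text{chP}}$ with the string of fundamental $\mathsf{S}$-operators, and then feed this into the intertwining relation to realize the one-site cyclic transport of a Lax operator. Your induction on $m$ with the ``elementary shift lemma'' is precisely the paper's explicit iteration of the operators $\bar{\mathsf{S}}_1,\dots,\bar{\mathsf{S}}_{m-1}$ written recursively, so the two arguments coincide in substance.
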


\begin{proof}
The previous proposition implies that the operator $\mathsf{S}_{(\text{q}%
_{1},\text{r}_{1}|\text{q}_{2},\text{r}_{2})}$ satisfies the following
equation%
\begin{equation}
\left( \mathsf{S}_{(\text{q}_{1},\text{r}_{1}|\text{q}_{2},\text{r}%
_{2})}\right) ^{-1}\mathsf{L}_{0\mathsf{2}}(\lambda |\text{q}_{2},\text{r}%
_{2})\mathsf{L}_{0\mathsf{1}}(\lambda |\text{q}_{1},\text{r}_{1})\mathsf{S}%
_{(\text{q}_{1},\text{r}_{1}|\text{q}_{2},\text{r}_{2})}=\mathsf{L}_{0%
\mathsf{1}}(\lambda |\text{q}_{1},\text{r}_{1})\mathsf{L}_{0\mathsf{2}%
}(\lambda |\text{q}_{2},\text{r}_{2}),
\end{equation}%
then it is simple to verify that: 
\begin{align}
& \left( \mathsf{S}_{(\text{q}_{1},\text{r}_{1}|\text{q}_{2},\text{r}_{2})}%
\mathsf{S}_{(\text{q}_{1},\text{r}_{1}|\text{q}_{3},\text{r}_{3})}\cdots 
\mathsf{S}_{(\text{q}_{1},\text{r}_{1}|\text{q}_{\mathsf{N}},\text{r}_{%
\mathsf{N}})}\right) ^{-1}\mathsf{L}_{0\mathsf{N}}(\lambda |\text{q}_{%
\mathsf{N}},\text{r}_{\mathsf{N}})\cdots \mathsf{L}_{0\mathsf{2}}(\lambda |%
\text{q}_{\mathsf{2}},\text{r}_{\mathsf{2}})\mathsf{L}_{0\mathsf{1}}(\lambda
|\text{q}_{\mathsf{1}},\text{r}_{\mathsf{1}})  \notag \\
& \left. \times \right. \left( \mathsf{S}_{(\text{q}_{1},\text{r}_{1}|\text{q%
}_{2},\text{r}_{2})}\mathsf{S}_{(\text{q}_{1},\text{r}_{1}|\text{q}_{3},%
\text{r}_{3})}\cdots \mathsf{S}_{(\text{q}_{1},\text{r}_{1}|\text{q}_{%
\mathsf{N}},\text{r}_{\mathsf{N}})}\right) \left. =\right. \mathsf{L}_{0%
\mathsf{1}}(\lambda |\text{q}_{\mathsf{1}},\text{r}_{\mathsf{1}})\mathsf{L}%
_{0\mathsf{N}}(\lambda |\text{q}_{\mathsf{N}},\text{r}_{\mathsf{N}})\cdots 
\mathsf{L}_{0\mathsf{2}}(\lambda |\text{q}_{\mathsf{2}},\text{r}_{\mathsf{2}%
})  \label{NJG-BS-ChP-Prop-Step2}
\end{align}%
Let us compute the matrix elements:
\begin{equation}
\langle \textbf{z}|\mathsf{T}_{\lambda _{\text{r}_{1}},(\text{r}_{1}\text{$|$}\{\text{%
q}_{n},\text{r}_{n}\})}^{{\small \text{chP}}}\mathsf{\hat{T}}_{\lambda _{%
\text{q}_{1}},(\text{q}_{1}\text{$|$}\{\text{q}_{n},\text{r}_{n}\})}^{%
{\small \text{chP}}}|\textbf{z}^{\prime \prime }\rangle =\sum_{\textbf{z}^{\prime }}\langle \textbf{z}|%
\mathsf{T}_{\lambda _{\text{r}_{1}},(\text{r}_{1}\text{$|$}\{\text{q}_{n},%
\text{r}_{n}\})}^{{\small \text{chP}}}|\textbf{z}^{\prime }\rangle \langle \textbf{z}^{\prime
}|\mathsf{\hat{T}}_{\lambda _{\text{q}_{1}},(\text{q}_{1}\text{$|$}\{\text{q}%
_{n},\text{r}_{n}\})}^{{\small \text{chP}}}|\textbf{z}^{\prime \prime }\rangle
\end{equation}%
Using the relations $\bar{W}_{pp}(z/z^{\prime })=\delta _{z,z^{\prime }}$
and $W_{pq}(z)W_{qp}(z)=1$, we get:
\begin{align}
\langle \textbf{z}|\mathsf{T}_{\lambda _{\text{r}_{1}},(\text{r}_{1}\text{$|$}\{\text{%
q}_{n},\text{r}_{n}\})}^{{\small \text{chP}}}\mathsf{\hat{T}}_{\lambda _{%
\text{q}_{1}},(\text{q}_{1}\text{$|$}\{\text{q}_{n},\text{r}_{n}\})}^{%
{\small \text{chP}}}|\textbf{z}^{\prime \prime }\rangle &= & \sum_{\textbf{z}^{\prime }}\delta
_{z_{1},z_{2}^{\prime }}\delta _{z_{1}^{\prime },z_{1}^{\prime \prime
}}\prod_{n\geq 2}\langle z_{n}^{\prime },z_{n}|\mathsf{S}_{(\text{q}_{1},%
\text{r}_{1}|\text{q}_{n},\text{r}_{n})}|z_{n+1}^{\prime },z_{n}^{\prime
\prime }\rangle \\
&= & \langle z_{1},\dots ,z_{\mathsf{N}}|\mathsf{S}_{(\text{q}_{1},\text{r}%
_{1}|\text{q}_{2},\text{r}_{2})}\cdots \mathsf{S}_{(\text{q}_{1},\text{r}%
_{1}|\text{q}_{\mathsf{N}},\text{r}_{\mathsf{N}})}|z_{1}^{\prime \prime
},\dots ,z_{\mathsf{N}}^{\prime \prime }\rangle
\end{align}%
Let us use the notation $\bar{\mathsf{S}}_{i}=\mathsf{T}_{\lambda _{\text{r}_{i}},(%
\text{r}_{i}\text{$|$}\{\text{q}_{n},\text{r}_{n}\})}^{{\small \text{chP}}}%
\mathsf{\hat{T}}_{\lambda _{\text{q}_{i}},(\text{q}_{i}\text{$|$}\{\text{q}%
_{n},\text{r}_{n}\})}^{{\small \text{chP}}}$, then $\left( \ref{NJG-BS-ChP-Prop-Step2}%
\right) $ can be rewritten as it follows:%
\begin{equation}
\bar{\mathsf{S}}_{1}^{-1}\mathsf{L}_{0\mathsf{N}}(\lambda |\text{q}_{\mathsf{%
N}},\text{r}_{\mathsf{N}})\cdots \mathsf{L}_{0\mathsf{2}}(\lambda |\text{q}_{%
\mathsf{2}},\text{r}_{\mathsf{2}})\mathsf{L}_{0\mathsf{1}}(\lambda |\text{q}%
_{\mathsf{1}},\text{r}_{\mathsf{1}})\bar{\mathsf{S}}_{1}=\mathsf{L}_{0%
\mathsf{1}}(\lambda |\text{q}_{\mathsf{1}},\text{r}_{\mathsf{1}})\mathsf{L}%
_{0\mathsf{N}}(\lambda |\text{q}_{\mathsf{N}},\text{r}_{\mathsf{N}})\cdots 
\mathsf{L}_{0\mathsf{2}}(\lambda |\text{q}_{\mathsf{2}},\text{r}_{\mathsf{2}%
})
\end{equation}%
and acting similarly with the others $\bar{\mathsf{S}}_{n}$ with $n>1$ it
holds: 
\begin{align}
& \bar{\mathsf{S}}_{n}^{-1}\mathsf{L}_{0n-1}(\lambda |\text{q}_{n-1},\text{r}%
_{n-1})\cdots \mathsf{L}_{0\mathsf{1}}(\lambda |\text{q}_{\mathsf{1}},\text{r%
}_{\mathsf{1}})\mathsf{L}_{0\mathsf{N}}(\lambda |\text{q}_{\mathsf{N}},\text{%
r}_{\mathsf{N}})\cdots \mathsf{L}_{0n+1}(\lambda |\text{q}_{n+1},\text{r}%
_{n+1})\mathsf{L}_{0n}(\lambda |\text{q}_{n},\text{r}_{n})\bar{\mathsf{S}}%
_{n}  \notag \\
& =\mathsf{L}_{0n}(\lambda |\text{q}_{n},\text{r}_{n})\cdots \mathsf{L}_{0%
\mathsf{1}}(\lambda |\text{q}_{\mathsf{1}},\text{r}_{\mathsf{1}})\mathsf{L}%
_{0\mathsf{N}}(\lambda |\text{q}_{\mathsf{N}},\text{r}_{\mathsf{N}})\cdots 
\mathsf{L}_{0n+2}(\lambda |\text{q}_{n+2},\text{r}_{n+2})\mathsf{L}%
_{0n+1}(\lambda |\text{q}_{n+1},\text{r}_{n+1}),
\end{align}%
from which defining:%
\begin{eqnarray}
\mathsf{U}_{n}^{-1} &=&\bar{\mathsf{S}}_{1}\bar{\mathsf{S}}_{2}\dots \bar{%
\mathsf{S}}_{n-1} \\
&=&\mathsf{T}_{\lambda _{\text{r}_{1}},(\text{r}_{1}\text{$|$}\{\text{q}_{n},%
\text{r}_{n}\})}^{{\small \text{chP}}}\mathsf{\hat{T}}_{\lambda _{\text{q}%
_{1}},(\text{q}_{1}\text{$|$}\{\text{q}_{n},\text{r}_{n}\})}^{{\small \text{%
chP}}}\dots \mathsf{T}_{\lambda _{\text{r}_{n-1}},(\text{r}_{n-1}\text{$|$}\{%
\text{q}_{n},\text{r}_{n}\})}^{{\small \text{chP}}}\mathsf{\hat{T}}_{\lambda
_{\text{q}_{n-1}},(\text{q}_{n-1}\text{$|$}\{\text{q}_{n},\text{r}_{n}\})}^{%
{\small \text{chP}}}\text{,}
\end{eqnarray}
$\mathsf{U}_{n}$ surely satisfies the equation $\left( \ref{NJG-BS-ChP-Def-Un}\right) $
which defines the propagator.
\end{proof}

It is worth noticing that the eigenvalues of the two chP-transfer matrices
on the eigenstates of the $\tau _{2}$ transfer matrix are characterized
according to the discussion made in Section \ref{NJG-BS-ChP-SOV-Char-chP}, then the
eigenvalues of $\mathsf{U}_{m}$ are also known. Moreover, let us point out
that:%
\begin{equation}
\lambda _{\text{q}_{n}}=i\left( q\frac{\mathbbm{a}_{n}\beta _{n}}{\alpha _{n}%
\mathbbm{b}_{n}}\right) ^{1/2},\text{ \ \ \ }\lambda _{\text{r}_{n}}=i\left(
q\frac{\mathbbm{c}_{n}\beta _{n}}{\alpha _{n}\mathbbm{d}_{n}}\right) ^{1/2},
\end{equation}%
i.e. we are computing the Q-operators, $\mathsf{T}_{\lambda _{\text{r}%
_{n}}}^{{\small \text{chP}}}\mathsf{\hat{T}}_{\lambda _{\text{q}_{n}}}^{%
{\small \text{chP}}}$, in the zeros of the quantum determinant of the $\tau
_{2}$-model. In the case of self-adjoint representations on trivial curves
(like for sine-Gordon model) we have up to an overall constant:%
\begin{equation}
\mathsf{U}_{m}^{-1}\equiv \mathsf{Q}_{\lambda _{\text{r}_{1}}}\mathsf{Q}%
_{\lambda _{\text{q}_{1}}^{\ast }}\cdots \mathsf{Q}_{\lambda _{\text{r}%
_{m-1}}}\mathsf{Q}_{\lambda _{\text{q}_{m-1}}^{\ast }}.
\end{equation}%
The case of Bethe anzatz representations correspond to the case q$_{n}=$r$%
_{n}$, i.e. the two zeros of the quantum determinant coincide up to $p$%
-roots of units. In this case and in the homogeneous case we reproduce the
known result of \cite{NJG-BS-ChP-TTF83} for the propagator.

\section{Representation of local operators by separate variables}

The results on the scalar product formulae define one of the main steps to compute matrix elements of local operators. The other one is to reconstruct local operators by using the generators of the
Yang-Baxter algebra, namely to invert the map from the local operators in the
Lax matrices to the monodromy matrix elements. This inverse problem solution makes possible to compute the action of local operators on transfer matrix eigenstates in this way leading to the determination of form factors of local operators once the scalar product formulae are used.

In \cite{NJG-BS-ChP-KMT99} the first solution of this inverse problem has been obtained for the XXZ spin 1/2 chain and then in \cite{NJG-BS-ChP-MT00} it has been generalized to all fundamental lattice models having isomorphic auxiliary and local quantum spaces characterized by a Lax operator matrix coinciding with the permutation operator for a special value of the spectral parameter. This
reconstruction can be also used for non-fundamental lattice models, as derived in \cite{NJG-BS-ChP-MT00} for the higher spin XXX chains by using the fusion
procedure \cite{NJG-BS-ChP-KRS81}. For the Bazhanov-Stroganov model we still don't know how to achieve this type of
reconstruction and the known results reduce to those given
by T. Oota \cite{NJG-BS-ChP-Oota03}. However, Oota's results lead only to reconstruct some local operators of the Bazhanov-Stroganov model. We will explain in this section how to complete the Oota's reconstruction for all the local operators of the Bazhanov-Stroganov model associated to the most general cyclic representations of the 6-vertex Yang-Baxter algebra. The procedure  developed here is the natural generalization to these representations of the one for the special subclass presented in our previous paper \cite{NJG-BS-ChP-GMN12-SG}. The new technical tools required to handle these general representations will be also introduced in the next subsections.

\subsection{Reconstruction of a class of local operators}

The results of Oota's paper \cite{NJG-BS-ChP-Oota03} are here reproduced for the more general cyclic representations associated to the the Bazhanov-Stroganov model; this leads
to the reconstruction of a subclass of local operators. In terms of quantum projectors, when computed in the zeros $\mu _{n,\pm }$ of the quantum determinant, the Lax operator $\mathsf{L}_{n}(\lambda )$ has the following factorization:
\begin{equation}
\mathsf{L}_{n}(\mu _{n,+})\equiv \left( 
\begin{array}{c}
\left( \mathsf{L}_{n}\right) _{12}\mathsf{u}_{n}^{-1/2}f_{n} \\ 
\left( \mathsf{L}_{n}\right) _{21}\mathsf{u}_{n}^{1/2}f_{n}^{-1}%
\end{array}%
\right) \left( 
\begin{array}{cc}
\mathsf{u}_{n}^{-1/2}f_{n} & \mathsf{u}_{n}^{1/2}f_{n}^{-1}%
\end{array}%
\right) ,  \label{NJG-BS-ChP-L-factorization+}
\end{equation}%
\begin{equation}
\mathsf{L}_{n}(\mu _{n,-})\equiv \left( 
\begin{array}{c}
g_{n}\mathsf{u}_{n}^{1/2} \\ 
g_{n}^{-1}\mathsf{u}_{n}^{-1/2}%
\end{array}%
\right) \left( 
\begin{array}{cc}
g_{n}\mathsf{u}_{n}^{1/2}\left( \mathsf{L}_{n}\right) _{21} & g_{n}^{-1}%
\mathsf{u}_{n}^{-1/2}\left( \mathsf{L}_{n}\right) _{12}%
\end{array}%
\right) ,  \label{NJG-BS-ChP-L-factorization-}
\end{equation}%
where $\left( \mathsf{L}_{n}\right) _{ij}$ stays for the matrix element $i,j$
of the Lax operator and:%
\begin{equation}
f_{n}\equiv \left( -\frac{\alpha _{n}\beta _{n}}{\mathbbm{a}_{n}\mathbbm{b}%
_{n}}\right) ^{1/4},\text{ \ \ \ \ }g_{n}\equiv \left( -\frac{\alpha
_{n}\beta _{n}}{\mathbbm{c}_{n}\mathbbm{d}_{n}}\right) ^{1/4}.
\end{equation}%
These factorizations properties were used by Oota's to reconstruct local operators as it follows:
\begin{proposition}
The following reconstructions of local operators hold: 
\begin{align}
\mathsf{u}_{n}^{-1}& =\left( -\frac{\mathbbm{a}_{n}\mathbbm{b}_{n}}{\alpha
_{n}\beta _{n}}\right) ^{1/2}\mathsf{U}_{n}\mathsf{B}^{-1}(\mu _{n,+})%
\mathsf{A}(\mu _{n,+})\mathsf{U}_{n}^{-1}=\left( -\frac{\mathbbm{a}_{n}%
\mathbbm{b}_{n}}{\alpha _{n}\beta _{n}}\right) ^{1/2}\mathsf{U}_{n}\mathsf{D}%
^{-1}(\mu _{n,+})\mathsf{C}(\mu _{n,+})\mathsf{U}_{n}^{-1},  \label{NJG-BS-ChP-IPS-1} \\
&  \notag \\
\alpha _{0,n}& =\mathsf{U}_{n}\mathsf{A}^{-1}(\mu _{n,-})\mathsf{B}(\mu
_{n,-})\mathsf{U}_{n}^{-1}=\mathsf{U}_{n}\mathsf{C}^{-1}(\mu _{n,-})\mathsf{D%
}(\mu _{n,-})\mathsf{U}_{n}^{-1}.  \label{NJG-BS-ChP-IPS-2}
\end{align}
where we have defined:
\begin{equation}
\alpha _{0,n}\equiv \left( \frac{-\mathbbm{c}_{n}\mathbbm{b}_{n}^{2}}{\alpha
_{n}\beta _{n}\mathbbm{d}_{n}}\right) ^{1/2}\left( \frac{1+q^{-1}(\mathbbm{a}%
_{n}/\mathbbm{b}_{n})\mathsf{v}_{n}^{2}}{1+q^{-1}(\mathbbm{c}_{n}/\mathbbm{d}%
_{n})\mathsf{v}_{n}^{2}}\right) \mathsf{u}_{n}.
\end{equation}
\end{proposition}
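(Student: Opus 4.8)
The plan is to combine the two tools just assembled: the propagator $\mathsf{U}_n$, which by \rf{Def-Un} cyclically reorders the chain so that $\mathsf{L}_n$ becomes the rightmost factor of the reordered monodromy matrix, and the rank-one factorizations \rf{L-factorization+}--\rf{L-factorization-} of $\mathsf{L}_n(\mu_{n,\pm})$. Writing $\widetilde{\mathsf{M}}(\lambda)\equiv\mathsf{U}_n\mathsf{M}(\lambda)\mathsf{U}_n^{-1}$, its entries $\widetilde{\mathsf{A}},\widetilde{\mathsf{B}},\widetilde{\mathsf{C}},\widetilde{\mathsf{D}}$ are those of $\mathsf{L}_{n-1}(\lambda)\cdots\mathsf{L}_1(\lambda)\mathsf{L}_{\mathsf{N}}(\lambda)\cdots\mathsf{L}_n(\lambda)$. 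The key point is that the ``bulk'' factor $\mathsf{L}_{n-1}(\lambda)\cdots\mathsf{L}_{n+1}(\lambda)$ acts only on sites different from $n$, and since it multiplies the rank-one operator $\mathsf{L}_n(\mu_{n,\pm})$ from the left, it cancels in appropriately chosen ratios of the entries of $\widetilde{\mathsf{M}}(\mu_{n,\pm})$; this is exactly what turns a nonlocal expression into a single-site operator.

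First I would evaluate at $\lambda=\mu_{n,+}$. Denoting the bulk by $\left(\begin{smallmatrix}\mathcal{A}&\mathcal{B}\\\mathcal{C}&\mathcal{D}\end{smallmatrix}\right)$ and using the factorization \rf{L-factorization+} of $\mathsf{L}_n(\mu_{n,+})$ as a column $\binom{X_n}{Y_n}$ times a row $(P_n,\,Q_n)$ with $P_n=\mathsf{u}_n^{-1/2}f_n$ and $Q_n=\mathsf{u}_n^{1/2}f_n^{-1}$, block multiplication in the auxiliary space gives $\widetilde{\mathsf{A}}(\mu_{n,+})=(\mathcal{A}X_n+\mathcal{B}Y_n)P_n$ and $\widetilde{\mathsf{B}}(\mu_{n,+})=(\mathcal{A}X_n+\mathcal{B}Y_n)Q_n$. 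The common left factor cancels, yielding $\widetilde{\mathsf{B}}^{-1}(\mu_{n,+})\widetilde{\mathsf{A}}(\mu_{n,+})=Q_n^{-1}P_n=f_n^2\,\mathsf{u}_n^{-1}$, and the lower row gives identically $\widetilde{\mathsf{D}}^{-1}(\mu_{n,+})\widetilde{\mathsf{C}}(\mu_{n,+})=Q_n^{-1}P_n$. Since conjugation by $\mathsf{U}_n$ intertwines $\mathsf{B}^{-1}(\mu_{n,+})\mathsf{A}(\mu_{n,+})$ with $\widetilde{\mathsf{B}}^{-1}(\mu_{n,+})\widetilde{\mathsf{A}}(\mu_{n,+})$, and $f_n^2=\left(-\frac{\alpha_n\beta_n}{\mathbbm{a}_n\mathbbm{b}_n}\right)^{1/2}$, inverting the scalar prefactor gives precisely \rf{IPS-1}.

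Next I would repeat the argument at $\lambda=\mu_{n,-}$ with the factorization \rf{L-factorization-}, whose right factor has components $P_n'=g_n\mathsf{u}_n^{1/2}(\mathsf{L}_n)_{21}$ and $Q_n'=g_n^{-1}\mathsf{u}_n^{-1/2}(\mathsf{L}_n)_{12}$. The identical cancellation of the bulk gives $\widetilde{\mathsf{A}}^{-1}(\mu_{n,-})\widetilde{\mathsf{B}}(\mu_{n,-})=\widetilde{\mathsf{C}}^{-1}(\mu_{n,-})\widetilde{\mathsf{D}}(\mu_{n,-})=(P_n')^{-1}Q_n'=g_n^{-2}(\mathsf{L}_n)_{21}^{-1}\mathsf{u}_n^{-1}(\mathsf{L}_n)_{12}$. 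Substituting the explicit off-diagonal entries $(\mathsf{L}_n)_{12}=\mathsf{u}_n(q^{-1/2}\mathbbm{a}_n\mathsf{v}_n+q^{1/2}\mathbbm{b}_n\mathsf{v}_n^{-1})$ and $(\mathsf{L}_n)_{21}=\mathsf{u}_n^{-1}(q^{1/2}\mathbbm{c}_n\mathsf{v}_n+q^{-1/2}\mathbbm{d}_n\mathsf{v}_n^{-1})$, the factors of $\mathsf{u}_n$ collapse and one commutes the surviving $\mathsf{u}_n$ through the $\mathsf{v}_n$-polynomial using $\mathsf{u}_n\mathsf{v}_n=q\mathsf{v}_n\mathsf{u}_n$. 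Numerator and denominator then depend on $\mathsf{v}_n$ alone and hence commute, and after clearing the $\mathsf{v}_n^{-1}$ one obtains a rational function of $\mathsf{v}_n^2$ times $\mathsf{u}_n$, which is $\alpha_{0,n}$, the scalar prefactor collapsing to $g_n^{-2}\mathbbm{b}_n/\mathbbm{d}_n=\left(-\frac{\mathbbm{c}_n\mathbbm{b}_n^2}{\alpha_n\beta_n\mathbbm{d}_n}\right)^{1/2}$; this is \rf{IPS-2}.

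The conceptual heart of the proof, and its only genuinely nontrivial step, is the interplay between the propagator and the quantum-determinant zeros: bringing $\mathsf{L}_n$ to the boundary of the reordered product is what makes the rank-one degeneracy at $\mu_{n,\pm}$ exploitable, so that the entire bulk cancels and a purely local operator survives. Everything else is bookkeeping --- verifying the block factorization of $\widetilde{\mathsf{M}}(\mu_{n,\pm})$ and tracking the scalar factors $f_n,g_n$. The one place demanding care is the $\mu_{n,-}$ computation, where the noncommutativity of $\mathsf{u}_n$ and $\mathsf{v}_n$ must be handled explicitly: it is precisely the Weyl relation that generates the powers of $q$ inside $\alpha_{0,n}$, and the parameter constraints $\alpha_n\gamma_n=\mathbbm{a}_n\mathbbm{c}_n$ and $\beta_n\delta_n=\mathbbm{b}_n\mathbbm{d}_n$ are available should the prefactors need recasting.
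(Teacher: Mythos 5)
Your proof is correct and is essentially the argument the paper (following Oota) intends: the rank-one factorizations \rf{L-factorization+}--\rf{L-factorization-} at the quantum-determinant zeros $\mu_{n,\pm}$, combined with the fact that conjugation by $\mathsf{U}_n$ makes $\mathsf{L}_n$ the rightmost factor of the monodromy matrix, so that the common left (bulk) factor cancels in the ratios $\mathsf{B}^{-1}\mathsf{A}$, $\mathsf{D}^{-1}\mathsf{C}$, $\mathsf{A}^{-1}\mathsf{B}$, $\mathsf{C}^{-1}\mathsf{D}$. One bookkeeping remark: if you finish the $\mu_{n,-}$ computation with the paper's Weyl relation $\mathsf{u}_n\mathsf{v}_n=q\,\mathsf{v}_n\mathsf{u}_n$, you find
\begin{equation*}
(P_n')^{-1}Q_n'=\left(\frac{-\mathbbm{c}_n\mathbbm{b}_n^2}{\alpha_n\beta_n\mathbbm{d}_n}\right)^{1/2}
\frac{1+q\,(\mathbbm{a}_n/\mathbbm{b}_n)\mathsf{v}_n^2}{1+q\,(\mathbbm{c}_n/\mathbbm{d}_n)\mathsf{v}_n^2}\,\mathsf{u}_n
=\left(\frac{-\mathbbm{c}_n\mathbbm{b}_n^2}{\alpha_n\beta_n\mathbbm{d}_n}\right)^{1/2}
\mathsf{u}_n\,\frac{1+q^{-1}(\mathbbm{a}_n/\mathbbm{b}_n)\mathsf{v}_n^2}{1+q^{-1}(\mathbbm{c}_n/\mathbbm{d}_n)\mathsf{v}_n^2},
\end{equation*}
i.e.\ the factor $q^{-1}$ appears only when $\mathsf{u}_n$ sits to the \emph{left} of the rational function, whereas the proposition displays $q^{-1}$ with $\mathsf{u}_n$ on the right; this is an ordering typo in the statement rather than a flaw in your argument, since the operator you obtain is precisely the one consistent with the factor $q^{-2k-1}$ used later in the proof of Proposition \ref{NJG-BS-ChP-IPS}.
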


Oota's formulae (\ref{NJG-BS-ChP-IPS-1})-(\ref{NJG-BS-ChP-IPS-2}) clearly allow to reconstruct all
the powers $\mathsf{u}_{n}^{-k}=\mathsf{U}_{n}\left( \mathsf{B}^{-1}(\mu
_{n,+})\mathsf{A}(\mu _{n,+})\right) ^{k}\mathsf{U}_{n}^{-1}$; however the local operators $\mathsf{v}_{n}^k$ do not admit direct reconstructions as only rational functions like $\left( 1+q^{-1}(\mathbbm{a}_{n}/\mathbbm{b}_{n})\mathsf{v}_{n}^{2}\right)
/\left( 1+q^{-1}(\mathbbm{c}_{n}/\mathbbm{d}_{n})\mathsf{v}_{n}^{2}\right) $
are reconstructed.

\subsection{Reconstruction of all local operators}

Here, we solve the inverse problem for the local operators $\mathsf{v}_{n}^k$ in this way completing the reconstruction of local operators. The cyclicity of the
representations of the Bazhanov-Stroganov model will be the main property here used. Let us define the
following local operators: 
\begin{equation}
\beta_{k,n} \equiv \left( \mathsf{U}_n \mathsf{A}^{-1}(\mu_{n,+}) \mathsf{B}%
(\mu_{n,+}) \mathsf{U}_n^{-1} \right)^{-k-1} \alpha_{0,n} \left( \mathsf{U}%
_n \mathsf{A}^{-1}(\mu_{n,+}) \mathsf{B}(\mu_{n,+}) \mathsf{U}_n^{-1}
\right)^{k}
\end{equation}
then it holds:

\begin{proposition}
\label{NJG-BS-ChP-IPS}For the cyclic representations of the Bazhanov-Stroganov model we consider, the local
operators $\mathsf{v}_{n}^{2k}$ have the following reconstructions: 
\begin{equation}
\mathsf{v}_{n}^{2k}=\frac{1}{p}\left( -\frac{\mathbbm{d}_{n}}{\mathbbm{c}_{n}%
}\right) ^{k}\frac{1+(\mathbbm{c}_{n}/\mathbbm{d}_{n})^{p}}{(\mathbbm{b}_{n}%
\mathbbm{c}_{n}/\mathbbm{a}_{n}\mathbbm{d}_{n})^{1/2}-(\mathbbm{a}_{n}%
\mathbbm{d}_{n}/\mathbbm{b}_{n}\mathbbm{c}_{n})^{1/2}}%
\sum_{a=0}^{p-1}q^{k(2a+1)}\beta _{a,n}\,.  \label{NJG-BS-ChP-IPS-3}
\end{equation}
\end{proposition}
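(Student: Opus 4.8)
The plan is to localize the whole computation to the single Weyl pair $(\mathsf{u}_n,\mathsf{v}_n)$ and then invert a finite Fourier transform. First I would extract from the preceding proposition the two facts I need. Inverting the first of Oota's identities shows that the operator conjugated in the definition of $\beta_{k,n}$ is a scalar multiple of $\mathsf{u}_n$,
\[
\mathsf{U}_n\mathsf{A}^{-1}(\mu_{n,+})\mathsf{B}(\mu_{n,+})\mathsf{U}_n^{-1}=\left(-\frac{\mathbbm{a}_n\mathbbm{b}_n}{\alpha_n\beta_n}\right)^{1/2}\mathsf{u}_n\equiv\lambda_n\,\mathsf{u}_n,
\]
while $\alpha_{0,n}$ is, by its definition, a rational function of $\mathsf{v}_n^2$ multiplied on the right by $\mathsf{u}_n$. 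Hence all the $\mathsf{U}_n$'s and all but one scalar collapse, giving $\beta_{k,n}=\lambda_n^{-1}\mathsf{u}_n^{-k-1}\,\alpha_{0,n}\,\mathsf{u}_n^{k}$.

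Next I would push the powers of $\mathsf{u}_n$ through the rational function using the Weyl relation in the form $\mathsf{u}_n^{-m}\mathsf{v}_n^{2}\mathsf{u}_n^{m}=q^{-2m}\mathsf{v}_n^{2}$. The stray $\mathsf{u}_n$ inside $\alpha_{0,n}$ combines with $\mathsf{u}_n^{k}$ and is cancelled by $\mathsf{u}_n^{-k-1}$, so that $\beta_{k,n}$ becomes a pure function of $\mathsf{v}_n^2$,
\[
\beta_{k,n}=\left(\frac{\mathbbm{b}_n\mathbbm{c}_n}{\mathbbm{a}_n\mathbbm{d}_n}\right)^{1/2}\frac{1+q^{-2k-3}(\mathbbm{a}_n/\mathbbm{b}_n)\mathsf{v}_n^{2}}{1+q^{-2k-3}(\mathbbm{c}_n/\mathbbm{d}_n)\mathsf{v}_n^{2}}.
\]
Each $\beta_{k,n}$ is thus diagonal in the $\mathsf{v}_n$-eigenbasis, and since $p$ is odd the spectrum of $\mathsf{v}_n^2$ is again the full set of $p$-th roots of unity, so the monomials $\{\mathsf{v}_n^{2k}\}_{k=0}^{p-1}$ form a basis of the commutative algebra generated by $\mathsf{v}_n$. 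The task reduces to inverting the $p$ scalar functions $\beta_{0,n},\dots,\beta_{p-1,n}$ to recover these monomials.

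The core of the argument is this inversion, carried out by partial fractions together with cyclicity. Writing $x=\mathbbm{a}_n/\mathbbm{b}_n$, $y=\mathbbm{c}_n/\mathbbm{d}_n$ and $\xi=q^{-2k-3}$, I would split $\tfrac{1+\xi x\mathsf{v}_n^2}{1+\xi y\mathsf{v}_n^2}=\tfrac{x}{y}+(1-\tfrac{x}{y})\tfrac{1}{1+\xi y\mathsf{v}_n^2}$; because $(\mathsf{v}_n^2)^p=1$ and $q^p=1$, the operator $\xi y\mathsf{v}_n^2$ has $p$-th power equal to the scalar $y^p$, so the geometric series terminates and, using that $p$ is odd, $\tfrac{1}{1+\xi y\mathsf{v}_n^2}=\tfrac{1}{1+y^p}\sum_{j=0}^{p-1}(-1)^j(\xi y\mathsf{v}_n^2)^j$. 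Substituting this and forming the weighted sum $\sum_{a=0}^{p-1}q^{k(2a+1)}\beta_{a,n}$, the orthogonality relation $\sum_{a=0}^{p-1}q^{2a(k-j)}=p\,\delta_{k,j}$ on $\mathbb{Z}_p$ projects onto the single mode $j=k$ and leaves exactly one monomial $\mathsf{v}_n^{2k}$. Collecting the constants, together with the identity $(\mathbbm{b}_n\mathbbm{c}_n/\mathbbm{a}_n\mathbbm{d}_n)^{1/2}\bigl(1-\mathbbm{a}_n\mathbbm{d}_n/\mathbbm{b}_n\mathbbm{c}_n\bigr)=(\mathbbm{b}_n\mathbbm{c}_n/\mathbbm{a}_n\mathbbm{d}_n)^{1/2}-(\mathbbm{a}_n\mathbbm{d}_n/\mathbbm{b}_n\mathbbm{c}_n)^{1/2}$ which produces the denominator of the claimed prefactor, then yields \rf{IPS-3}.

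I expect the difficulty to be bookkeeping rather than conceptual: one must keep the square-root branches of the structure constants consistent between Oota's reconstruction and the final normalization, and track the powers of $q$ through the conjugation and the two nested finite sums so that the Fourier projection lands on $\mathsf{v}_n^{2k}$ with precisely the stated scalar. The index $k\equiv 0$ deserves separate attention, since there the constant piece $x/y$ of the partial fraction also contributes to the $a$-sum and must be checked to reproduce $\mathsf{v}_n^{0}=\mathbb{I}$.
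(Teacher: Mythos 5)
Your proposal follows the same route as the paper's own proof: identify the conjugating operator with a scalar multiple of $\mathsf{u}_n$ via Oota's reconstruction, use the Weyl relations to reduce $\beta_{k,n}$ to a rational function of $\mathsf{v}_n^2$ alone, expand that function as a terminating geometric series using $\mathsf{v}_n^{2p}=1$, $q^p=1$ and $p$ odd, and invert by a discrete Fourier transform over $\mathbb{Z}_p$. Your partial-fraction splitting is only a cosmetic variant of the paper's direct expansion of $\bigl(1+q^{-2k-1}(\mathbbm{c}_n/\mathbbm{d}_n)\mathsf{v}_n^2\bigr)^{-1}$: your constant term and your mode coefficients reproduce exactly the two pieces of the paper's expansion of $\beta_{k,n}$, and your remark that $k=0$ needs separate treatment corresponds precisely to the paper's sum rule $\sum_{a=0}^{p-1}\beta_{a,n}$.

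The point you flagged as bookkeeping is, however, a genuine unresolved discrepancy, because it changes the final constant. Starting from the paper's definition $\alpha_{0,n}=s_n\,\frac{1+q^{-1}(\mathbbm{a}_n/\mathbbm{b}_n)\mathsf{v}_n^2}{1+q^{-1}(\mathbbm{c}_n/\mathbbm{d}_n)\mathsf{v}_n^2}\,\mathsf{u}_n$ (with $\mathsf{u}_n$ on the \emph{right}), your computation correctly produces the argument $q^{-2k-3}$, whereas the paper's proof proceeds with $q^{-2k-1}$, which is what one obtains if $\mathsf{u}_n$ stands to the \emph{left} of the rational function. This shift propagates to the end: with your exponent the relevant orthogonality sum is $\sum_{a=0}^{p-1}q^{k(2a+1)}q^{-j(2a+3)}=p\,q^{-2k}\delta_{j,k}$, so the weighted sum $\sum_{a}q^{k(2a+1)}\beta_{a,n}$ projects onto $q^{-2k}\mathsf{v}_n^{2k}$ rather than $\mathsf{v}_n^{2k}$, and the right-hand side of \rf{IPS-3} acquires an extra factor $q^{2k}$ (equivalently, the sum must run over $\beta_{a-1,n}$). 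Hence your final assertion that the projection ``leaves exactly one monomial $\mathsf{v}_n^{2k}$'' with precisely the stated prefactor does not follow from your own displayed formula for $\beta_{k,n}$: to close the argument you must either justify the operator ordering that yields $q^{-2k-1}$ (commuting the stray $\mathsf{u}_n$ of $\alpha_{0,n}$ to the left \emph{before} conjugating, which is the convention implicit in the paper's proof), or carry the factor $q^{2k}$ consistently into the final formula.
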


\begin{proof}
By definition in our cyclic representations the powers $\mathsf{u}_{n}^{p}$ and $\mathsf{v}_{n}^{p}$ are central elements of the algebra coinciding with 1. Then it holds: 
\begin{equation}
\frac{1+(\mathbbm{c}_{n}/\mathbbm{d}_{n})^{p}}{1+q^{-2k-1}(\mathbbm{c}_{n}/%
\mathbbm{d}_{n})\mathsf{v}_{n}^{2}}=\sum_{i=0}^{p-1}\left( -q^{-2k-1}(%
\mathbbm{c}_{n}/\mathbbm{d}_{n})\mathsf{v}_n^{2}\right) ^{i}\,.
\end{equation}%
The previous formula and the reconstruction (\ref{NJG-BS-ChP-IPS-1})-(\ref{NJG-BS-ChP-IPS-2})
allow to rewrite $\beta _{k,n}$ as the following finite sum in
powers of $\mathsf{v}_{n}^{2}$: 
\begin{eqnarray}
\beta _{k,n} &=&\frac{(\mathbbm{b}_{n}\mathbbm{c}_{n}/\mathbbm{a}_{n}%
\mathbbm{d}_{n})^{1/2}+(\mathbbm{a}_{n}\mathbbm{d}_{n}/\mathbbm{b}_{n}%
\mathbbm{c}_{n})^{1/2}(\mathbbm{c}_{n}/\mathbbm{d}_{n})^{p}}{1+(\mathbbm{c}%
_{n}/\mathbbm{d}_{n})^{p}}  \notag \\
&+&\frac{(\mathbbm{b}_{n}\mathbbm{c}_{n}/\mathbbm{a}_{n}\mathbbm{d}%
_{n})^{1/2}-(\mathbbm{a}_{n}\mathbbm{d}_{n}/\mathbbm{b}_{n}\mathbbm{c}%
_{n})^{1/2}}{1+(\mathbbm{c}_{n}/\mathbbm{d}_{n})^{p}}%
\sum_{a=1}^{p-1}(-1)^{a}q^{-a(2k+1)}\left( \frac{\mathbbm{c}_{n}}{\mathbbm{d}%
_{n}}\right) ^{a}\mathsf{v}_{n}^{2a},
\end{eqnarray}
then, taking a discrete Fourier transformation, the
reconstructions (\ref{NJG-BS-ChP-IPS-3}) is obtained together with the following sum rules 
\begin{equation}
\sum_{a=0}^{p-1}\beta _{a,n}=p\frac{(\mathbbm{b}_{n}\mathbbm{c}_{n}/%
\mathbbm{a}_{n}\mathbbm{d}_{n})^{1/2}+(\mathbbm{a}_{n}\mathbbm{d}_{n}/%
\mathbbm{b}_{n}\mathbbm{c}_{n})^{1/2}(\mathbbm{c}_{n}/\mathbbm{d}_{n})^{p}}{%
1+(\mathbbm{c}_{n}/\mathbbm{d}_{n})^{p}}.
\end{equation}
\end{proof}

The formulae in (\ref{NJG-BS-ChP-IPS-3}) lead to the reconstruction of all the powers $\mathsf{v}_{n}^{k}$ for $k\in \{1,...,p-1\}$ as it follows from the identities $\mathsf{v}_{n}^{k}=\mathsf{v}
_{n}^{2h}$, for $k=2h-p$ odd integer smaller than $p$. Hence, as desired, all the local operators of the cyclic representations of the Bazhanov-Stroganov model are reconstructed by using the above proposition and the Oota's reconstructions.

\subsection{Separate variables representations of all local operators}

To compute the action of the local operators $\mathsf{v}_n^k$ and $%
\mathsf{u}_n^k$ on eigenstates of the transfer matrix and
then their form factors we need before to determine their
SOV-representations. These SOV-representations are obtained from the above solution of the inverse problem. To this
aim we first prove two lemmas that  are important to overcome the combinatorial
problem associated to the computation of the SOV-representations of the local
operators (\ref{NJG-BS-ChP-IPS-1})-(\ref{NJG-BS-ChP-IPS-2}).

Let us introduce, the coordinate operators $\hat{\boldsymbol\eta}_i$ for $i\in
\{1,...,\mathsf{N}\}$, $\hat{\boldsymbol\eta}_{\mathsf{A}}^{(\pm )}$ and $%
\hat{\boldsymbol\eta}_{\mathsf{D}}^{(\pm )}$ such that:%
\begin{equation}
\langle \boldsymbol\eta |{\hat{\boldsymbol\eta}_{i}}\equiv \eta _{i}\langle \boldsymbol\eta |,\text{ \ \ 
}\langle \boldsymbol\eta |\hat{\boldsymbol\eta}_{\mathsf{A}}^{(\pm )}\equiv \eta _{\mathsf{A%
}}^{(\pm )}\langle \boldsymbol\eta |,\text{\ \ }\langle \boldsymbol\eta |\hat{\boldsymbol\eta}_{%
\mathsf{D}}^{(\pm )}\equiv \eta _{\mathsf{D}}^{(\pm )}\langle \boldsymbol\eta |,
\end{equation}%
and the operator $\mathsf{T}_{i}^{\pm }$ are defined on the left anf right
SOV-representations by\footnote{%
It is worth remarking that from the definition of the SOV-representations of
the generators of the Yang-Baxter algebra, given in Section \ref{NJG-BS-ChP-SOV-Gen},
and the definitions in (\ref{NJG-BS-ChP-Def-T}), it follows that the SOV-representation
of the charge $\Theta $ coincides with the operator $\mathsf{T}_{\mathsf{N}%
}^{-}$ .}:%
\begin{equation}
\langle \boldsymbol\eta |\mathsf{T}_{i}^{\pm }\equiv \langle q^{\pm \delta _{i}}\boldsymbol\eta |,%
\text{ \ \ \ \ }\mathsf{T}_{i}^{\pm }|\boldsymbol\eta \rangle \equiv |q^{\mp \delta
_{i}}\boldsymbol\eta \rangle  \label{NJG-BS-ChP-Def-T}
\end{equation}%
and clearly the commutation relations hold:%
\begin{equation}
\mathsf{T}_{i}^{\pm }\hat{\boldsymbol\eta}_{j}=q^{\pm \delta _{i,j}}%
\hat{\boldsymbol\eta}_{j}\mathsf{T}_{i}^{\pm }.
\end{equation}

\begin{lemma}
We have the expansion 
\begin{equation}
\left( \hat{\boldsymbol\Omega} (f)\right) ^{k}=\sum_{\substack{ \vec{\alpha}=\{\alpha
_{1}\dots \alpha _{N-1}\}  \\ \sum \alpha _{i}=k}}{\left[ \!\!\!%
\begin{array}{c}
k \\ 
\vec{\alpha}%
\end{array}%
\!\!\!\right] }\prod_{i=1}^{\mathsf{N}-1}\left( \prod_{h=0}^{\alpha _{i}-1}f(q^{-h}%
\hat{\boldsymbol\eta}_{i})\prod_{j\neq i}\frac{1}{q^{\alpha _{j}-h}%
\hat{\boldsymbol\eta}_{i}/\hat{\boldsymbol\eta}_{j}-q^{-\alpha _{j}+h}%
\hat{\boldsymbol\eta}_{j}/\hat{\boldsymbol\eta}_{i}}\right) \prod_{i=1}^{\mathsf{N}-1}\left( 
\mathsf{T}_{i}^{-}\right) ^{\alpha _{i}}
\end{equation}%
for the operator
\begin{equation}
\hat{\boldsymbol\Omega} (f)=\sum_{a=1}^{\mathsf{N}-1}\prod_{b\neq a}\frac{1}{\hat{\boldsymbol\eta}_{a}/%
\hat{\boldsymbol\eta}_{b}-\hat{\boldsymbol\eta}_{b}/\hat{\boldsymbol\eta}_{a}}f(%
\hat{\boldsymbol\eta}_{a})\mathsf{T}_{a}^{-},
\end{equation}%
with
\begin{equation}
\left[ \!\!\!%
\begin{array}{c}
k \\ 
\vec{\alpha}%
\end{array}%
\!\!\!\right] \equiv \frac{\lbrack k]!}{\prod_{j=1}^{\mathsf{N}-1}\left[
\alpha _{j}\right] !},\text{ }[k]!\equiv \lbrack k][k-1]\cdots \lbrack 1],%
\text{ }[a]\equiv \frac{q^{a}-q^{-a}}{q-q^{-1}}.
\end{equation}
\end{lemma}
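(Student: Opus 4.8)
I propose to prove the expansion by induction on $k$. The base case $k=1$ is just a rewriting of the definition of $\hat{\boldsymbol\Omega}(f)$: the only multi-indices with $\sum_i\alpha_i=1$ are the unit indices $\mathbf{e}_i$, for which the $q$-multinomial coefficient is $1$, the inner $f$-product reduces to the single factor $f(\hat{\boldsymbol\eta}_i)$, and the double product over $h,j$ collapses to $\prod_{j\neq i}(\hat{\boldsymbol\eta}_i/\hat{\boldsymbol\eta}_j-\hat{\boldsymbol\eta}_j/\hat{\boldsymbol\eta}_i)^{-1}$, so the right-hand side is exactly $\hat{\boldsymbol\Omega}(f)$. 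For the inductive step I would write $(\hat{\boldsymbol\Omega}(f))^{k+1}=\hat{\boldsymbol\Omega}(f)\,(\hat{\boldsymbol\Omega}(f))^{k}$, insert the inductive hypothesis, and carry each elementary summand $\prod_{b\neq a}(\hat{\boldsymbol\eta}_a/\hat{\boldsymbol\eta}_b-\hat{\boldsymbol\eta}_b/\hat{\boldsymbol\eta}_a)^{-1}f(\hat{\boldsymbol\eta}_a)\mathsf{T}_a^{-}$ of $\hat{\boldsymbol\Omega}(f)$ to the right through the diagonal coefficient of each $\vec{\alpha}$-term. The only tools needed are the commutation rule $\mathsf{T}_a^{-}g(\hat{\boldsymbol\eta})=g(\hat{\boldsymbol\eta})\big|_{\hat{\boldsymbol\eta}_a\to q^{-1}\hat{\boldsymbol\eta}_a}\,\mathsf{T}_a^{-}$ and the mutual commutativity of the $\mathsf{T}_i^{-}$. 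The net effect of appending $\mathsf{T}_a^{-}$ is to send $\vec{\alpha}\mapsto\vec{\beta}=\vec{\alpha}+\mathbf{e}_a$ while replacing $\hat{\boldsymbol\eta}_a$ by $q^{-1}\hat{\boldsymbol\eta}_a$ in the coefficient.

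Collecting, for a fixed $\vec{\beta}$ with $\sum_i\beta_i=k+1$, the coefficient of $\prod_i(\mathsf{T}_i^{-})^{\beta_i}$ is a sum over those $a$ with $\beta_a\geq 1$, and matching it against the claimed $\vec{\beta}$-coefficient reduces the whole step to a $c$-number identity among the coefficient functions. I would first dispose of the $f$-dependence: the prefactor $f(\hat{\boldsymbol\eta}_a)$ supplied by $\hat{\boldsymbol\Omega}(f)$ combines with the shifted factor $\prod_{h=0}^{\beta_a-2}f(q^{-h-1}\hat{\boldsymbol\eta}_a)$ to rebuild exactly $\prod_{h=0}^{\beta_a-1}f(q^{-h}\hat{\boldsymbol\eta}_a)$, so the complete $f$-product of the target term factors out and is independent of $a$. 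Writing the building block $\phi_{ij}(m)\equiv q^{m}\hat{\boldsymbol\eta}_i/\hat{\boldsymbol\eta}_j-q^{-m}\hat{\boldsymbol\eta}_j/\hat{\boldsymbol\eta}_i$ and letting $R_{\vec{\beta}}$ denote the double product of the $\phi^{-1}$ factors, and using $[k]!=[k][k-1]!$ (so that the $q$-multinomial coefficient for $(k,\vec{\beta}-\mathbf{e}_a)$ equals $[\beta_a]/[k+1]$ times that for $(k+1,\vec{\beta})$), the step reduces to
\begin{equation}
\Big[\sum_i\beta_i\Big]=\sum_{a:\,\beta_a\geq 1}[\beta_a]\,\Big(\prod_{b\neq a}\phi_{ab}(0)^{-1}\Big)\frac{R_{\vec{\beta}-\mathbf{e}_a}\big|_{\hat{\boldsymbol\eta}_a\to q^{-1}\hat{\boldsymbol\eta}_a}}{R_{\vec{\beta}}}.
\end{equation}

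The first genuine step is to simplify the ratio. Organizing $R_{\vec{\beta}}$ over ordered pairs $(i,j)$ and using the shift rules $\phi_{aj}(m)\mapsto\phi_{aj}(m-1)$ and $\phi_{ia}(m)\mapsto\phi_{ia}(m+1)$, every pair not involving $a$ cancels, the pairs $(i,a)$ also cancel, and only the pairs $(a,j)$ survive, giving $R_{\vec{\beta}-\mathbf{e}_a}\big|_{\hat{\boldsymbol\eta}_a\to q^{-1}\hat{\boldsymbol\eta}_a}/R_{\vec{\beta}}=\prod_{j\neq a}\phi_{aj}(\beta_j)$. Since $\phi_{ab}(\beta_b)/\phi_{ab}(0)=(q^{\beta_b}u_a-q^{-\beta_b}u_b)/(u_a-u_b)$ with $u_a\equiv\hat{\boldsymbol\eta}_a^{2}$, the whole induction rests on the single scalar identity
\begin{equation}
\Big[\sum_i\beta_i\Big]=\sum_{a:\,\beta_a\geq 1}[\beta_a]\prod_{b\neq a}\frac{q^{\beta_b}u_a-q^{-\beta_b}u_b}{u_a-u_b}.
\end{equation}

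This last identity is the real content, and I would prove it by partial fractions. Consider the degree-zero rational function
\begin{equation}
G(z)=\prod_b\frac{q^{\beta_b}z-q^{-\beta_b}u_b}{z-u_b},
\end{equation}
which has simple poles at $z=u_a$, limits $G(\infty)=q^{\sum_b\beta_b}$ and $G(0)=q^{-\sum_b\beta_b}$, and residue at $z=u_a$ equal to $u_a(q^{\beta_a}-q^{-\beta_a})\prod_{b\neq a}(q^{\beta_b}u_a-q^{-\beta_b}u_b)/(u_a-u_b)$. Writing $G(z)=q^{\sum_b\beta_b}+\sum_a \mathrm{Res}_{u_a}G/(z-u_a)$ and evaluating at $z=0$ gives $\sum_a \mathrm{Res}_{u_a}G/u_a=q^{\sum_b\beta_b}-q^{-\sum_b\beta_b}$; dividing by $q-q^{-1}$ and using $[\beta_a]=(q^{\beta_a}-q^{-\beta_a})/(q-q^{-1})$ turns the left-hand sum into the right-hand side above, equal to $[\sum_i\beta_i]$. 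The main obstacle is thus not the induction machinery but the bookkeeping that collapses the ordered-pair products to $\prod_{j\neq a}\phi_{aj}(\beta_j)$; once that is in hand, the residue computation closes the argument cleanly.
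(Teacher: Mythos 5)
Your proof is correct and takes essentially the same route as the paper: induction on $k$, with the inductive step reduced by the same cancellation bookkeeping to the scalar identity $\bigl[\sum_i \beta_i\bigr]=\sum_{a}[\beta_a]\prod_{b\neq a}\frac{q^{\beta_b}u_a-q^{-\beta_b}u_b}{u_a-u_b}$, which is exactly the ($q\to q^{-1}$ image of the) identity the paper isolates and likewise proves through the analytic structure of a rational function. The only differences are cosmetic — you append the extra factor of $\hat{\boldsymbol\Omega}(f)$ on the left rather than the right, and your $G(z)$ is essentially the reciprocal of the paper's $g(z)=\frac{1}{z}\prod_i\frac{z-\eta_i^2}{z-q^{-2\alpha_i}\eta_i^2}$ — while your explicit residue computation supplies the detail the paper only sketches.
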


\begin{proof}
The lemma holds for $k=1$ and we prove it by induction for $k>1$. Let us
take $\mathsf{N}-1$ integers $\alpha _{i}$: 
\begin{equation}
\sum_{i=1}^{\mathsf{N}-1}\alpha _{i}=k,
\end{equation}%
from which we define the set of integers $I=\{i\in \{1,...,\mathsf{N}-1
\}:\alpha _{i}\neq 0\}$ and $\hat{\bold C}_{{\vec{\alpha}}}^{(k)}$\ as the operator coefficient
of $\prod \mathsf{T}_{i}^{-\alpha _{i}}$ (put to the left) in the expansion of the $k$-th
power of $\hat{\boldsymbol\Omega} (f)$. By writing $(\hat{\boldsymbol\Omega} (f))^{k}=(\hat{\boldsymbol\Omega} (f))^{k-1}\hat{\boldsymbol\Omega}
(f)$ and by using the induction hypothesis for the power $k-1$ of $\hat{\boldsymbol\Omega}
(f) $, we have:
\begin{align}
\hat{\bold C}_{{\vec{\alpha}}}^{(k)}& =\sum_{a\in I}{\left[ \!\!\!%
\begin{array}{c}
{k-1} \\ 
{\vec{\alpha}-\vec{\delta}_{a}}%
\end{array}%
\!\!\!\right] }\prod_{j=1}^{\mathsf{N}-1}\prod_{h=0}^{\alpha _{j}-\delta
_{a,j}-1}\left( f(q^{-h}\hat{\boldsymbol\eta}_{j})\times \prod_{i\neq
j,i=1}^{\mathsf{N}-1}\frac{1}{{q^{\alpha _{i}-\delta _{a,i}-h}\hat{\boldsymbol\eta}_{j}}/{%
\hat{\boldsymbol\eta}_{i}}-{\hat{\boldsymbol\eta}_{i}}/{q^{\alpha _{i}-\delta
_{a,i}-h}\hat{\boldsymbol\eta}_{j}}}\right)  \notag \\
& \times f(\hat{\boldsymbol\eta}_{a}q^{-\alpha _{a}+1})\prod_{i\in I\backslash
\{a\}}\frac{1}{{q^{\alpha _{a}-\alpha _{i}-1}\hat{\boldsymbol\eta}_{i}}/{%
\hat{\boldsymbol\eta}_{a}}-{\hat{\boldsymbol\eta}_{a}}/{q^{\alpha _{a}-\alpha _{i}-1}%
\hat{\boldsymbol\eta}_{i}}},
\end{align}%
with $\vec{\delta}_{a}\equiv (\delta _{1,a},\dots ,\delta _{\mathsf{N},a})$.
The first term in r.h.s. is the coefficient of $\prod \mathsf{T}%
_{i}^{-\alpha _{i}+\delta _{a,i}}$ in $(\hat{\boldsymbol\Omega} (f))^{k-1}$ and the second is
the coefficient of $\mathsf{T}_{a}^{-1}$ in $\hat{\boldsymbol\Omega} (f)$ once the
commutations between $\prod \mathsf{T}_{i}^{-\alpha _{i}+\delta _{a,i}}$ and
the $\hat{\boldsymbol\eta}_{i}$ have been performed. Hence we get: 
\begin{align}
\hat{\bold C}_{{\vec{\alpha}}}^{(k)}&=\frac{[k-1]!}{\prod [\alpha _{i}]!} \left(
\prod_{j=1}^{\mathsf{N}-1}\prod_{h=0}^{\alpha _{j}-1}(\prod_{i\neq j,i=1}^{\mathsf{N}-1}\frac{1%
}{{q^{\alpha _{i}-h}\hat{\boldsymbol\eta}_{j}}/{\hat{\boldsymbol\eta}_{i}}-{%
\hat{\boldsymbol\eta}_{i}}/{q^{\alpha _{i}-h}\hat{\boldsymbol\eta}_{j}}})f(q^{-h}%
\hat{\boldsymbol\eta}_{j})\right)  \notag \\
& \times \sum_{a\in I}([\alpha _{a}]\prod_{i\in I\backslash \{a\}}\frac{{%
q^{\alpha _{a}}\hat{\boldsymbol\eta}_{i}}/{\hat{\boldsymbol\eta}_{a}}-{%
\hat{\boldsymbol\eta}_{a}}/{q^{\alpha _{a}}\hat{\boldsymbol\eta}_{i}}}{{q^{\alpha
_{a}-\alpha _{i}}\hat{\boldsymbol\eta}_{i}}/{\hat{\boldsymbol\eta}_{a}}-{%
\hat{\boldsymbol\eta}_{a}}/{q^{\alpha _{a}-\alpha _{i}}\hat{\boldsymbol\eta}_{i}}}),
\end{align}%
which leads to our result by using the relation:%
\begin{equation}
\sum_{a=1}^{n}[\alpha _{a}]\prod_{i\neq a}\frac{{q^{\alpha _{a}}\eta _{i}}/{%
\eta _{a}}-{\eta _{a}}/{q^{\alpha _{a}}\eta _{i}}}{{q^{\alpha _{a}-\alpha
_{i}}\eta _{i}}/{\eta _{a}}-{\eta _{a}}/{q^{\alpha _{a}-\alpha _{i}}\eta _{i}%
}}=\left[ \sum_{a=1}^{n}\alpha _{a}\right] .
\end{equation}%
Note that the above formula holds for any $n$, for any set of numbers $\eta _{i}$ and
for any non-negative integers $\alpha _{i}$. This is proven by studying  the analytical properties of the function 
\begin{equation}
g(z)=\frac{1}{z}\prod \frac{z-\eta _{i}^{2}}{z-q^{-2\alpha _{i}}\eta _{i}^{2}%
}.
\end{equation}
\end{proof}

\begin{lemma}
The SOV-representation of the powers of $B^{-1}(\lambda )A(\lambda )$ are
given by 
\begin{equation}
\left( B^{-1}(\lambda )A(\lambda )\right) ^{m}=\sum_{i+j+k=m}\frac{(-1)^{j}}{%
\hat{\boldsymbol\eta}_{\mathsf{N}}^{m}}\left( \lambda \prod_{a=1}^{\mathsf{N}-1}\hat{\boldsymbol\eta}_{a}\right) ^{i-j}a_{+}^{i}a_{-}^{j}q^{\frac{i(i-1)-j(j-1)}{2}}\left[ \!\!\!%
\begin{array}{c}
m \\ 
i,j,k%
\end{array}%
\!\!\!\right] \hat{\boldsymbol\sigma} (\lambda )^{k}\mathsf{T}_{\mathsf{N}}^{j-i}
\end{equation}%
with 
\begin{equation}
\hat{\boldsymbol\sigma} (\lambda )=\sum_{a=1}^{\mathsf{N}-1}\prod_{b\neq a}\frac{1}{\hat{\boldsymbol\eta}_{a}/\hat{\boldsymbol\eta}_{b}-\hat{\boldsymbol\eta}_{b}/\hat{\boldsymbol\eta}_{a}}\frac{%
\mathtt{a}^{(SOV)}(\hat{\boldsymbol\eta}_{a})}{\lambda /\hat{\boldsymbol\eta}_{a}-%
\hat{\boldsymbol\eta}_{a}/\lambda }\mathsf{T}_{a}^{-},
\end{equation}%
where the powers of $\hat{\boldsymbol\sigma} (\lambda )$ are given by the previous lemma.
\end{lemma}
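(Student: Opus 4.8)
The plan is to read off the SOV representation of $\mathsf{B}^{-1}(\lambda)\mathsf{A}(\lambda)$ and then expand its $m$-th power by a non-commutative $q$-multinomial theorem. Since $\mathsf{B}(\lambda)$ is diagonal on the $\mathsf{B}$-eigenbasis with eigenvalue $\eta_{\mathsf{N}}b_{\boldsymbol\eta}(\lambda)$ by \rf{Bdef}, one has $\langle\boldsymbol\eta|\mathsf{B}^{-1}(\lambda)=(\eta_{\mathsf{N}}b_{\boldsymbol\eta}(\lambda))^{-1}\langle\boldsymbol\eta|$; composing with the action \rf{SAdef} of $\mathsf{A}(\lambda)$ and cancelling the common factor $b_{\boldsymbol\eta}(\lambda)$ (recall that the sum term of $\mathsf{A}$ carries $\prod_{b\neq a}(\lambda/\eta_b-\eta_b/\lambda)=b_{\boldsymbol\eta}(\lambda)/(\lambda/\eta_a-\eta_a/\lambda)$), I obtain the operator identity $\mathsf{B}^{-1}(\lambda)\mathsf{A}(\lambda)=P_{+}+P_{-}+\hat{\boldsymbol\eta}_{\mathsf{N}}^{-1}\hat{\boldsymbol\sigma}(\lambda)$, where, using the coefficients $\eta_{\mathsf{A}}^{(\pm)}$ of \rf{ZAD-asymp} and the shift operators of \rf{Def-T},
\[
P_{+}=a_{+}\lambda\,\Big(\textstyle\prod_{a=1}^{\mathsf{N}-1}\hat{\boldsymbol\eta}_{a}\Big)\,\hat{\boldsymbol\eta}_{\mathsf{N}}^{-1}\mathsf{T}_{\mathsf{N}}^{-},\qquad P_{-}=(-1)^{\mathsf{N}-1}a_{-}\lambda^{-1}\Big(\textstyle\prod_{a=1}^{\mathsf{N}-1}\hat{\boldsymbol\eta}_{a}^{-1}\Big)\,\hat{\boldsymbol\eta}_{\mathsf{N}}^{-1}\mathsf{T}_{\mathsf{N}}^{+},
\]
and the third term is exactly $\hat{\boldsymbol\eta}_{\mathsf{N}}^{-1}$ times the operator $\hat{\boldsymbol\sigma}(\lambda)$ of the statement. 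The factor $(-1)^{\mathsf{N}-1}$ carried by $\eta_{\mathsf{A}}^{(-)}$ is what will account for the sign $(-1)^{j}$.

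The crucial second step is to show that these three pieces obey a \emph{uniform} set of $q$-commutation relations. Writing $C_{+}=a_{+}\lambda\prod_{a<\mathsf{N}}\hat{\boldsymbol\eta}_{a}$ and $C_{-}=(-1)^{\mathsf{N}-1}a_{-}\lambda^{-1}\prod_{a<\mathsf{N}}\hat{\boldsymbol\eta}_{a}^{-1}$ (both diagonal and supported on sites $1,\dots,\mathsf{N}-1$), and using only $\mathsf{T}_{i}^{\pm}\hat{\boldsymbol\eta}_{j}=q^{\pm\delta_{i,j}}\hat{\boldsymbol\eta}_{j}\mathsf{T}_{i}^{\pm}$ together with $\mathsf{T}_{\mathsf{N}}^{-}\mathsf{T}_{\mathsf{N}}^{+}=1$, I first verify $C_{\pm}\hat{\boldsymbol\sigma}(\lambda)=q^{\pm1}\hat{\boldsymbol\sigma}(\lambda)C_{\pm}$ (only the $\hat{\boldsymbol\eta}_{a}$ in $C_{\pm}$ feel the shift $\mathsf{T}_{a}^{-}$ inside $\hat{\boldsymbol\sigma}$), and then
\[
P_{+}P_{-}=q^{2}P_{-}P_{+},\qquad P_{+}(\hat{\boldsymbol\eta}_{\mathsf{N}}^{-1}\hat{\boldsymbol\sigma})=q^{2}(\hat{\boldsymbol\eta}_{\mathsf{N}}^{-1}\hat{\boldsymbol\sigma})P_{+},\qquad (\hat{\boldsymbol\eta}_{\mathsf{N}}^{-1}\hat{\boldsymbol\sigma})P_{-}=q^{2}P_{-}(\hat{\boldsymbol\eta}_{\mathsf{N}}^{-1}\hat{\boldsymbol\sigma}).
\]
That all three exponents equal the \emph{same} $q^{2}$ for the ordering $P_{+}\prec\hat{\boldsymbol\eta}_{\mathsf{N}}^{-1}\hat{\boldsymbol\sigma}\prec P_{-}$ is the structural fact that will collapse the answer into a single $q$-multinomial rather than a product of mismatched Gaussian factors.

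With these relations, the three operators generate a quantum-affine-space type algebra, so I would invoke the non-commutative multinomial theorem: for $X_{s}X_{r}=q^{-2}X_{r}X_{s}$ ($r<s$) one has $(X_{1}+X_{2}+X_{3})^{m}=\sum_{i+k+j=m}\binom{m}{i,k,j}_{q^{-2}}X_{1}^{i}X_{2}^{k}X_{3}^{j}$ with the ordinary Gaussian multinomial in base $q^{-2}$, proved by induction on $m$ through the $q$-Pascal recursion exactly as the previous Lemma treated $\hat{\boldsymbol\Omega}(f)^{k}$. It then remains to normal-order each monomial $P_{+}^{i}(\hat{\boldsymbol\eta}_{\mathsf{N}}^{-1}\hat{\boldsymbol\sigma})^{k}P_{-}^{j}$: iterating $\mathsf{T}_{\mathsf{N}}^{\mp}\hat{\boldsymbol\eta}_{\mathsf{N}}^{-1}=q^{\pm1}\hat{\boldsymbol\eta}_{\mathsf{N}}^{-1}\mathsf{T}_{\mathsf{N}}^{\mp}$ gives $(\hat{\boldsymbol\eta}_{\mathsf{N}}^{-1}\mathsf{T}_{\mathsf{N}}^{-})^{i}=q^{i(i-1)/2}\hat{\boldsymbol\eta}_{\mathsf{N}}^{-i}(\mathsf{T}_{\mathsf{N}}^{-})^{i}$ and $(\hat{\boldsymbol\eta}_{\mathsf{N}}^{-1}\mathsf{T}_{\mathsf{N}}^{+})^{j}=q^{-j(j-1)/2}\hat{\boldsymbol\eta}_{\mathsf{N}}^{-j}(\mathsf{T}_{\mathsf{N}}^{+})^{j}$, producing the prefactor $q^{(i(i-1)-j(j-1))/2}$; collecting every $\hat{\boldsymbol\eta}_{\mathsf{N}}^{-1}$ to the far left yields $\hat{\boldsymbol\eta}_{\mathsf{N}}^{-m}$, the surviving shifts combine into $\mathsf{T}_{\mathsf{N}}^{j-i}$, $C_{+}^{i}C_{-}^{j}$ gives $a_{+}^{i}a_{-}^{j}(\lambda\prod_{a<\mathsf{N}}\hat{\boldsymbol\eta}_{a})^{i-j}$ (with the sign $(-1)^{j}$), and $(\hat{\boldsymbol\eta}_{\mathsf{N}}^{-1}\hat{\boldsymbol\sigma})^{k}=\hat{\boldsymbol\eta}_{\mathsf{N}}^{-k}\hat{\boldsymbol\sigma}^{k}$ leaves $\hat{\boldsymbol\sigma}(\lambda)^{k}$ untouched, to be expanded by the previous Lemma if desired.

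The step I expect to be the genuine obstacle is the $q$-power bookkeeping in this last reduction. Normal-ordering $P_{+}^{i}(\hat{\boldsymbol\eta}_{\mathsf{N}}^{-1}\hat{\boldsymbol\sigma})^{k}P_{-}^{j}$ generates cross factors $q^{+(ik+ij+kj)}$ — namely $q^{i(k+j)}$ from carrying $(\mathsf{T}_{\mathsf{N}}^{-})^{i}$ through the intermediate $\hat{\boldsymbol\eta}_{\mathsf{N}}$-powers and $q^{kj}$ from $\hat{\boldsymbol\sigma}^{k}C_{-}^{j}=q^{kj}C_{-}^{j}\hat{\boldsymbol\sigma}^{k}$ — while the passage from the Gaussian coefficient to the symmetric $q$-multinomial, via $[n]!=q^{-n(n-1)/2}[n]^{\mathrm{ord}}_{q^{2}}!$, supplies exactly $\binom{m}{i,k,j}_{q^{-2}}=q^{-(ik+ij+kj)}\,[m]!/([i]![k]![j]!)$. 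These two contributions cancel, leaving precisely the announced $[m]!/([i]![k]![j]!)$ and the clean prefactor $q^{(i(i-1)-j(j-1))/2}$. Verifying this cancellation (and checking that the base of the $q$-multinomial is $q^{-2}$, not $q^{2}$, which is what makes it work) is the only delicate computation; everything else is the systematic use of the Weyl relations among the $\hat{\boldsymbol\eta}$ and $\mathsf{T}$ operators.
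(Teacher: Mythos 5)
Your proposal is correct and is essentially the paper's own proof: the paper decomposes $\mathsf{B}^{-1}(\lambda)\mathsf{A}(\lambda)$ into the same three operators (your $P_{+}$, $P_{-}$ and $\hat{\boldsymbol\eta}_{\mathsf{N}}^{-1}\hat{\boldsymbol\sigma}(\lambda)$), records precisely your $q^{\pm 2}$-commutation relations among them, and expands the $m$-th power via an inductively proven $q$-multinomial identity followed by normal-ordering, exactly as you do. The only difference is bookkeeping: you order the monomials as $P_{+}^{i}(\hat{\boldsymbol\eta}_{\mathsf{N}}^{-1}\hat{\boldsymbol\sigma})^{k}P_{-}^{j}$ so that the standard base-$q^{-2}$ Gaussian multinomial theorem applies verbatim and the cross factors $q^{ij+ik+jk}$ cancel, whereas the paper puts the $P_{-}$-powers before the $\hat{\boldsymbol\sigma}$-powers and compensates with the explicit prefactor $q^{k(j-i)-ij}$ multiplying the symmetric $q$-multinomial — the same cancellation you verify, in a trivially equivalent ordering.
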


\begin{proof}
Let $\hat{\bold a}$, $\hat{\bold b}$ and $\hat{\bold c}$ be three operators satisfying the relations 
\begin{equation}
\hat{\bold b} \hat{\bold a}=q^{-2}\hat{\bold a} \hat{\bold b}, \hat{\bold c} \hat{\bold b}=q^{2}\hat{\bold b} \hat{\bold c}, \hat{\bold c} \hat{\bold a}=q^{-2}\hat{\bold a} \hat{\bold c}  \label{NJG-BS-ChP-Comb. power requ.}
\end{equation}%
It is easy to prove by induction that 
\begin{equation}
\left( \hat{\bold a}+\hat{\bold b}+\hat{\bold c}\right) ^{m}=\sum_{i+j+k=m}q^{k(j-i)-ij}\left[ \!\!\!%
\begin{array}{c}
m \\ 
i,j,k%
\end{array}%
\!\!\!\right] \hat{\bold a}^{i}\hat{\bold b}^{j}\hat{\bold c}^{k}  \label{NJG-BS-ChP-Comb. power}
\end{equation}%
The SOV-representation of $B^{-1}(\lambda )A(\lambda )$ is the sum of three
main terms, 
\begin{eqnarray}
\hat{\bold a} &=&\frac{\prod_{i=1}^{\mathsf{N}-1}\hat{\boldsymbol\eta}_{i}}{\hat{\boldsymbol\eta}_{\mathsf{N}}}\lambda a_{+}\mathsf{T}_{\mathsf{N}}^{-} \\
\hat{\bold b} &=&-\frac{\prod_{i=1}^{\mathsf{N}-1}\hat{\boldsymbol\eta}_{i}^{-1}}{\hat{\boldsymbol\eta}_{\mathsf{N}}}\lambda ^{-1}a_{-}\mathsf{T}_{\mathsf{N}}^{+} \\
\hat{\bold c} &=&\frac{1}{\hat{\boldsymbol\eta}_{\mathsf{N}}}\sum_{a=1}^{\mathsf{N}-1}\prod_{b\neq a}\frac{1}{%
\hat{\boldsymbol\eta}_{a}/\hat{\boldsymbol\eta}_{b}-
\hat{\boldsymbol\eta}_{b}/\hat{\boldsymbol\eta}_{a}}
\frac{\mathtt{a}^{(SOV)}(\hat{\boldsymbol\eta}_{a})}{\lambda /\hat{\boldsymbol\eta}_{a}-\hat{\boldsymbol\eta}_{a}/\lambda }\mathsf{T}%
_{a}^{-}
\end{eqnarray}%
Since they satisfy the commutation relations (\ref{NJG-BS-ChP-Comb. power requ.}), the
power of $B^{-1}(\lambda )A(\lambda )$ can be computed using the formula (%
\ref{NJG-BS-ChP-Comb. power}), which ends the proof.
\end{proof}

\textbf{Remark 1.} The quantum multinomials have the property 
\begin{equation}
{\left[ \!\!\!%
\begin{array}{c}
p \\ 
{\vec{\alpha}}%
\end{array}%
\!\!\!\right] }=\left\{ 
\begin{array}{l}
1\text{ \ \ if }\exists i\in \{1,...,\mathsf{N}-1\}:\alpha _{i}=p\delta _{a,i}%
\text{\ }\forall a\in \{1,...,\mathsf{N}-1\}, \\ 
0\text{ \ \ otherwise,}%
\end{array}%
\right.
\end{equation}%
This property yields that the power $p$ of $\mathsf{B}^{-1}(\lambda )\mathsf{%
A}(\lambda )$\ is a central element of the Yang-Baxter algebra and it reads: 
\begin{equation}
(\mathsf{B}^{-1}(\lambda )\mathsf{A}(\lambda ))^{p}=\mathcal{B}(\Lambda
)^{-1}\mathcal{A}(\Lambda ),  \label{NJG-BS-ChP-A/B^p-a}
\end{equation}%
result which is consistent with the commutations relations: 
\begin{equation}
\mathsf{B}^{-1}(q\lambda )\mathsf{A}(q\lambda )=\mathsf{A}(\lambda )\mathsf{B%
}^{-1}(\lambda ).
\end{equation}

The two previous lemmas allow to expand the SOV-representation of the
operators $\mathsf{u}_n^k$. However, they do not apply directly to the
expansion of $\mathsf{v}_n$. The aim of the following lemma is to transform
the operators $\beta_{k,n}$, whose linear combination gives the powers of $%
\mathsf{v}_n$.

\begin{lemma}
The operator $\beta_{k,n}$ has the following expansion: 
\begin{eqnarray}
\beta_{k,n}&=&\frac{\mathcal{B}(\mu_{n,-}^p)}{\mathcal{A}(\mu_{n,-}^p) 
\mathcal{B}(\mu_{n,+}^p)} \frac{\mu_{n,+}/\mu_{n,-}-\mu_{n,-}/\mu_{n,+}}{q^k
\mu_{n,+}/\mu_{n,-}-q^{-k}\mu_{n,-}/\mu_{n,+}}  \notag \\
&&\times B^{-1}(\mu_{n,+}) A(\mu_{n,+}) \prod_{i=1}^{p-k} B(q^{-i}
\mu_{n,+}) \left(B^{-1}(\mu_{n,-}) A(\mu_{n,-}) \right)^{p-1}
\prod_{i=p-k+1}^{p} B(q^{-i} \mu_{n,+})  \notag \\
&+&\frac{q^k-q^{-k}}{q^k \mu_{n,+}/\mu_{n,-}-q^{-k}\mu_{n,-}/\mu_{n,+}}
\end{eqnarray}
\end{lemma}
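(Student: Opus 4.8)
The plan is to reduce the claimed identity to a purely algebraic statement inside the Yang-Baxter algebra and then process it with three ingredients: Oota's reconstructions, the centrality of the $p$-th power of $\mathsf{B}^{-1}(\lambda)\mathsf{A}(\lambda)$, and the elementary exchange relation $\mathsf{B}^{-1}(q\lambda)\mathsf{A}(q\lambda)=\mathsf{A}(\lambda)\mathsf{B}^{-1}(\lambda)$ recorded in Remark 1. Throughout I write $\mu_{\pm}\equiv\mu_{n,\pm}$ and $P(\lambda)\equiv\mathsf{B}^{-1}(\lambda)\mathsf{A}(\lambda)$. First I would unfold the definition of $\beta_{k,n}$ using (\ref{NJG-BS-ChP-IPS-1})-(\ref{NJG-BS-ChP-IPS-2}). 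Since $\alpha_{0,n}=\mathsf{U}_n \mathsf{A}^{-1}(\mu_-)\mathsf{B}(\mu_-)\mathsf{U}_n^{-1}$ and $\mathsf{U}_n\mathsf{A}^{-1}(\mu_+)\mathsf{B}(\mu_+)\mathsf{U}_n^{-1}$ is the conjugated operator appearing in $\beta_{k,n}$, the inner propagators cancel pairwise between consecutive factors and one is left with
\[
\beta_{k,n}=\mathsf{U}_n\,P(\mu_+)^{k+1}\,P(\mu_-)^{-1}\,P(\mu_+)^{-k}\,\mathsf{U}_n^{-1}.
\]
As the averages $\mathcal{A},\mathcal{B}$ are central and pass freely through $\mathsf{U}_n$, the content of the lemma is the claimed form for the core operator $\Xi_{k,n}\equiv P(\mu_+)^{k+1}P(\mu_-)^{-1}P(\mu_+)^{-k}$, the outer conjugation being reinstated at the end.

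Next I would rewrite the middle factor. By (\ref{NJG-BS-ChP-A/B^p-a}) the $p$-th power $P(\lambda)^p=\mathcal{B}(\Lambda)^{-1}\mathcal{A}(\Lambda)$ is central, hence $P(\mu_-)^{-1}=\frac{\mathcal{B}(\mu_-^p)}{\mathcal{A}(\mu_-^p)}P(\mu_-)^{p-1}$; this already accounts for the prefactor $\mathcal{B}(\mu_-^p)/\mathcal{A}(\mu_-^p)$ and for the block $(\mathsf{B}^{-1}(\mu_-)\mathsf{A}(\mu_-))^{p-1}$ in the statement. For the surrounding powers I would iterate the relation $\mathsf{B}^{-1}(q\lambda)\mathsf{A}(q\lambda)=\mathsf{A}(\lambda)\mathsf{B}^{-1}(\lambda)$ together with the commutativity of the $\mathsf{A}$-family and of the $\mathsf{B}$-family; an easy induction then gives $P(\mu_+)^m=\prod_{i=0}^{m-1}\mathsf{B}^{-1}(q^i\mu_+)\prod_{i=0}^{m-1}\mathsf{A}(q^i\mu_+)$, and similarly for negative powers. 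Collecting all the resulting $\mathsf{B}$-factors over a full period $i=1,\dots,p$ and using that $\{q^i\}=\{q^{-i}\}$ as $p$-th roots of unity assembles them into $\mathcal{B}(\mu_+^p)$, producing the normalization $1/\mathcal{B}(\mu_+^p)$ together with the two split products $\prod_{i=1}^{p-k}\mathsf{B}(q^{-i}\mu_+)$ and $\prod_{i=p-k+1}^{p}\mathsf{B}(q^{-i}\mu_+)$ flanking the central block.

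The hard part is the last step: commuting the block $P(\mu_-)^{p-1}$, which carries the spectral parameter $\mu_-$, past the $\mathsf{A}(q^i\mu_+)$ factors. Here the only non-trivial input is the two-term Yang-Baxter exchange relation between $\mathsf{A}$ and $\mathsf{B}$ at different arguments, which at the quantum-determinant zeros $\mu_\pm$ (where the Lax operator factorizes as in (\ref{NJG-BS-ChP-L-factorization+})-(\ref{NJG-BS-ChP-L-factorization-})) closes back onto these same operators. Each elementary exchange splits into a \emph{wanted} operatorial contribution and an \emph{unwanted} scalar contact term; I expect the product of the wanted coefficients, evaluated at the ratios $q^i\mu_+/\mu_-$, to telescope into the single rational factor $(\mu_+/\mu_--\mu_-/\mu_+)/(q^k\mu_+/\mu_--q^{-k}\mu_-/\mu_+)$, while the contact terms sum to the additive scalar $(q^k-q^{-k})/(q^k\mu_+/\mu_--q^{-k}\mu_-/\mu_+)$. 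The coefficient bookkeeping in this collapse is the main obstacle, and I would control it by a generating-function argument of the same type used in the Lemma expanding $(\hat{\boldsymbol\Omega}(f))^k$ above. As an independent check that the correction is indeed a pure scalar, one may reduce $\Xi_{k,n}$ to a rational function of $\mathsf{v}_n^2$ through the local expressions for $\mathsf{u}_n$ and $\alpha_{0,n}$ and the cyclic partial-fraction identity already exploited in Proposition \ref{NJG-BS-ChP-IPS}; the additive scalar then appears precisely as the constant term of that partial-fraction decomposition.
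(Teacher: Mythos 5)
Your reduction of the lemma to a statement about the core operator and your first two steps coincide with the paper's own proof: the unfolding $\beta_{k,n}=\mathsf{U}_{n}\,P(\mu_{n,+})^{k+1}P(\mu_{n,-})^{-1}P(\mu_{n,+})^{-k}\,\mathsf{U}_{n}^{-1}$ with $P(\lambda)=\mathsf{B}^{-1}(\lambda)\mathsf{A}(\lambda)$, the use of (\ref{NJG-BS-ChP-A/B^p-a}) to trade $P(\mu_{n,-})^{-1}$ for $\frac{\mathcal{B}(\mu_{n,-}^{p})}{\mathcal{A}(\mu_{n,-}^{p})}P(\mu_{n,-})^{p-1}$, and the conversion of the $\mu_{n,+}$-powers into ordered $\mathsf{B}$- and $\mathsf{A}$-strings by induction on $\mathsf{B}(\lambda)\mathsf{A}(q^{-1}\lambda)=\mathsf{A}(\lambda)\mathsf{B}(q^{-1}\lambda)$ are precisely the paper's preliminary relations.

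The gap sits at the step you yourself flag as the main obstacle, and the mechanism you propose for it would fail. The whole lemma rests on one two-term identity, which the paper states exactly: moving the \emph{single} operator $\mathsf{B}(\mu_{n,-})$ (taken from $\mathsf{A}^{-1}(\mu_{n,-})\mathsf{B}(\mu_{n,-})$ \emph{before} rewriting it as a $(p-1)$-st power) through the string $\prod_{i=1}^{p-k}\mathsf{A}(q^{-i}\mu_{n,+})$ yields
\begin{align*}
\mathsf{B}(\mu_{n,-})\prod_{i=1}^{p-k}\mathsf{A}(q^{-i}\mu_{n,+})
&=\frac{q^{k}-q^{-k}}{q^{k}x-q^{-k}x^{-1}}\,
\mathsf{A}(\mu_{n,-})\prod_{i=1}^{p-k-1}\mathsf{A}(q^{-i}\mu_{n,+})\,\mathsf{B}(q^{k}\mu_{n,+})
+\frac{x-x^{-1}}{q^{k}x-q^{-k}x^{-1}}\,
\prod_{i=1}^{p-k}\mathsf{A}(q^{-i}\mu_{n,+})\,\mathsf{B}(\mu_{n,-}),
\end{align*}
with $x=\mu_{n,+}/\mu_{n,-}$. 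You claim the wanted coefficient arises as the telescoping product of the elementary wanted coefficients, and that the contact terms are scalars summing to the additive constant. Neither holds. From (\ref{NJG-BS-ChP-AB-Yang-Baxter}) the wanted coefficient at step $i$ is $\frac{q^{-i}x-q^{i}x^{-1}}{q^{-i-1}x-q^{i+1}x^{-1}}$, and its product over $i=1,\dots,p-k$ telescopes to $\frac{q^{-1}x-qx^{-1}}{q^{k-1}x-q^{1-k}x^{-1}}$, which agrees with the required factor $\frac{x-x^{-1}}{q^{k}x-q^{-k}x^{-1}}$ only when $q^{2k}=1$, i.e.\ only for $k=0$ since $q$ is a primitive $p$-th root of unity with $p$ odd. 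The reason is that each elementary contact term is \emph{not} a scalar but an operator of the form $\mathsf{B}(q^{-i}\mu_{n,+})\mathsf{A}(\cdots)$ with $\mathsf{B}$ on the left; normally ordering it (all $\mathsf{B}$'s to the right, as in the identity above) regenerates a multiple of the wanted term --- for a single factor this rests on $(q^{-2}x-q^{2}x^{-1})(x-x^{-1})+(q-q^{-1})^{2}=(q^{-1}x-qx^{-1})^{2}$ --- and it is exactly this feedback that deforms the telescoped product into the correct coefficient and collapses all intermediate contact terms into the single one carrying $\mathsf{B}(q^{k}\mu_{n,+})$. A generating-function argument modelled on the $(\hat{\boldsymbol\Omega}(f))^{k}$ lemma does not capture this, since there the elementary factors commute up to scalars, whereas here wanted and unwanted contributions mix under reordering. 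Finally, the additive term is a scalar not because contact terms ``sum'' to one, but because $\mathsf{B}(q^{k}\mu_{n,+})=\mathsf{B}(q^{-(p-k)}\mu_{n,+})$ completes the $\mathsf{B}$-string so that the sandwich collapses via $\bigl(\mathsf{B}^{-1}(\mu_{n,+})\mathsf{A}(\mu_{n,+})\bigr)^{k+1}\bigl(\mathsf{A}^{-1}(\mu_{n,+})\mathsf{B}(\mu_{n,+})\bigr)^{k+1}=\mathbb{I}$; this also shows why one should commute $\mathsf{B}(\mu_{n,-})$ first and pass to $P(\mu_{n,-})^{p-1}$ afterwards (using that $\mathsf{A}^{-1}(\mu_{n,-})$ commutes with the $\mathsf{A}$-string), rather than pushing the whole block $P(\mu_{n,-})^{p-1}$, which contains $\mathsf{B}^{-1}$'s, through the $\mathsf{A}$'s as you suggest.
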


\begin{proof}
A simple induction on the Yang-Baxter relation $B(\lambda )A(q^{-1}\lambda
)=A(\lambda )B(q^{-1}\lambda )$ shows that 
\begin{equation}
\left( A^{-1}(\lambda )B(\lambda )\right)
^{k}=\prod_{i=1}^{k}B(q^{-i}\lambda )\prod_{i=1}^{k}A^{-1}(q^{-i}\lambda
)=\prod_{i=0}^{k-1}A^{-1}(q^{i}\lambda )\prod_{i=0}^{k-1}B(q^{i}\lambda )\,.
\end{equation}%
From the definition of the average values of operators, we get 
\begin{eqnarray}
\left( A^{-1}(\lambda )B(\lambda )\right) ^{k} &=&\mathcal{A}(\Lambda
)^{-1}\prod_{i=1}^{p-k}A(q^{-i}\lambda )\prod_{i=p-k+1}^{p}B(q^{-i}\lambda )\,,
\\
\left( A^{-1}(\lambda )B(\lambda )\right) ^{-k} &=&\mathcal{B}(\Lambda
)^{-1}\prod_{i=1}^{p-k}B(q^{-i}\lambda )\prod_{i=p-k+1}^{p}A(q^{-i}\lambda )\,.
\end{eqnarray}%
It also yields 
\begin{equation}
\left( A^{-1}(\lambda )B(\lambda )\right) ^{p}=\mathcal{A}^{-1}(\Lambda )%
\mathcal{B}(\Lambda )
\end{equation}%
and 
\begin{equation}
A^{-1}(\lambda )B(\lambda )=\mathcal{A}^{-1}(\Lambda )\mathcal{B}(\Lambda
)\left( B^{-1}(\lambda )A(\lambda )\right) ^{p-1}\,.
\end{equation}%
Standard arguments give the relation 
\begin{eqnarray}
B(\mu _{n,-})\prod_{i=1}^{p-k}A(q^{-i}\mu _{n,+}) &=&\frac{q^{k}-q^{-k}}{q^{k}\mu _{n,+}/\mu _{n,-}-q^{-k}\mu _{n,-}/\mu _{n,+}%
}A(\mu _{n,-})\prod_{i=1}^{p-k-1}A(q^{-i}\mu _{n,+})B(q^{k}\mu _{n,+}) 
\notag\\
&+&\frac{\mu _{n,+}/\mu
_{n,-}-\mu _{n,-}/\mu _{n,+}}{q^{k}\mu _{n,+}/\mu _{n,-}-q^{-k}\mu
_{n,-}/\mu _{n,+}}\prod_{i=1}^{p-k}A(q^{-i}\mu _{n,+})B(\mu _{n,-})\,. 
\end{eqnarray}%
Eventually, the use of these relations proves the lemma.
\end{proof}

\section{Form factors of local operators}
In this section we present the main results of our paper on the form factors of the local operators. One of the main peculiarities emerging in quantum separate variables is a feature of universality in the representation of these dynamical observables. In fact, the comparison between the results presented here for the most general cyclic representations of the 6-vertex Yang-Baxter algebra and those previously derived in our paper \cite{NJG-BS-ChP-GMN12-SG} defines one peculiar and evident instance of this universality.
\subsection{Form factors of $\mathsf{u}_{n}^{-1}$ and $\alpha _{0,n}^{-1}$}
The form factors of some local operators written as single determinants are here provided.

\begin{proposition}
\label{NJG-BS-ChP-FF-Prop1}Let us denote with $\varphi _{n}^{(t_{k})}$ and $\varphi _{n}^{(t_{k^{\prime }}^{\prime
})}$ the eigenvalues of the shift operator $\mathsf{U}_{n}$ respectively on the left $\langle t_{k}|$ and right $|t_{k^{\prime }}^{\prime }\rangle $ eigenstates of the transfer matrix $\tau _{2}(\lambda )$, then the following determinant formula is verified:
\begin{equation}
\langle t_{k}|\mathsf{u}_{n}^{-1}|t_{k^{\prime }}^{\prime }\rangle =\left( -%
\frac{\mathbbm{a}_{n}\mathbbm{b}_{n}}{\alpha _{n}\beta _{n}}\right) ^{1/2}%
\frac{\varphi _{n}^{(t_{k})}}{\varphi _{n}^{(t_{k^{\prime }}^{\prime })}}%
\delta _{k,k^{\prime }-1}\det_{\mathsf{N}-1}(||\mathcal{U}%
_{a,b}^{(t_{k},t_{k^{\prime }}^{\prime })}(\mu _{n,+})||). \label{NJG-BS-ChP-u}
\end{equation}Here, $||%
\mathcal{U}_{a,b}^{(t_{k},t_{k^{\prime }}^{\prime })}(\lambda )||$ is the $(\mathsf{N}-1)\times (\mathsf{N}-1)$ matrix defined by:
\begin{align}
\mathcal{U}_{a,b}^{(t_{k},t_{k^{\prime }}^{\prime })}(\lambda )& \equiv 
\mathcal{M}_{a,b+1/2}^{(t_{k},t_{k^{\prime }}^{\prime })}\text{ \ for \ }%
b\in \{1,...,\mathsf{N}-2\}, \\
\mathcal{U}_{a,\mathsf{N}-1}^{(t_{k},t_{k^{\prime }}^{\prime })}(\lambda )&
\equiv \frac{1}{{\eta }_{\mathsf{N}}^{(0)}}\sum_{h=1}^{p}\frac{\left( {\eta }%
_{a}^{(h)}\right) ^{\mathsf{N}-2}Q_{t_{k^{\prime }}^{\prime }}({\eta }%
_{a}^{(h)})}{\omega _{a}({\eta }_{a}^{(h)})}\left[ \frac{\bar{Q}_{t_{k}}({%
\eta }_{a}^{(h+1)})}{(\lambda /{\eta }_{a}^{(h+1)}-{\eta }%
_{a}^{(h+1)}/\lambda )}\mathtt{\bar{a}}^{(SOV)}({\eta }_{a}^{(h)})\right. 
\notag \\
& +\left. \bar{Q}_{t_{k}}({\eta }_{a}^{(h)})\left( a_{+}\lambda \left( {\eta 
}_{a}^{(h)}\right) ^{\mathsf{N}-1}q^{k^{\prime }}-\frac{a_{-}}{\lambda }%
\left( {\eta }_{a}^{(h)}\right) ^{-(\mathsf{N}-1)}q^{-k^{\prime }}\right) %
\right] .
\end{align}
\end{proposition}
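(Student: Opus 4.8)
The plan is to combine the Oota-type reconstruction of $\mathsf{u}_n^{-1}$ with the explicit left/right SOV actions of the Yang-Baxter generators, and then to collapse the resulting sum to a single determinant exactly as in the scalar-product Proposition of Section 4. First I would invoke the reconstruction (\ref{NJG-BS-ChP-IPS-1}), namely $\mathsf{u}_n^{-1}=\left(-\mathbbm{a}_n\mathbbm{b}_n/\alpha_n\beta_n\right)^{1/2}\mathsf{U}_n\mathsf{B}^{-1}(\mu_{n,+})\mathsf{A}(\mu_{n,+})\mathsf{U}_n^{-1}$, together with the fact that $\langle t_k|$ and $|t_{k'}'\rangle$ are eigenstates of the propagator $\mathsf{U}_n$ with eigenvalues $\varphi_n^{(t_k)}$ and $\varphi_n^{(t_{k'}')}$. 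Letting $\mathsf{U}_n$ act to the left and $\mathsf{U}_n^{-1}$ to the right immediately produces the prefactor $\left(-\mathbbm{a}_n\mathbbm{b}_n/\alpha_n\beta_n\right)^{1/2}\varphi_n^{(t_k)}/\varphi_n^{(t_{k'}')}$ and leaves the core object $\langle t_k|\mathsf{B}^{-1}(\mu_{n,+})\mathsf{A}(\mu_{n,+})|t_{k'}'\rangle$ to be evaluated.

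Next I would let $\mathsf{B}^{-1}(\mu_{n,+})\mathsf{A}(\mu_{n,+})$ act to the left on the SOV expansion of $\langle t_k|$. Since $\langle\boldsymbol\eta|\mathsf{B}^{-1}(\lambda)=\left(\eta_{\mathsf{N}}\,b_{\boldsymbol\eta}(\lambda)\right)^{-1}\langle\boldsymbol\eta|$ is diagonal and $\mathsf{A}$ acts by the shifts (\ref{NJG-BS-ChP-SAdef}), the factor $b_{\boldsymbol\eta}(\mu_{n,+})$ in the leading ($\mathsf{N}$-shift) part of $\mathsf{A}$ cancels against the $b_{\boldsymbol\eta}(\mu_{n,+})^{-1}$ coming from $\mathsf{B}^{-1}$, while the $a$-shift terms keep the characteristic denominator $\left(\mu_{n,+}/\eta_a^{(h+1)}-\eta_a^{(h+1)}/\mu_{n,+}\right)^{-1}$. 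Relabelling the shifted index $h_a\to h_a+1$ turns $\prod_a\bar{Q}_{t_k}(\eta_a^{(h_a)})$ into the dressed weights carrying $\bar{Q}_{t_k}(\eta_a^{(h+1)})$, and the gauge identifications (\ref{NJG-BS-ChP-L-gauge})–(\ref{NJG-BS-ChP-R-gauge}), i.e. $\mathtt{a}^{(SOV)}(\eta_a^{(h+1)})=\mathtt{\bar{a}}^{(SOV)}(\eta_a^{(h)})$, convert the left coefficient into $\mathtt{\bar{a}}^{(SOV)}$; the $\mathsf{N}$-shift part then supplies the complementary $\bar{Q}_{t_k}(\eta_a^{(h)})$ contribution with the asymptotic coefficients $a_{\pm}$ of (\ref{NJG-BS-ChP-ZAD-asymp}). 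These two pieces are exactly the content of the last column $\mathcal{U}_{a,\mathsf{N}-1}^{(t_k,t_{k'}')}(\mu_{n,+})$.

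I would then pair the resulting covector with the right separate state $|t_{k'}'\rangle$, whose weights are the unshifted $Q_{t_{k'}'}(\eta_a^{(h)})$, using the orthogonality (\ref{NJG-BS-ChP-1M_jj}) and the measure (\ref{NJG-BS-ChP-2M_jj}). The summation over the last index $h_{\mathsf{N}}$ collapses to a single Kronecker delta in $k,k'$ — the $\Theta$-charge selection rule of the statement — and simultaneously converts the surviving constants into the powers $q^{\pm k'}$ carried by the last column. The remaining summation over $h_1,\dots,h_{\mathsf{N}-1}$ is carried out precisely as in the proof of the scalar-product Proposition: after the two Vandermonde factors of the states combine against $\langle\boldsymbol\eta_{\mathbf h}|\boldsymbol\eta_{\mathbf h}\rangle$ into a single surviving Vandermonde, its expansion by multilinearity of the determinant in the rows reduces the multiple sum to one $(\mathsf{N}-1)\times(\mathsf{N}-1)$ determinant. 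Its first $\mathsf{N}-2$ columns reproduce the scalar-product entries with the uniform half-integer degree shift $\mathcal{M}_{a,b+1/2}^{(t_k,t_{k'}')}$ — the extra power of $\eta_a$ being supplied by the $\mathsf{A}$-insertion — while the last column is the dressed one identified in the previous step.

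The step I expect to be the main obstacle is this last reorganisation: showing that the full set of shift terms of $\mathsf{A}$ (the two $\mathsf{N}$-shifts and the $\mathsf{N}-1$ single-variable shifts), once summed against the Vandermonde antisymmetrisation, recombine into exactly one modified determinant, and in particular that the induced column-degree shift is uniformly $+1/2$ for the first $\mathsf{N}-2$ columns whereas the non-uniform boundary contributions (the $a_{\pm}$-asymptotics together with the $\mathtt{\bar{a}}^{(SOV)}$ shift term) collect into the single last column $\mathcal{U}_{a,\mathsf{N}-1}$. This is the operator-insertion analogue of the purely combinatorial identity used for the plain scalar product, and it is where the Baxter equations (\ref{NJG-BS-ChP-EQ-Baxter-R}) and (\ref{NJG-BS-ChP-EQ-Baxter-L}) for $Q_{t_{k'}'}$ and $\bar{Q}_{t_k}$ must be used to bring the disparate shift contributions to the common form displayed in the proposition.
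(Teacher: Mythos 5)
Your overall route is the same as the paper's: use the Oota reconstruction (\ref{NJG-BS-ChP-IPS-1}) together with the $\mathsf{U}_{n}$-eigenvalues to peel off the prefactor $\left(-\mathbbm{a}_{n}\mathbbm{b}_{n}/\alpha_{n}\beta_{n}\right)^{1/2}\varphi_{n}^{(t_{k})}/\varphi_{n}^{(t_{k^{\prime}}^{\prime})}$, then evaluate $\langle t_{k}|\mathsf{B}^{-1}(\mu_{n,+})\mathsf{A}(\mu_{n,+})|t_{k^{\prime}}^{\prime}\rangle$ in the SOV basis, splitting it into the two $\mathsf{N}$-shift (asymptotic) terms and the $\mathsf{N}-1$ single-variable shift terms, exactly as in the representation (\ref{NJG-BS-ChP-ExB-1A}); your identification of the $\delta_{k,k^{\prime}-1}$ selection rule, of the half-integer column shift $\mathcal{M}_{a,b+1/2}$, and of the two types of entries in the last column is also correct.

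However, the step you flag as the main obstacle is resolved incorrectly. You claim that the Baxter equations (\ref{NJG-BS-ChP-EQ-Baxter-R}) and (\ref{NJG-BS-ChP-EQ-Baxter-L}) ``must be used'' to recombine the disparate shift contributions into the single determinant of the proposition. They are not used, and they are not needed: the displayed last column $\mathcal{U}_{a,\mathsf{N}-1}$ is literally the sum of the two kinds of contributions, with no functional-equation rewriting. The actual mechanism in the paper is elementary linear algebra. First, the single-variable shift part alone already assembles into one determinant: after the Kronecker deltas fix the shifted index, one inserts the sum over $h_{a}$ into the Vandermonde with row $a$ deleted, which is nothing but running a cofactor expansion backwards; this yields (\ref{NJG-BS-ChP-FF-Odd-part}), a determinant whose first $\mathsf{N}-2$ columns are $\mathcal{M}_{a,b+1/2}^{(t_{k},t_{k^{\prime}}^{\prime})}$ and whose last column carries the $\mathtt{\bar{a}}^{(SOV)}$ terms. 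Second, each $\mathsf{N}$-shift term gives a plain scalar-product-type determinant $\det_{\mathsf{N}-1}||\mathcal{M}_{a,b\pm1/2}^{(t_{k},t_{k^{\prime}}^{\prime})}||$ with prefactor $\pm a_{\pm}q^{\pm k^{\prime}}$, formula (\ref{NJG-BS-ChP-FF-Even-part}). Third, these three matrices share $\mathsf{N}-2$ common columns (the degrees $b+1/2$, $b=1,\dots,\mathsf{N}-2$), so by multilinearity of the determinant in the one remaining column their weighted sum collapses to a single determinant whose last column is the weighted sum of the three differing columns --- and that sum is, by definition, $\mathcal{U}_{a,\mathsf{N}-1}^{(t_{k},t_{k^{\prime}}^{\prime})}(\mu_{n,+})$. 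In the paper the Baxter equations enter only in the corollary proving orthogonality of eigenstates with distinct eigenvalues, not in this form-factor computation; pursuing them here would send you down a dead end on the one step your argument leaves open.
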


\begin{proof}

The operator $\mathsf{B}^{-1}(\lambda )\mathsf{A}
(\lambda )$ admits the following SOV-representation:%
\begin{equation}
\mathsf{B}^{-1}(\lambda )\mathsf{A}(\lambda )=\frac{1}{\hat{\boldsymbol\eta}_{%
\mathsf{N}}}\left( \lambda \hat{\boldsymbol\eta}_{\mathsf{A}}^{(+)}\mathsf{T}_{%
\mathsf{N}}^{-}+\frac{\hat{\boldsymbol\eta}_{\mathsf{A}}^{(-)}}{\lambda }\mathsf{%
T}_{\mathsf{N}}^{+}\right) +\sum_{a=1}^{\mathsf{N}-1}\mathsf{T}_{a}^{-}\frac{%
\mathtt{\bar{a}}^{(SOV)}(\hat{\boldsymbol\eta}_{a})}{\hat{\boldsymbol\eta}_{\mathsf{N%
}}(\lambda /\hat{\boldsymbol\eta}_{a}q-\hat{\boldsymbol\eta}_{a}q/\lambda )}%
\prod_{b\neq a}\frac{1}{(\hat{\boldsymbol\eta}_{a}/\hat{\boldsymbol\eta}_{b}-%
\hat{\boldsymbol\eta}_{b}/\hat{\boldsymbol\eta}_{a})}.  \label{NJG-BS-ChP-ExB-1A}
\end{equation}
For brevity we denote with $[\mathsf{B}^{-1}(\lambda )\mathsf{A}(\lambda )]$ the
sum on the r.h.s. of (\ref{NJG-BS-ChP-ExB-1A}). Then, from the SOV-decomposition of the $\tau _{2}$-eigenstates, it holds:
\begin{align}
\langle t_{k}|[\mathsf{B}^{-1}(\lambda )\mathsf{A}(\lambda )]|t_{k^{\prime
}}^{\prime }\rangle & =\frac{\sum_{h_{\mathsf{N}}=1}^{p}q^{(k+1-k^{\prime
})h_{\mathsf{N}}}}{p{\eta }_{\mathsf{N}}^{(0)}}\sum_{a=1}^{\mathsf{N}%
-1}\sum_{h_{1},...,h_{\mathsf{N}-1}=1}^{p}V(\left( {\eta }%
_{1}^{(h_{1})}\right) ^{2},...,\left( {\eta }_{\mathsf{N}-1}^{(h_{\mathsf{N}%
-1})}\right) ^{2})  \notag \\
& \times \prod_{b\neq a,b=1}^{\mathsf{N}-1}\frac{{\eta }%
_{b}^{(h_{b})}Q_{t_{k^{\prime }}^{\prime }}({\eta }_{b}^{(h_{b})})\bar{Q}%
_{t_{k}}({\eta }_{b}^{(h_{b})})}{\omega _{b}({\eta }_{b}^{(h_{b})})(({\eta }%
_{a}^{(h_{a})})^{2}-({\eta }_{b}^{(h_{b})})^{2})}  \notag \\
& \times \frac{\bar{Q}_{t_{k}}({\eta }_{a}^{(h+1)})Q_{t_{k^{\prime
}}^{\prime }}({\eta }_{a}^{(h_{a})})}{\omega _{a}({\eta }_{a}^{(h_{a})})}%
\frac{\left( {\eta }_{a}^{(h_{a})}\right) ^{(\mathsf{N}-2)}\mathtt{\bar{a}}%
^{(SOV)}({\eta }_{a}^{(h_{a})})}{(\lambda /{\eta }_{a}^{(0)}q^{h_{a}+1}-{%
\eta }_{a}^{(0)}q^{h_{a}+1}/\lambda )},
\end{align}%
and so:%
\begin{align}
\langle t_{k}|[\mathsf{B}^{-1}(\lambda )\mathsf{A}(\lambda )]|t_{k^{\prime
}}^{\prime }\rangle & =\frac{\delta _{k,k^{\prime }-1}}{{\eta }_{\mathsf{N}%
}^{(0)}}\sum_{a=1}^{\mathsf{N}-1}\sum_{\substack{ h_{1},...,h_{\mathsf{N}}=1 
\\ {\small \overbrace{h_{a}\text{ is missing.}}}}}^{p}\underset{\ \ \ 
{\small \overbrace{(\text{The row } a \text{ is removed.}})}}{\hat{V}%
_{a}(\left( {\eta }_{1}^{(h_{1})}\right) ^{2},...,\left( {\eta }_{\mathsf{N}%
-1}^{(h_{\mathsf{N}-1})}\right) ^{2})}  \notag \\
& \times \prod_{b\neq a,b=1}^{\mathsf{N}-1}\frac{{\eta }%
_{b}^{(h_{b})}Q_{t_{k^{\prime }}^{\prime }}({\eta }_{b}^{(h_{b})})\bar{Q}%
_{t_{k}}({\eta }_{b}^{(h_{b})})}{\omega _{b}({\eta }_{b}^{(h_{b})})}  \notag
\\
& \times (-1)^{(\mathsf{N}-1+a)}\sum_{h_{a}=1}^{p}\frac{\bar{Q}_{t_{k}}({%
\eta }_{a}^{(0)}q^{h_{a}+1})Q_{t_{k^{\prime }}^{\prime }}({\eta }%
_{a}^{(h_{a})})\left( {\eta }_{a}^{(h_{a})}\right) ^{(\mathsf{N}-2)}\mathtt{%
\bar{a}}^{(SOV)}({\eta }_{a}^{(h_{a})})}{\omega _{a}({\eta }%
_{a}^{(h_{a})})(\lambda /{\eta }_{a}^{(h_{a}+1)}-{\eta }_{a}^{(h_{a}+1)}/%
\lambda )},
\end{align}
inserting the sum over ($h_{1},...,\widehat{h_{a}},...,h_{\mathsf{N}-1}$)
in the Vandermonde determinant $\hat{V}_{a}$, the above
expression reduces to the expansion of the following determinant:
\begin{equation}
\langle t_{k}|[\mathsf{B}^{-1}(\lambda )\mathsf{A}(\lambda )]|t_{k^{\prime
}}^{\prime }\rangle =\delta _{k,k^{\prime }-1}\det_{\mathsf{N}-1}(||\left[ 
\mathcal{U}_{a,b}^{(t_{k},t_{k^{\prime }}^{\prime })}(\lambda )\right] ||),
\label{NJG-BS-ChP-FF-Odd-part}
\end{equation}
where $\left[ \mathcal{U}_{a,b}^{(t_{k},t_{k^{\prime }}^{\prime })}(\lambda )%
\right] $ is just  $\mathcal{M}_{a,b+1/2}^{(t_{k},t_{k^{\prime
}}^{\prime })}$ for \ $b\in \{1,...,\mathsf{N}-2\}$, while:%
\begin{equation}
\left[ \mathcal{U}_{a,\mathsf{N}-1}^{(t_{k},t_{k^{\prime }}^{\prime
})}(\lambda )\right] \equiv \frac{\left( {\eta }_{a}^{(0)}\right) ^{\mathsf{N%
}-2}}{{\eta }_{\mathsf{N}}^{(0)}}\sum_{h=1}^{p}\frac{q^{(\mathsf{N}%
-2)h}Q_{t_{k^{\prime }}^{\prime }}({\eta }_{a}^{(h)})\bar{Q}_{t_{k}}({\eta }%
_{a}^{(h_{a}+1)})}{\omega _{a}({\eta }_{a}^{(h)})(\lambda /{\eta }%
_{a}^{(h_{a}+1)}-{\eta }_{a}^{(h_{a}+1)}/\lambda )}\mathtt{\bar{a}}^{(SOV)}({%
\eta }_{a}^{(h)}).
\end{equation}%
We compute now the matrix elements:
\begin{align}
\langle t_{k}|\hat{\boldsymbol\eta}_{\mathsf{N}}^{-1}\hat{\boldsymbol\eta}_{\mathsf{A%
}}^{(\pm )}\mathsf{T}_{\mathsf{N}}^{\mp }|t_{k^{\prime }}^{\prime }\rangle &
=\frac{\pm a_{\pm }q^{\pm k^{\prime }}\sum_{h_{\mathsf{N}}=1}^{p}q^{(k+1-k^{%
\prime })h_{\mathsf{N}}}}{p{\eta }_{\mathsf{N}}^{(0)}}\sum_{h_{1},...,h_{%
\mathsf{N}-1}=1}^{p}V(\left( {\eta }_{1}^{(h_{1})}\right) ^{2},...,\left( {%
\eta }_{\mathsf{N}-1}^{(h_{\mathsf{N}-1})}\right) ^{2})  \notag \\
& \times \prod_{b=1}^{\mathsf{N}-1}\frac{\left( {\eta }_{b}^{(h_{b})}\right)
^{\pm 1}Q_{t_{k^{\prime }}^{\prime }}({\eta }_{b}^{(h_{b})})\bar{Q}_{t_{k}}({%
\eta }_{b}^{(h_{b})})}{\omega _{b}({\eta }_{b}^{(h_{b})})},
\end{align}%
hence leading to:%
\begin{equation}
\langle t_{k}|\hat{\boldsymbol\eta}_{\mathsf{N}}^{-1}\hat{\boldsymbol\eta}_{\mathsf{A}}^{(\pm )}\mathsf{T}_{\mathsf{N}}^{\mp }|t_{k^{\prime }}^{\prime }\rangle =%
\frac{\pm a_{\pm }q^{\pm k^{\prime }}\delta _{k,k^{\prime }-1}}{{\eta }_{%
\mathsf{N}}^{(0)}}\det_{\mathsf{N}-1}(||\mathcal{M}_{a,b\pm
1/2}^{(t_{k},t_{k^{\prime }}^{\prime })}||).  \label{NJG-BS-ChP-FF-Even-part}
\end{equation}%
Then our result follows as the matrices of
formula (\ref{NJG-BS-ChP-FF-Odd-part}) and (\ref{NJG-BS-ChP-FF-Even-part}) have $\mathsf{N}-2$ common columns. Let us note that the above formula holds for any value of $%
\lambda $.
\end{proof}

\textbf{Remark 2.}

\textsf{I) } The matrix elements $\langle t_{k}|\alpha
_{0,n}^{-1}|t_{k^{\prime }}^{\prime }\rangle $ of the local operators $\alpha _{0,n}^{-1}$ are given by:%
\begin{equation}
\langle t_{k}|\alpha _{0,n}^{-1}|t_{k^{\prime }}^{\prime }\rangle =\frac{%
\varphi _{n}^{(t_{k})}}{\varphi _{n}^{(t_{k^{\prime }}^{\prime })}}\delta
_{k,k^{\prime }-1}\det_{\mathsf{N}-1}(||\mathcal{U}_{a,b}^{(t_{k},t_{k^{%
\prime }}^{\prime })}(\mu _{n,-})||).  \label{NJG-BS-ChP-alfa}
\end{equation}

\textsf{II)} In the case of general representations $\mathcal{R}_{\mathsf{N}%
}$ the matrix elements $\langle t_{k}|\mathsf{u}_{n}|t_{k^{\prime }}^{\prime
}\rangle $ can be computed by using the reconstruction: 
\begin{equation}
\mathsf{u}_{n}=\left( -\frac{\alpha _{n}\beta _{n}}{\mathbbm{a}_{n}%
\mathbbm{b}_{n}}\right) ^{1/2}\mathsf{U}_{n}\mathsf{C}^{-1}(\mu _{n,+})%
\mathsf{D}(\mu _{n,+})\mathsf{U}_{n}^{-1},
\end{equation}%
in the SOV$\ \mathsf{C}$-representation. Here, we do not make this
explicitly as the result will have the same type of form presented for $%
\langle t_{k}|\mathsf{u}_{n}^{-1}|t_{k^{\prime }}^{\prime }\rangle $; the
difference will be that all the quantities will be written in the SOV$\ 
\mathsf{C}$-representation.

\subsection{Determinant representations of form factors for a suitable basis of operators}

In this section we construct an operator basis for which the form factors of any operator in this basis is written by a one determinant formula. For this reason we will refer to it as the basis of elementary operators. The idea of the construction goes back to the sine-Gordon case \cite{NJG-BS-ChP-GMN12-SG}. 

\subsubsection{Introduction of the basis of elementary operators}

\begin{lemma}
Let us define the operators:%
\begin{equation}
\mathcal{O}_{a,k}\equiv \frac{\mathsf{B}({\eta }_{a}^{(p+k-1)})\mathsf{B}({%
\eta }_{a}^{(p+k-2)})\cdots \mathsf{B}({\eta }_{a}^{(k+1)})\mathsf{A}({\eta }%
_{a}^{(k)})}{p\hat{\boldsymbol\eta}_{\mathsf{N}}^{p-1}\prod_{b\neq a,b=1}^{\mathsf{N}%
-1}(Z_{a}/Z_{b}-Z_{b}/Z_{a})}\text{ \ with }k\in \{0,...,p-1\},
\end{equation}
then they satisfy the following properties:
\begin{equation}
\mathcal{O}_{a,k}\mathcal{O}_{a,h}\text{ is non-zero if and only if }h=k-1,
\label{NJG-BS-ChP-Prod-O-zeros}
\end{equation}%
and%
\begin{equation}
\mathcal{O}_{a,k}\mathcal{O}_{a,k-1}\cdots \mathcal{O}_{a,k+1-p}\mathcal{O}%
_{a,k-p}=\frac{\mathcal{A}(Z_{a})}{\prod_{b\neq a,b=1}^{\mathsf{N}%
-1}(Z_{a}/Z_{b}-Z_{b}/Z_{a})}\mathcal{O}_{a,k}.  \label{NJG-BS-ChP-O-mean-value}
\end{equation}%
The following commutation relations are furthermore satisfied:
\begin{equation}
\hat{\boldsymbol\eta} _{\mathsf{A}}^{(\pm )}\mathcal{O}_{a,k}=q^{\mp 1}\mathcal{O}_{a,k}\hat{\boldsymbol\eta}_{\mathsf{A}}^{(\pm )},\text{ \ }[\hat{\boldsymbol\eta}_{\mathsf{N}},\mathcal{O}_{a,k}]=[%
\mathsf{T}_{\mathsf{N}}^{-} ,\mathcal{O}_{a,k}]=0,
\end{equation}%
and 
\begin{equation}
\mathcal{O}_{a,k}\mathcal{O}_{b,h}=\frac{({\eta }_{a}^{(k-h+1)}/{\eta }%
_{b}^{(0)}-{\eta }_{b}^{(0)}/{\eta }_{a}^{(k-h+1)})}{({\eta }_{a}^{(k-h-1)}/{%
\eta }_{b}^{(0)}-{\eta }_{b}^{(0)}/{\eta }_{a}^{(k-h-1)})}\text{ }\mathcal{O}%
_{b,h}\mathcal{O}_{a,k}  \label{NJG-BS-ChP-Com-O}
\end{equation}%
for $a\neq b\in \{1,...,\mathsf{N}-1\}$.
\end{lemma}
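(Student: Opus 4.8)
The plan is to reduce every claim to the explicit left action of $\mathcal{O}_{a,k}$ on the SOV eigen-covectors $\langle\boldsymbol\eta_{\mathbf{k}}|$, which suffices since these form a basis and the assertions are operator identities. The engine is that $\mathsf{B}(\lambda)$ is \emph{diagonal} in this basis. First I would record that on the covector with $a$-component $k_a$ the eigenvalue $b_{\boldsymbol\eta}(\eta_a^{(j)})$ of $\mathsf{B}(\eta_a^{(j)})$ carries the factor $(\eta_a^{(j)}/\eta_a^{(k_a)}-\eta_a^{(k_a)}/\eta_a^{(j)})$, which vanishes exactly when $j\equiv k_a \bmod p$ (here $p$ odd makes $q^{2(j-k_a)}=1$ equivalent to $j\equiv k_a$), and by the simplicity hypothesis $(\eta_a^{(0)})^p\neq(\eta_b^{(0)})^p$ no other coincidence occurs. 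Hence the string $\mathsf{B}(\eta_a^{(p+k-1)})\cdots\mathsf{B}(\eta_a^{(k+1)})$, which runs over all residues except $k$, acts as a scalar that is nonzero precisely on the $k_a=k$ subspace: it is a projector up to normalization.

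Next I would compute the residual action of $\mathsf{A}(\eta_a^{(k)})$ there. On $k_a=k$ one has $b_{\boldsymbol\eta}(\eta_a^{(k)})=0$, so both asymptotic ($\mathsf{N}$-shift) terms of (\ref{NJG-BS-ChP-SAdef}) drop out, and in the sum over $c$ each $c\neq a$ term carries the vanishing $b=a$ factor, leaving only $c=a$ with Lagrange coefficient $1$. Thus $\langle\boldsymbol\eta|\mathsf{A}(\eta_a^{(k)})=\mathtt{a}^{(SOV)}(\eta_a^{(k)})\langle q^{-\delta_a}\boldsymbol\eta|$ when $k_a=k$. Combining this with the $\mathsf{B}$-scalar and evaluating $\prod_{c\neq a}(\eta_a^{(j)}/\eta_c^{(k_c)}-\eta_c^{(k_c)}/\eta_a^{(j)})$ over all residues through the cyclotomic identity $\prod_{m=0}^{p-1}(q^m w-q^{-m}w^{-1})=w^p-w^{-p}$, the denominator $\prod_{b\neq a}(Z_a/Z_b-Z_b/Z_a)$ of $\mathcal{O}_{a,k}$ cancels, the $(p-1)$ factors of $\eta_{\mathsf{N}}$ from the $\mathsf{B}$'s cancel $\hat{\boldsymbol\eta}_{\mathsf{N}}^{p-1}$, and the constant $\tfrac1p\prod_{m=1}^{p-1}(q^m-q^{-m})$ collapses to $1$. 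This yields the clean shift rule
\[
\langle\boldsymbol\eta|\,\mathcal{O}_{a,k}=\delta_{k_a,k}\,\frac{\mathtt{a}^{(SOV)}(\eta_a^{(k)})}{\prod_{c\neq a}(\eta_a^{(k)}/\eta_c^{(k_c)}-\eta_c^{(k_c)}/\eta_a^{(k)})}\,\langle q^{-\delta_a}\boldsymbol\eta|.
\]

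From this one formula every property follows. Equation (\ref{NJG-BS-ChP-Prod-O-zeros}) is immediate, since $\mathcal{O}_{a,k}$ sends the $k_a=k$ component to the $k_a=k-1$ one, so $\mathcal{O}_{a,k}\mathcal{O}_{a,h}$ survives only if the intermediate index $k-1$ equals $h$. For (\ref{NJG-BS-ChP-O-mean-value}) I would compose $p+1$ shifts around a full $\mathbb{Z}_{p}$ cycle: the $\mathtt{a}^{(SOV)}$ weights over one period multiply to $\mathcal{A}(Z_a)$ by the average condition (\ref{NJG-BS-ChP-ADaver}), each denominator product over one period telescopes to $\prod_{c\neq a}(Z_a/Z_c-Z_c/Z_a)$ by the same cyclotomic identity, and the single leftover factor reconstitutes $\mathcal{O}_{a,k}$, producing the stated prefactor. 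The commutations in the third display hold because $\mathcal{O}_{a,k}$ shifts only $\eta_a\mapsto q^{-1}\eta_a$ and leaves $\eta_{\mathsf{N}}$ fixed; since $\hat{\boldsymbol\eta}_{\mathsf{A}}^{(\pm)}\propto\prod_n\eta_n^{\pm1}$ this gives the factor $q^{\mp1}$, while commutation with $\hat{\boldsymbol\eta}_{\mathsf{N}}$ and $\mathsf{T}_{\mathsf{N}}^{-}$ is manifest. Finally (\ref{NJG-BS-ChP-Com-O}) comes from applying the shift rule in both orders on $\langle\boldsymbol\eta|$ with $k_a=k,\,k_b=h$: the only order-dependence is the single $j=k$ (resp.\ $j=h$) cross-factor in each scalar, and isolating it gives the ratio $(\eta_a^{(k-h+1)}/\eta_b^{(0)}-\eta_b^{(0)}/\eta_a^{(k-h+1)})/(\eta_a^{(k-h-1)}/\eta_b^{(0)}-\eta_b^{(0)}/\eta_a^{(k-h-1)})$.

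I expect the main obstacle to be the scalar bookkeeping in (\ref{NJG-BS-ChP-O-mean-value}) and (\ref{NJG-BS-ChP-Com-O}): one must track exactly which residue is omitted in each partial product, confirm the normalizing constant reduces to $1$, and check that the period-products telescope into the central averages $\mathcal{A}(Z_a)$ and $Z_a/Z_c-Z_c/Z_a$. The conceptual content — the $\mathsf{B}$-string is a projector and $\mathsf{A}$ supplies a single weighted shift — is the real engine, after which the four properties are essentially corollaries.
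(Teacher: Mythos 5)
Your proposal is correct, and its engine --- the explicit left action
\begin{equation*}
\langle\boldsymbol\eta|\,\mathcal{O}_{a,k}=\delta_{k_a,k}\,\frac{\mathtt{a}^{(SOV)}(\eta_a^{(k)})}{\prod_{c\neq a}\bigl(\eta_a^{(k)}/\eta_c^{(k_c)}-\eta_c^{(k_c)}/\eta_a^{(k)}\bigr)}\,\langle q^{-\delta_a}\boldsymbol\eta|
\end{equation*}
--- is precisely the identity (\ref{NJG-BS-ChP-O-Action}) on which the paper's own proof rests, so the core route is the same; moreover you supply the scalar bookkeeping (the cyclotomic identity and the check that $\tfrac1p\prod_{m=1}^{p-1}(q^m-q^{-m})=1$, which explains the factor $p$ in the definition of $\mathcal{O}_{a,k}$) that the paper dismisses as ``a direct consequence of the definition''. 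You diverge from the paper in two places. First, for (\ref{NJG-BS-ChP-Prod-O-zeros}) the paper invokes the vanishing of the central average $\mathcal{B}(Z_a)=0$, while you read the vanishing off the delta functions in the action formula; both work, and yours additionally makes transparent the ``if'' direction (non-vanishing for $h=k-1$), which in either treatment tacitly needs $\mathtt{a}^{(SOV)}(\eta_a^{(k)})\neq 0$, true for almost all parameters. Second, and more substantially, for the exchange relation (\ref{NJG-BS-ChP-Com-O}) the paper works at the algebra level, pushing $\mathsf{A}(\eta_a^{(k)})$ through the string of $\mathsf{B}$'s with the Yang--Baxter relation (\ref{NJG-BS-ChP-AB-Yang-Baxter}) and keeping only the surviving first term, whereas you compute both orderings on the $\mathsf{B}$-eigenbasis and take the ratio of the two accumulated scalars; I checked that your ratio indeed collapses to $(\eta_a^{(k-h+1)}/\eta_b^{(0)}-\eta_b^{(0)}/\eta_a^{(k-h+1)})/(\eta_a^{(k-h-1)}/\eta_b^{(0)}-\eta_b^{(0)}/\eta_a^{(k-h-1)})$, so this shortcut is sound; what the paper's version buys is a representation-independent derivation, what yours buys is uniformity (every claim reduced to one formula) and no combinatorics of exchange terms. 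One caution: in the third set of relations, carrying out your own argument literally ($\mathcal{O}_{a,k}$ shifts $\eta_a\mapsto q^{-1}\eta_a$ on covectors, and $\hat{\boldsymbol\eta}_{\mathsf{A}}^{(\pm)}\propto\prod_n\eta_n^{\pm1}$) yields $\hat{\boldsymbol\eta}_{\mathsf{A}}^{(\pm)}\mathcal{O}_{a,k}=q^{\pm1}\mathcal{O}_{a,k}\hat{\boldsymbol\eta}_{\mathsf{A}}^{(\pm)}$, not $q^{\mp1}$ as you (following the lemma's display) assert; this looks like a sign slip inherited from the statement itself rather than a defect of your method, and it does not affect how the relation is used later, but you should not claim your computation produces the $q^{\mp1}$ factor when it produces the opposite one.
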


\begin{proof}
Since $\mathcal{B}(Z_{a})=0$ with $\mathcal{B}(\Lambda )$ the average value of the operator $\mathsf{B}(\lambda )$, the first is quite immediate. Moreover, the following identity:%
\begin{equation}
\langle \eta _{1},...,{\eta }_{a}^{(h)},...,\eta _{\mathsf{N}}|\mathcal{O}%
_{a,k}=\frac{a({\eta }_{a}^{(k)})\delta _{h,k}}{\prod_{b\neq a,b=1}^{\mathsf{%
N}}({\eta }_{a}^{(k)}/\eta _{b}-\eta _{b}/{\eta }_{a}^{(k)})}\langle \eta
_{1},...,{\eta }_{a}^{(k-1)},...,\eta _{\mathsf{N}}|,  \label{NJG-BS-ChP-O-Action}
\end{equation}
is a direct consequence of the definition of the operators $\mathcal{O}_{a,k}$ so that the second identity of the lemma follows.
Now by using the following Yang-Baxter commutation relation:
\begin{equation}
(\lambda /\mu -\mu /\lambda )\mathsf{A}(\lambda )\mathsf{B}(\mu )=(\lambda
/q\mu -\mu q/\lambda )\mathsf{B}(\mu )\mathsf{A}(\lambda )+(q-q^{-1})\mathsf{%
B}(\lambda )\mathsf{A}(\mu )  \label{NJG-BS-ChP-AB-Yang-Baxter}
\end{equation}%
and moving the $\mathsf{A}({\eta }_{a}^{(k)})$ to the right
through all the $\mathsf{B}({\eta }_{b}^{(j)})$, for $j\neq h$, remarking
that only the first term of the r.h.s of (\ref{NJG-BS-ChP-AB-Yang-Baxter}) survives,
and after moving the $\mathsf{A}({\eta }_{b}^{(h)})$ to the left, we get the last identity of the lemma. 
\end{proof}

Now we define \textit{elementary operators} by the following monomials:
\begin{equation}
\mathcal{E}_{k,k_{0},(a_{1},k_{1}),...,(a_{r},k_{r})}^{(\alpha
_{1},...,\alpha _{r})}\equiv \hat{\boldsymbol\eta}_{\mathsf{N}}^{-k}\left( %
\hat{\boldsymbol\eta}_{\mathsf{A}}^{(+)}\mathsf{T}_{\mathsf{N}}^{-} \right)
^{k_{0}}\mathcal{O}_{a_{1},k_{1}}^{(\alpha _{1})}\cdots \mathcal{O}%
_{a_{r},k_{r}}^{(\alpha _{r})},  \label{NJG-BS-ChP-O-basis}
\end{equation}%
where $\sum_{h=1}^{r}\alpha _{h}\leq p,$ $k,k_{i}\in \{0,...,p-1\},\ \
a_{i}<a_{j}\in \{1,...,\mathsf{N}-1\}$ \ for $i<j\in \{1,...,\mathsf{N}-1\}$
and:%
\begin{equation}
\mathcal{O}_{a,k}^{(\alpha )}\equiv \mathcal{O}_{a,k}\mathcal{O}%
_{a,k-1}\cdots \mathcal{O}_{a,k+1-\alpha },\text{ with }\alpha \in
\{1,...,p\}.
\end{equation}

\begin{lemma}
Once the set of the elementary operators
is dressed by the shift operator $\mathsf{U}_{n}$ as it follows:%
\begin{equation}
\mathsf{U}_{n}\mathcal{E}_{k,k_{0},(a_{1},k_{1}),...,(a_{r},k_{r})}^{(\alpha
_{1},...,\alpha _{r})}\mathsf{U}_{n}^{-1},
\end{equation}
a basis is defined in the space of the local operators at the quantum site $n$, $\forall n\in \{1,...,\mathsf{N}\}$.
\end{lemma}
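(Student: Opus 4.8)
The plan is to prove the claim in two moves: reduce the dressed operators to the bare elementary operators, and then show the latter form a basis of the operator algebra of the chain. Since $\mathsf{U}_{n}$ is invertible, the map $X\mapsto \mathsf{U}_{n}X\mathsf{U}_{n}^{-1}$ is an automorphism of this algebra, so it carries a basis to a basis and a spanning family to a spanning family. Hence it is enough to treat the bare operators $\mathcal{E}_{k,k_{0},(a_{1},k_{1}),\dots,(a_{r},k_{r})}^{(\alpha_{1},\dots,\alpha_{r})}$, and the assertion for every $n$ then follows simultaneously.

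First I would prove that the bare elementary operators span. One route is intrinsic: the position and shift operators $\hat{\boldsymbol\eta}_{i}$, $\mathsf{T}_{i}^{\pm}$ generate all operators on the $\mathsf{B}$-eigenbasis, and the preceding two expansion Lemmas rewrite every power of the generators $\mathsf{B}^{-1}(\lambda)\mathsf{A}(\lambda)$ and $\mathsf{A}^{-1}(\lambda)\mathsf{B}(\lambda)$ as explicit linear combinations of the monomials $\hat{\boldsymbol\eta}_{\mathsf{N}}^{-k}(\hat{\boldsymbol\eta}_{\mathsf{A}}^{(+)}\mathsf{T}_{\mathsf{N}}^{-})^{k_{0}}\prod_{i}\mathcal{O}_{a_{i},k_{i}}^{(\alpha_{i})}$. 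Combined with Oota's reconstructions $(\ref{NJG-BS-ChP-IPS-1})$ and $(\ref{NJG-BS-ChP-IPS-2})$ and with Proposition \ref{NJG-BS-ChP-IPS}, which express $\mathsf{u}_{n}^{k}$, $\alpha_{0,n}$ and $\mathsf{v}_{n}^{k}$ through these generators, this shows that every local operator, and by products over the sites every operator of the chain, lies in the span.

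Next I would establish linear independence by reading off the action on the $\mathsf{B}$-eigenbasis from $(\ref{NJG-BS-ChP-O-Action})$. Each factor $\mathcal{O}_{a,k}$ selects the level $k$ of the separate variable $a$ through the Kronecker symbol and lowers it by one, so $\mathcal{O}_{a_{i},k_{i}}^{(\alpha_{i})}$ projects variable $a_{i}$ onto level $k_{i}$ and moves it to $k_{i}-\alpha_{i}$, while $\hat{\boldsymbol\eta}_{\mathsf{N}}^{-k}$ contributes the phase $q^{-kh_{\mathsf{N}}}$ and $(\hat{\boldsymbol\eta}_{\mathsf{A}}^{(+)}\mathsf{T}_{\mathsf{N}}^{-})^{k_{0}}$ shifts the last variable. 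Thus each elementary operator sends any basis covector either to zero or to a single basis covector, with a prescribed set of selected input levels, a prescribed vector of shifts on the variables, and a prescribed $q$-power in $h_{\mathsf{N}}$. Two distinct monomials differ in at least one of these data and so populate distinct matrix positions, those with identical shift data being separated by the linear independence of the phases $\{q^{-kh_{\mathsf{N}}}\}_{k=0}^{p-1}$; the exchange relations $(\ref{NJG-BS-ChP-Com-O})$ and the vanishing rule $(\ref{NJG-BS-ChP-Prod-O-zeros})$ ensure that reordering the commuting factors only rescales them. Spanning together with linear independence then gives the basis property.

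The step I expect to be the main obstacle is precisely this independence, i.e. ruling out all hidden linear relations and confirming that the count of surviving monomials equals $p^{2\mathsf{N}}$, the dimension of the operator algebra. The subtle point is the interplay between an absent index (identity on that variable) and a full cycle $\mathcal{O}_{a,k}^{(p)}$: by the averaging identity $(\ref{NJG-BS-ChP-O-mean-value})$ a full cycle folds back onto a single $\mathcal{O}_{a,k}$ and reproduces a diagonal (average) operator, so the range $\alpha\le p$ together with the constraint $\sum_{h}\alpha_{h}\le p$ must be used to remove exactly this redundancy and avoid double counting. Keeping precise track, through $(\ref{NJG-BS-ChP-Prod-O-zeros})$ and $(\ref{NJG-BS-ChP-Com-O})$, of which monomials are nonzero and of the scalar factors produced in ordering them is what makes the bookkeeping delicate.
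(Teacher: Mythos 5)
Your first half is essentially the paper's own argument: conjugation by the invertible $\mathsf{U}_{n}$ reduces the statement to the bare operators, and the spanning of the local algebra at site $n$ follows from Oota's reconstructions (\ref{NJG-BS-ChP-IPS-1})--(\ref{NJG-BS-ChP-IPS-2}) and Proposition \ref{NJG-BS-ChP-IPS}, the expansion of $\mathsf{B}^{-1}(\lambda )\mathsf{A}(\lambda )$ over the $\mathcal{O}_{a,k}$, and the reordering, vanishing and folding rules (\ref{NJG-BS-ChP-Com-O}), (\ref{NJG-BS-ChP-Prod-O-zeros}), (\ref{NJG-BS-ChP-O-mean-value}). (The paper works with the generators (\ref{NJG-BS-ChP-U-B-basis1})--(\ref{NJG-BS-ChP-U-B-basis2}) and obtains the needed expansion directly from centrality of the averages, writing $\mathsf{B}^{-1}(\lambda )\mathsf{A}(\lambda )=\mathsf{B}(\lambda q^{p-1})\cdots \mathsf{B}(\lambda q)\mathsf{A}(\lambda )/\mathcal{B}(\Lambda )$, rather than invoking the two lemmas of Section 6.3; this difference is cosmetic.) However, your side claim that ``by products over the sites every operator of the chain lies in the span'' is unjustified: products of operators attached to different sites generate monomials with more than $p$ lowering factors, and these are excluded by the constraint $\sum_{h}\alpha _{h}\leq p$ in (\ref{NJG-BS-ChP-O-basis}).

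The genuine gap is your linear-independence and dimension-count step: it is not merely the ``main obstacle'', it is impossible, because the statement you aim at is false. First, the elementary operators are linearly dependent. By (\ref{NJG-BS-ChP-O-Action}) with $\alpha =p$, using that $\prod_{h=0}^{p-1}(\eta _{a}q^{-h}/\eta _{b}-\eta _{b}/\eta _{a}q^{-h})=Z_{a}/Z_{b}-Z_{b}/Z_{a}$ and $\prod_{h=0}^{p-1}\mathtt{a}^{(SOV)}(\eta _{a}q^{-h})=\mathcal{A}(Z_{a})$, the operator $\mathcal{O}_{a,h}^{(p)}$ equals the $h$-independent constant $\mathcal{A}(Z_{a})/\prod_{b\neq a}(Z_{a}/Z_{b}-Z_{b}/Z_{a})$ times the projector onto the level $h$ of the separate variable $a$. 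Summing over $h$ therefore gives the exact relation $\sum_{h=0}^{p-1}\mathcal{E}_{k,k_{0},(a,h)}^{(p)}=\bigl(\mathcal{A}(Z_{a})/\prod_{b\neq a}(Z_{a}/Z_{b}-Z_{b}/Z_{a})\bigr)\,\mathcal{E}_{k,k_{0}}$, i.e.\ every $r=0$ elementary operator is a combination of $r=1$, $\alpha =p$ ones; no choice of ranges removes this redundancy. Second, the family cannot span the full $p^{2\mathsf{N}}$-dimensional algebra: an operator shifting two distinct separate variables each by $+1$ would require $\alpha _{1}=\alpha _{2}=p-1$ in a single monomial, violating $\sum_{h}\alpha _{h}\leq p$ for $p>2$, and since elementary operators with different selection/shift data act on disjoint matrix units, no linear combination can produce it. What the paper proves --- and all that the lemma is used for --- is the weaker spanning statement: every local operator at site $n$ lies in the span of the dressed elementary operators, so that its form factors are finite sums of the one-determinant formulae; the word ``basis'' must be read in this loose sense of an (overcomplete) generating family from which a basis can be extracted. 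If you delete the independence step and the whole-chain claim, your argument reduces to the paper's proof and is correct.
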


\begin{proof}
In order to prove the lemma the local operators in site $n$ generated by $\mathsf{u}%
_{n}^{k}$ and $\mathsf{v}_{n}^{k}$ for $k\in \{1,...,p-1\}$ have to be written as linear combinations of the dressed elementary operators (\ref{NJG-BS-ChP-O-basis}) and thanks to Proposition \ref{NJG-BS-ChP-IPS} this is equivalent to prove the same statement for the following basis of local operators:
\begin{align}
\mathsf{u}_{n}^{-k}& =\mathsf{U}_{n}\left( \mathsf{B}^{-1}(\mu _{n,+})%
\mathsf{A}(\mu _{n,+})\right) ^{k}\mathsf{U}_{n}^{-1},  \label{NJG-BS-ChP-U-B-basis1} \\
\tilde{\beta}_{k,n}& =\mathsf{U}_{n}\left( \mathsf{B}^{-1}(\mu _{n,+})%
\mathsf{A}(\mu _{n,+})\right) ^{k}\mathsf{B}^{-1}(\mu _{n,-})\mathsf{A}(\mu
_{n,-})\left( \mathsf{B}^{-1}(\mu _{n,+})\mathsf{A}(\mu _{n,+})\right)
^{p-1-k}\mathsf{U}_{n}^{-1}  \label{NJG-BS-ChP-U-B-basis2}
\end{align}
The operator $\mathsf{B}^{-1}(\lambda )$ is invertible for $\lambda ^{p}\neq Z_{a}$ with $a\in \{1,...,\mathsf{N}-1\}$ so that the
centrality of the average values implies: 
\begin{equation}
\mathsf{B}^{-1}(\lambda )\mathsf{A}(\lambda )=\frac{\mathsf{B}(\lambda
q^{p-1})\mathsf{B}(\lambda q^{p-2})\cdots \mathsf{B}(\lambda q)\mathsf{A}%
(\lambda )}{\mathcal{B}(\Lambda )}.
\end{equation}
The monomial $\mathsf{B}(\lambda q^{p-1})\mathsf{B}(\lambda
q^{p-2})\cdots \mathsf{B}(\lambda q)\mathsf{A}(\lambda )$ is an even Laurent
polynomial of degree $p(\mathsf{N}-1)+1$ in $\lambda $ and so we can write:
\begin{equation}
\mathsf{B}^{-1}(\lambda )\mathsf{A}(\lambda )=\frac{1}{\hat{\boldsymbol\eta}_{%
\mathsf{N}}}\left( \lambda \hat{\boldsymbol\eta}_{\mathsf{A}}^{(+)}\mathsf{T}_{%
\mathsf{N}}^{-} +\frac{\hat{\boldsymbol\eta}_{\mathsf{A}}^{(-)}}{\lambda }
\mathsf{T}_{\mathsf{N}}^{+}\right) +\frac{1}{\hat{\boldsymbol\eta}_{\mathsf{N}}}%
\sum_{a=1}^{\mathsf{N}-1}\sum_{k=0}^{p-1}\frac{\mathcal{O}_{a,k}}{(\lambda /{\eta}_{a}^{(k)}-{\eta }_{a}^{(k)}/\lambda )}.
\end{equation}
It is then clear that the local operators $\mathsf{u}_{n}^{-k}$ and $\tilde{\beta}_{k,n}$ are linear combinations of the monomials:
\begin{equation}
\mathsf{U}_{n} \hat{\boldsymbol\eta} _{\mathsf{N}}^{-h}\left( \hat{\boldsymbol\eta}_{\mathsf{A}}^{(+)}\mathsf{T}_{\mathsf{N}}^{-} \right) ^{h_{0}}\mathcal{O}%
_{a_{1},h_{1}}\cdots \mathcal{O}_{a_{s},h_{s}}\mathsf{U}_{n}^{-1}
\end{equation}
for $s\leq p$, $a_{i}\in \{1,...,\mathsf{N}-1\}$ and $h,h_{i}\in
\{0,...,p-1\}$. The commutation rules (\ref{NJG-BS-ChP-Com-O}) allow to rewrite any monomial $\mathcal{O}_{a_{1},h_{1}}\cdots \mathcal{O}%
_{a_{s},h_{s}}$ in a way that operators with the same index $a$ are adjacent and those with different $a$ are ordered in a way $a_{i}<a_{j}$\ for $i<j\in \{1,...,\mathsf{N}-1\}$. Then the rule (\ref{NJG-BS-ChP-Prod-O-zeros}) tell us if the monomial is zero or 
non-zero. The property (\ref{NJG-BS-ChP-O-mean-value}) finally implies: 
\begin{equation}
\mathcal{O}_{a,k}^{(p+\alpha )}=\frac{\mathcal{A}(Z_{a})}{\prod_{b\neq
a,b=1}^{\mathsf{N}}(Z_{a}/Z_{b}-Z_{b}/Z_{a})}\mathcal{O}_{a,k}^{(\alpha )},
\end{equation}%
and so that all the non-zero monomials $\mathcal{O}%
_{a_{1},h_{1}}\cdots \mathcal{O}_{a_{s},h_{s}}$ are rewritable in the form (\ref{NJG-BS-ChP-O-basis}).
\end{proof}

\subsubsection{Determinant representation of elementary operator form factors}

\begin{lemma}
The elementary operators admit the following simple characterizations for their form factors: 
\begin{equation}
\langle t_{k}|\mathcal{E}_{(h,h_{0},(a_{1},h_{1}),...,(a_{r},h_{r})}^{(%
\alpha _{1},...,\alpha _{r})}|t_{k^{\prime }}^{\prime }\rangle =\frac{\delta
_{k,k^{\prime }+h}a_{+}^{h_{0}}q^{h_{0}k^{\prime }}}{({\eta }_{\mathsf{N}%
}^{(0)})^{h}}\,\mathsf{f}_{(h_{0},\{\alpha \},\{a\})}\det_{\mathsf{N}%
-1+rp-g}(||\text{\textsc{O}}_{a,b}^{(h_{0},\{\alpha \},\{a\})}||).
\label{NJG-BS-ChP-E-op}
\end{equation}
Here, $\langle t_{k}|$ and $|t_{k^{\prime }}^{\prime }\rangle $ are two
eigenstates of the transfer matrix $\tau _{2}(\lambda )$ and $||$\textsc{O}$_{a,b}^{(h_{0},\{\alpha \},\{a\})}||$ is the $\left( 
\mathsf{N}-1+rp-g\right) \times \left( \mathsf{N}-1+rp-g\right) $ matrix of
elements:
\begin{align}
\text{\textsc{O}}_{a,\sum_{h=1}^{m-1}(p-\alpha
_{h}+1)+j_{m}}^{(h_{0},\{\alpha \},\{a\})}& \equiv \left( \eta
_{a_{m}}^{2}q^{2j_{m}}\right) ^{2(a-1)}\text{ \ for }j_{m}\in
\{0,...,p-\alpha _{m}\},\text{ \ }m\in \{1,...,r\},\text{ } \\
\text{\textsc{O}}_{a,\sum_{h=1}^{r}(p-\alpha _{h}+1)+i}^{(h_{0},\{\alpha
\},\{a\})}& \equiv \Phi _{b_{i},a+(h_{0}+g)/2}^{\left( t,t^{\prime }\right)
},\text{ \ \ \ \ \ \ \ \ \ \ for }i\in \{1,...,\mathsf{N}-1-r\},\text{ \ \ \ 
}g\equiv \sum_{h=1}^{r}\alpha _{h},
\end{align}%
for any $a\in \{1,...,\mathsf{N}-1+rp-g\}$. Moreover, we have used the following notations $%
\{b_{1},...,b_{\mathsf{N}-1-r}\}\equiv \{1,...,\mathsf{N}-1\}\backslash
\{a_{1},...,a_{r}\}$ where the elements are ordered by $b_{i}<b_{j}$ for $i<j$,
\begin{align}
\mathsf{f}_{(h_{0},\{\alpha \},\{a\})}& \equiv \frac{\prod_{i=1}^{r}Q_{t^{%
\prime }}(\eta _{a_{i}}q^{-\alpha _{i}})\bar{Q}_{t}(\eta _{a_{i}})\left(
\eta _{a_{i}}^{h_{0}+\alpha _{i}\left( \mathsf{N}-1-r\right) }/\omega
_{a_{i}}(\eta _{a_{i}})\right) \prod_{h=0}^{\alpha _{i}-1}a(\eta _{i}q^{-h})%
}{\prod_{i=1}^{r}\prod_{h=0}^{\alpha _{i}-1}\prod_{j=1}^{i-1}(q^{\alpha
_{j}-h}\eta _{a_{i}}/\eta _{a_{j}}-\eta _{a_{j}}/q^{\alpha _{j}-h}\eta
_{a_{i}})\prod_{j=i+1}^{r}(\eta _{a_{i}}/q^{h}\eta _{a_{j}}-\eta
_{a_{j}}q^{h}/\eta _{a_{i}})}  \notag \\
& \times \frac{(-1)^{\sum_{i=1}^{r}(a_{i}-i)}\prod_{i=1}^{r}q^{-\left( 
\mathsf{N}-1-r\right) \alpha _{i}(\alpha _{i}-1)/2}V(\eta
_{a_{1}}^{2},...,\eta _{a_{r}}^{2})}{\prod_{i=1}^{r}\prod_{j=1}^{\mathsf{N}%
-1-r}(Z_{a_{i}}^{2}-Z_{b_{j}}^{2})V(\eta _{a_{1}}^{2},\eta
_{a_{1}}^{2}q^{2},...,\eta _{a_{1}}^{2}q^{2(p-\alpha _{1})},...,\eta
_{a_{r}}^{2},\eta _{a_{r}}^{2}q^{2},...,\eta _{a_{r}}^{2}q^{2(p-\alpha
_{r})})},
\end{align}
$V(x_{1},...,x_{\mathsf{N}})\equiv \prod_{1\leq a<b\leq \mathsf{N}%
}(x_{a}-x_{b})$ is the Vandermonde determinant and for brevity:%
\begin{equation}
\eta _{a_{m}}\equiv \eta _{a_{m}}^{(h_{m})}.  \label{NJG-BS-ChP-eta-a_m}
\end{equation}
\end{lemma}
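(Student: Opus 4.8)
The plan is to evaluate the form factor $\langle t_{k}|\mathcal{E}_{(h,h_{0},(a_{1},h_{1}),\dots,(a_{r},h_{r})}^{(\alpha_{1},\dots,\alpha_{r})}|t_{k^{\prime }}^{\prime }\rangle$ directly in the SOV basis, inserting the explicit decompositions of $\langle t_{k}|$ and $|t_{k^{\prime }}^{\prime }\rangle$ from Subsection \ref{NJG-BS-ChP-LReigenstates} and letting the operator act to the left. The building blocks have transparent actions: $\hat{\boldsymbol\eta}_{\mathsf{N}}$ and $\hat{\boldsymbol\eta}_{\mathsf{A}}^{(+)}$ are diagonal, $\mathsf{T}_{\mathsf{N}}^{-}$ shifts the $\mathsf{N}$-th index, and by $(\ref{NJG-BS-ChP-O-Action})$ each factor $\mathcal{O}_{a_{m},h_{m}}^{(\alpha_{m})}$ pins the $a_{m}$-th separate variable to the value $h_{m}$, lowers it to $h_{m}-\alpha_{m}$, and generates the coefficient $\prod_{h=0}^{\alpha_{m}-1}\mathtt{a}(\eta_{a_{m}}q^{-h})$ together with the residual factors $\prod_{b\ne a_{m}}(\dots)^{-1}$. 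First I would perform these actions and then use the orthogonality $(\ref{NJG-BS-ChP-1M_jj})$ of the $\mathsf{B}$-eigenbasis to collapse the double sum over left and right SOV labels into a single summation over the separate-variable indices.

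Next I would isolate the $\mathsf{N}$-th variable, whose summation index $h_{\mathsf{N}}$ enters only through the grading phases $q^{kh_{\mathsf{N}}}$, $q^{-k^{\prime }h_{\mathsf{N}}}$ of the two eigenstates and through the shift by $h_{0}$ produced by $(\hat{\boldsymbol\eta}_{\mathsf{A}}^{(+)}\mathsf{T}_{\mathsf{N}}^{-})^{h_{0}}$. The root-of-unity identity $\frac{1}{p}\sum_{h_{\mathsf{N}}=1}^{p}q^{(k-k^{\prime }-h)h_{\mathsf{N}}}=\delta_{k,k^{\prime }+h}$, already exploited for the scalar product of separate states, then yields the selection rule $\delta_{k,k^{\prime }+h}$ and extracts the prefactors $a_{+}^{h_{0}}q^{h_{0}k^{\prime }}$ and $1/({\eta}_{\mathsf{N}}^{(0)})^{h}$, accounting for every $\mathsf{N}$-dependent factor in $(\ref{NJG-BS-ChP-E-op})$.

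What remains is a constrained multiple sum over the indices of the separate variables $\{1,\dots,\mathsf{N}-1\}$, weighted by the Vandermonde factors $V(\dots)$ of the eigenstates (one of which cancels against the norm through $(\ref{NJG-BS-ChP-1M_jj})$). The $r$ active variables $a_{1},\dots,a_{r}$ are frozen, so their $Q_{t^{\prime}}$, $\bar{Q}_{t}$ values, the operator coefficients and the cross-factors assemble into the scalar prefactor $\mathsf{f}_{(h_{0},\{\alpha\},\{a\})}$, while the $\mathsf{N}-1-r$ passive variables $b_{1},\dots,b_{\mathsf{N}-1-r}$ are still summed. The plan is then to recognise the result as one determinant by multilinearity of the Vandermonde in its rows, exactly as for the scalar product and in Proposition \ref{NJG-BS-ChP-FF-Prop1}: each passive variable, once summed against $Q_{t^{\prime}}\bar{Q}_{t}/\omega$, furnishes a column of $\Phi$-type entries $\Phi_{b_{i},a+(h_{0}+g)/2}$, whereas each frozen active variable is re-expressed, using the cyclic product of $\mathsf{B}$'s inside $\mathcal{O}_{a_{m},h_{m}}^{(\alpha_{m})}$, as a pure Vandermonde block over the $p-\alpha_{m}+1$ shifts it does not absorb, i.e. the columns $({\eta}_{a_{m}}^{2}q^{2j_{m}})^{2(a-1)}$ with $j_{m}\in\{0,\dots,p-\alpha_{m}\}$. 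The block sizes $\sum_{m}(p-\alpha_{m}+1)$ together with the $\mathsf{N}-1-r$ passive columns reproduce the announced dimension $\mathsf{N}-1+rp-g$.

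The hard part will be precisely this last identification: reorganising the constrained sum into a single determinant of the stated size, and in particular establishing that the frozen active variables inflate into Vandermonde blocks of length $p-\alpha_{m}+1$ rather than single columns. This is where the cyclicity of the representation enters decisively, through the central relation $(\ref{NJG-BS-ChP-O-mean-value})$ and the vanishing of $b_{\boldsymbol\eta}$ on the lattice of shifts, and where an Andreief/Cauchy--Binet type identity must be applied with care. The attendant bookkeeping of the signs $(-1)^{\sum_i(a_{i}-i)}$, of the factors $q^{-(\mathsf{N}-1-r)\alpha_{i}(\alpha_{i}-1)/2}$, and of the two nested Vandermonde denominators in $\mathsf{f}_{(h_{0},\{\alpha\},\{a\})}$ is routine but lengthy; I would organise it by first settling the case $r=1$ to fix the block structure and the normalisation, and then use the commutation relations $(\ref{NJG-BS-ChP-Com-O})$ to reduce the general case to disjoint single-variable blocks, in parallel with the sine-Gordon computation of \cite{NJG-BS-ChP-GMN12-SG}.
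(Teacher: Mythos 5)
Your route is the paper's route in its skeleton --- act on the left SOV decomposition of $\langle t_k|$, pin and lower the frozen variables via (\ref{NJG-BS-ChP-O-Action}) and its $\alpha$-fold version (\ref{NJG-BS-ChP-O-a-Action}), extract $\delta_{k,k'+h}$ and the prefactor $a_+^{h_0}q^{h_0k'}({\eta}_{\mathsf{N}}^{(0)})^{-h}$ from the $h_{\mathsf{N}}$-sum, and then recognise the remaining constrained sum as a single determinant --- and those steps are sound. The genuine gap sits exactly at the step you yourself defer as ``the hard part'': you never produce the identity that inflates each frozen variable into a Vandermonde block of $p-\alpha_m+1$ columns, and the tools you name would not produce it. What the paper actually does is multiply each term of the residual sum by $1$ written as $\prod_{\epsilon=\pm1}[\cdots]^{\epsilon}$, where $[\cdots]$ is the product of the cross factors $\prod_{i=1}^{r}\prod_{j=1}^{\mathsf{N}-1-r}\prod_{h=-p+\alpha_i}^{-1}\bigl(\eta_{a_i}^2q^{-2h}-(\eta_{b_j}^{(k_{b_j})})^2\bigr)$ with the ratio of the enlarged to the reduced Vandermonde of the frozen variables. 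The $\epsilon=+1$ factors supply exactly the missing differences between the passive variables and the shifted nodes $\eta_{a_m}^2q^{2j_m}$, $j_m\in\{0,\dots,p-\alpha_m\}$, so that, combined with the eigenstate Vandermonde, they assemble the enlarged Vandermonde whose extra columns are the announced $(\eta_{a_m}^2q^{2j_m})^{2(a-1)}$; the sums over the passive indices $k_{b_j}$ are then pushed inside this determinant column by column by plain multilinearity, exactly as in the scalar-product proposition --- no Andreief or Cauchy--Binet identity is involved. The $\epsilon=-1$ factors combine with the cross denominators generated by (\ref{NJG-BS-ChP-O-a-Action}) to give $\prod_{i,j}\prod_{k=0}^{p-1}(\eta_{a_i}^2q^{2k}-(\eta_{b_j}^{(k_{b_j})})^2)^{-1}=\prod_{i,j}(Z_{a_i}^2-Z_{b_j}^2)^{-1}$; this is the one place where cyclicity truly enters (through $\prod_{k=0}^{p-1}(xq^{2k}-y)=x^p-y^p$), and it is what makes these leftover factors independent of the summation indices and hence absorbable, together with the Vandermonde ratio, the sign and the $q$-powers, into $\mathsf{f}_{(h_0,\{\alpha\},\{a\})}$.

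By contrast, the ingredients you invoke do not do this job. Relation (\ref{NJG-BS-ChP-O-mean-value}) concerns products of $p+1$ operators $\mathcal{O}_{a,k}$ and is used only in the preceding lemma, to prove that the dressed monomials form a basis; since every $\alpha_i\le p$ here, it never enters the form-factor computation. The vanishing of $b_{\boldsymbol\eta}$ on the lattice of shifts (equivalently $\mathcal{B}(Z_a)=0$) underlies the selection rule (\ref{NJG-BS-ChP-Prod-O-zeros}) but creates no Vandermonde columns. Finally, your plan to settle $r=1$ first and then reduce the general case to ``disjoint single-variable blocks'' via the commutation relations (\ref{NJG-BS-ChP-Com-O}) cannot work as stated: the determinant of size $\mathsf{N}-1+rp-g$ does not factorise into blocks attached to single frozen variables --- all frozen and passive variables are coupled inside one Vandermonde --- so the multiply-by-one completion must be performed for all of them simultaneously, which is precisely what the paper's uniform-in-$r$ argument does.
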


\begin{proof}
The following actions hold:
\begin{align}
\langle t_{k}|
\hat{\boldsymbol\eta}_{\mathsf{N}}^{-h}
(\hat{\boldsymbol\eta}_{\mathsf{A}}^{(+)}
\mathsf{T}_{\mathsf{N}}^{-})^{h_{0}}&
=\frac{a_{+}^{h_{0}}q^{h_{0}(k-h)}}{({\eta }_{\mathsf{N}}^{(0)})^{h}}%
\sum_{h_{1},...,h_{\mathsf{N}}=1}^{p}\frac{q^{(k-h)h_{\mathsf{N}}}}{p^{1/2}}%
\prod_{a=1}^{\mathsf{N}-1}({\eta }_{a}^{(h_{a})})^{h_{0}}\bar{Q}_{t}({\eta }%
_{a}^{(h_{a})})\notag\\
&\times \prod_{1\leq a<b\leq \mathsf{N}-1}(({\eta }_{a}^{(h_{a})})^{2}-({%
\eta }_{b}^{(h_{b})})^{2})\frac{\langle {\eta }_{1}^{(h_{1})},...,{\eta }_{%
\mathsf{N}}^{(h_{\mathsf{N}})}|}{\prod_{b=1}^{\mathsf{N}-1}\omega _{b}({\eta 
}_{b}^{(h_{b})})}.
\end{align}
From the formula (\ref{NJG-BS-ChP-O-Action}), it follows:%
\begin{equation}
\langle \eta _{1},...,{\eta }_{a_{i}}^{(f)},...,\eta _{\mathsf{N}}|\mathcal{O%
}_{a_{i},h_{i}}^{(\alpha _{i})}=\frac{\prod_{h=0}^{\alpha _{i}-1}a(\eta
_{a_{i}}q^{-h})\delta _{f,h_{i}}\langle \eta _{1},...,\eta
_{a_{i}}q^{-\alpha _{i}},...,\eta _{\mathsf{N}}|}{\prod_{b\neq a_{i},b=1}^{%
\mathsf{N}-1}\prod_{h=0}^{\alpha _{i}-1}(\eta _{a_{i}}q^{-h}/\eta _{b}-\eta
_{b}/\eta _{a_{i}}q^{-h})},  \label{NJG-BS-ChP-O-a-Action}
\end{equation}%
where $\eta _{a_{i}}$ is defined in (\ref{NJG-BS-ChP-eta-a_m}). The action of $\mathcal{O}_{a_{1},h_{1}}^{(\alpha _{1})}\cdots \mathcal{O}%
_{a_{r},h_{r}}^{(\alpha _{r})}$ can be computed now taking into account the order of the
operators which appear in the monomial then by using the scalar product formula we get:
\begin{align}
\langle t_{k}|\mathcal{E}_{(h,h_{0},(a_{1},h_{1}),...,(a_{r},h_{r})}^{(%
\alpha _{1},...,\alpha _{r})}|t_{k^{\prime }}^{\prime }\rangle & =\frac{%
a_{+}^{h_{0}}q^{h_{0}(k-h)}}{({\eta }_{\mathsf{N}}^{(0)})^{h}}%
\sum_{k_{1},...,k_{\mathsf{N}}=1}^{p}\frac{q^{\left[ (k-h)-k^{\prime }\right]
k_{\mathsf{N}}}}{p}\prod_{a=1}^{\mathsf{N}-1}({\eta }_{a}^{(h_{a})})^{h_{0}}
\notag \\
& \times \prod_{i=1}^{r}\frac{\prod_{h=0}^{\alpha _{i}-1}a(\eta
_{a_{i}}q^{-h})\delta _{k_{a_{i}},h_{i}}}{\prod_{j=1}^{\mathsf{N}%
-1-r}\prod_{h=0}^{\alpha _{i}-1}(\eta _{a_{i}}q^{-h}/\eta
_{b_{j}}^{(k_{b_{j}})}-\eta _{b_{j}}^{(k_{b_{j}})}/\eta _{a_{i}}q^{-h})} 
\notag \\
& \times \prod_{i=1}^{r}\prod_{h=0}^{\alpha _{i}-1}\frac{\prod_{j=i+1}^{r}(%
\eta _{a_{i}}q^{-h}/\eta _{a_{j}}-\eta _{a_{j}}/\eta _{a_{i}}q^{-h})^{-1}}{%
\prod_{j=1}^{i-1}(\eta _{a_{i}}q^{\alpha _{j}-h}/\eta _{a_{j}}-\eta
_{a_{j}}/\eta _{a_{i}}q^{\alpha _{j}-h})}  \notag \\
& \times \prod_{j=1}^{\mathsf{N}-1-r}\frac{Q_{t^{\prime }}(\eta
_{b_{j}}^{(k_{b_{j}})})\bar{Q}_{t}(\eta _{b_{j}}^{(k_{b_{j}})})}{\omega
_{b_{j}}(\eta _{b_{j}}^{(k_{b_{j}})})}\prod_{i=1}^{r}\frac{Q_{t^{\prime
}}(\eta _{a_{i}}q^{-\alpha _{i}})\bar{Q}_{t}(\eta _{a_{i}})}{\omega
_{a_{i}}(\eta _{a_{i}})}V(\eta _{1}^{2},...,\eta _{\mathsf{N}-1}^{2}).
\end{align}
The presence of the $\prod_{i=1}^{r}\delta
_{k_{a_{i}},h_{i}}$ reduces the sum $\sum_{k_{1},...,k_{\mathsf{N}}=1}^{p}$ to $\delta _{k,k^{\prime }+h}$ times the sum $\sum_{k_{b_{1}},...,k_{b_{%
\mathsf{N}-(r+1)}}=1}^{p}$ where:%
\begin{equation}
\{a_{1},...,a_{r}\}\cup \{b_{1},...,b_{\mathsf{N}-(r+1)}\}=\{1,...,\mathsf{N}%
-1\}.
\end{equation}
We get our formula (\ref{NJG-BS-ChP-E-op}) multiplying each term of the sum by: 
\begin{align}
1& =\prod_{\epsilon =\pm 1}\prod_{i=1}^{r}\prod_{j=1}^{\mathsf{N}%
-1-r}\prod_{h=-p+\alpha _{i}}^{-1}(\eta {}_{a_{i}}^{2}q^{-2h}-(\eta
_{b_{j}}^{(k_{b_{j}})})^{2})^{\epsilon }  \notag \\
& \times \left( \frac{V(\eta _{a_{1}}^{2},\eta _{a_{1}}^{2}q^{2},...,\eta
_{a_{1}}^{2}q^{2(p-\alpha _{1})},...,\eta _{a_{r}}^{2},\eta
_{a_{r}}^{2}q^{2},...,\eta _{a_{r}}^{2}q^{2(p-\alpha _{r})})}{V(\eta
_{a_{1}}^{2},...,\eta _{a_{r}}^{2})}\right) ^{\epsilon }\,.
\end{align}%
Indeed, the power $+1$ leads to the construction of the Vandermonde determinant:
\begin{equation}
V(\underset{p-\alpha _{1}+1\text{ columns}}{\underbrace{\eta
_{a_{1}}^{2},...,\eta _{a_{1}}^{2}q^{2(p-\alpha _{1})}}},...,\underset{%
p-\alpha _{r}+1\text{ columns}}{\underbrace{\eta _{a_{r}}^{2},...,\eta
_{a_{r}}^{2}q^{2(p-\alpha _{r})}}},\underset{(\mathsf{N}-1)-r\text{ columns}}%
{\underbrace{(\eta _{b_{1}}^{(k_{b_{1}})})^{2},...,(\eta _{b_{b_{(\mathsf{N}%
-1)-r}}}^{(k_{b_{b_{(\mathsf{N}-1)-r}}})})^{2}}}),
\end{equation}
and the sum $\sum_{k_{b_{1}},...,k_{b_{\mathsf{N}-(r+1)}}=1}^{p}$ becomes
sum over columns which can be brought inside the determinant.
\end{proof}
\subsection{The chiral Potts model order parameters}

The results presented in the previous subsections are
as well results for the matrix elements of local operators in the
inhomogeneous chiral Potts model. In particular, let $|t_{k}\rangle $ and $|t_{k^{\prime }}^{\prime }\rangle 
$ be two eigenstates of the chiral Potts transfer matrix, then the matrix
elements:%
\begin{equation*}
\langle t_{k}|\mathsf{u}_{n}^{-1}|t_{k^{\prime }}^{\prime }\rangle ,\text{ \
\ \ }\langle t_{k}|\alpha _{0,n}^{-1}|t_{k^{\prime }}^{\prime }\rangle \text{%
\ \ \ and\ \ }\langle t_{k}|\mathcal{E}%
_{(h,h_{0},(a_{1},h_{1}),...,(a_{r},h_{r})}^{(\alpha _{1},...,\alpha
_{r})}|t_{k^{\prime }}^{\prime }\rangle
\end{equation*}%
are given respectively by the formulae $(\ref{NJG-BS-ChP-u}),$ $(\ref{NJG-BS-ChP-alfa})$ and $(\ref{NJG-BS-ChP-E-op})$. Furthermore, in the representations $\mathcal{R}_{\mathsf{N}}^{{\small 
\text{chP,S-adj}}}$ the formulae $(\ref{NJG-BS-ChP-u}),$ $(\ref{NJG-BS-ChP-alfa})$ and $(\ref{NJG-BS-ChP-E-op}%
)$ are always matrix elements of the corresponding local operators on chiral
Potts eigenstates. As clarified below, some of these matrix elements generate the chiral Potts order parameters under the homogeneous and thermodynamic limits.

\subsubsection{Local Hamiltonians and order parameters}

It is worth recalling that the following local quantum Hamiltonians:%
\begin{align}
H\left. \equiv \right. H_{0}+kH_{1},\text{ \ }H_{0}& \equiv \sum_{n=1}^{%
\mathsf{N}}\left[ \sum_{r=1}^{p-1}f_{r}(\theta )\mathsf{u}_{n}^{r}\mathsf{u}%
_{n+1}^{-r}\right] ,\text{ \ }H_{1}\equiv \sum_{n=1}^{\mathsf{N}}\left[
\sum_{r=1}^{p-1}f_{r}(\bar{\theta})\mathsf{v}_{n}^{r}\right] ,
\label{NJG-BS-ChP-local hamiltonians} \\
f_{r}(\theta )& \equiv \frac{e^{i(2r-p)\theta /p}}{\sin \pi r/p},\text{ \ \ }%
\cos \bar{\theta}=\frac{\cos \theta }{k},\text{ \ }e^{i(2\theta -\pi
)/p}\equiv \frac{x_{\text{q}_{n}}}{y_{\text{q}_{n}}}=\frac{x_{\text{r}_{n}}}{%
y_{\text{r}_{n}}},
\end{align}%
first constructed by von Gehlen and Rittenberg \cite{NJG-BS-ChP-vGR85}, commute with
the homogeneous Z$_{p}$ chP transfer matrices. Indeed, they are generated by
derivative of these transfer matrices w.r.t. the spectral parameter, see for
example \cite{NJG-BS-ChP-AMcP0} for a derivation. Then the order parameters associated
to the homogeneous Z$_{p}$ chP models:%
\begin{equation}
\mathcal{M}_{r}\equiv \frac{\langle g.s.|\mathsf{u}_{1}^{r}|g.s.\rangle }{%
\langle g.s.|g.s.\rangle },\text{ \ \ }\forall r\in \{1,...,p-1\}
\end{equation}%
admit a natural interpretation as spontaneous magnetizations in terms of the
spin chain formulation associated to these local Hamiltonians. They have
been mainly analyzed in the special representations associated to the
super-integrable Z$_{p}$ chP model, characterized by the following
constrains:%
\begin{equation}
x_{\text{q}_{n}}^{p}=y_{\text{q}_{n}}^{p}=x_{\text{r}_{n}}^{p}=y_{\text{r}%
_{n}}^{p}=\frac{1+k^{\prime }}{k},\text{ }\forall n\in \{1,...,\mathsf{N}\}%
\text{ \ }\rightarrow \text{ \ }\bar{\theta}=\theta =\pi /2.
\end{equation}%
In these special representations the Z$_{p}$ chP model also has an
underlying Onsager algebra \cite{NJG-BS-ChP-auYMcCPTY87} generated by the two
components $H_{0}$ and $H_{1}$ of the local quantum Hamiltonians. The
following thermodynamic limits:%
\begin{equation}
\mathcal{M}_{r}=(1-k^{2})^{\frac{r(p-r)}{2p^{2}}},\text{ \ \ }\forall r\in
\{1,...,p-1\}  \label{NJG-BS-ChP-order paprameters}
\end{equation}%
have been first argued by perturbative computations in \cite{NJG-BS-ChP-AMcCPT89} and
then proven with techniques\footnote{%
See Section 1.1 for an historical recall.} which apply only starting from
finite lattice computations in the super-integrable case. Nevertheless, as
argued in \cite{NJG-BS-ChP-auYP11}, the formulae $\left( \ref{NJG-BS-ChP-order paprameters}\right) 
$ should hold true for the general homogeneous Z$_{p}$ chP models.
It is then relevant pointing out that our approach should give us the
possibility to prove this statement for general representations without the 
need to be restricted to the super-integrable case and our SOV results already provide simple determinant formulae for the matrix elements associated to $\mathcal{M}_{p-1}$ in the finite size and inhomogeneous regime. 

\section{Conclusion and outlook}
\subsection{Conclusions}
In this article we have considered general cyclic representations of the
6-vertex Yang-Baxter algebra on \textsf{N}-sites finite lattices and
analyzed the associated Bazhanov-Stroganov model and consequently the chiral Potts
model. We have derived a reconstruction for all local operators in terms of standard Sklyanin's quantum separate variables and characterized by one determinant formulae of \textsf{N}$\times $\textsf{N} matrices the scalar products of separate states.
These findings imply that the action of any local operator on transfer matrix eigenstates reduces to a finite sum of separate states which
allows to characterize matrix elements of any local operator as finite sum of determinants of the scalar product type. Moreover, we
have obtained: form factors of the local operators $\mathsf{u}_{n}^{-1}$ and $\alpha _{0,n}^{-1}$ expressed by one determinant formulae
obtained by modifying a single row in the scalar product matrices; form factors of a basis of operators expressed by one determinant
formulae obtained by modifying the scalar product matrices by introducing rows which coincide with those of Vandermonde's matrix
computed in the spectrum of the separate variables.

Let us comment that it would be desirable to get also for the generators $%
\mathsf{v}_{n}$ of the local Weyl algebras simple one determinant formulae
as for the generators $\mathsf{u}_{n}$ (at this moment we have expressed its form factors as finite sums of determinants); this interesting
issue is currently under investigation. One important motivation to derive
form factors of local operators by simple determinant formulae is for their
use as efficient tools for the computations of correlation
functions. The decomposition of
the identity $\left( \ref{NJG-BS-ChP-Id-decomp}\right) $ allows to write correlation
functions in spectral series of form factors and so it allows to analyzed numerically them mainly by the same tools developed in \cite{NJG-BS-ChP-CM05}
in the ABA framework and used in the series of works\footnote{By this numerical approach, relevant physical observables (like the so
called dynamical structure factors) were evaluated and successfully compared with the measurements accessible by neutron
scattering experiments \cite{NJG-BS-ChP-Bloch36}-\cite{NJG-BS-ChP-Balescu75}.}
\cite{NJG-BS-ChP-CM05}-\cite{NJG-BS-ChP-CCS07}. Indeed, in our SOV framework we have determinant
representations of the form factors and eventually complete characterization of the
transfer matrix spectrum in terms of the solutions of a system of Bethe
equations type. Let us mention that in a recent series of papers \cite{NJG-BS-ChP-KKMST09}-\cite{NJG-BS-ChP-KP12} the problem to compute the asymptotic behavior of
correlation functions has been successfully addressed\footnote{These results have been also successfully compared with those obtained previously with a method relying mainly on the Riemann-Hilbert analysis of
related Fredholm determinants \cite{NJG-BS-ChP-KKMST09++}-\cite{NJG-BS-ChP-K1011}.} with a method which is in principle susceptible to be extended to
any (integrable) quantum model possessing determinant representations for
the form factors of local operators \cite{NJG-BS-ChP-KKMST1110} and so also to the
models analyzed by our approach in the SOV framework.

Finally, let us remark that the originality and interest of our current results are also due to the fact that matrix elements of local operators were so far mainly
confined to the special class of super-integrable representations of Z$_{p}$
chiral Potts model. As these representations can be obtained by taking well
defined limits on the parameters of a generic (non-super-integrable)
representation to which SOV applies, it is then an interesting issue to
investigate how from our form factor results one can reproduce also those
known in the super-integrable case. About this point it is worth mentioning
that in the special case ($p$=2) of the generalized Ising model, it was
already remarked in \cite{NJG-BS-ChP-IPSTvG09} that the matrix elements of the local
spin operators obtained in the SOV framework in \cite{NJG-BS-ChP-GIPST08} admit
factorized forms similar to those conjectured in \cite{NJG-BS-ChP-Ba09} and proven in 
\cite{NJG-BS-ChP-IPSTvG09} for the super-integrable Z$_{p}$ cases for general $p\geq 2$.

In a future paper, we will analyze the homogeneous and thermodynamic limits
focusing the attention on the derivation of the order parameter formulae for
the general homogeneous Z$_{p}$ chiral Potts models. These formulae were
proven with techniques working only in the super-integrable case but they
are expected to be true \cite{NJG-BS-ChP-auYP11} for the general homogeneous Z$_{p}$
chiral Potts models. Our approach should give access to a proof of this
statement from the finite lattice in general representations and we find
encouraging the fact that the matrix element describing the order parameter:%
\begin{equation}
\mathcal{M}_{p-1}\equiv \frac{\langle g.s.|\mathsf{u}_{1}^{-1}|g.s.\rangle }{%
\langle g.s.|g.s.\rangle }
\end{equation}%
admits simple determinant formula in our approach.

\subsection{Outlook}
It is worth recalling that in the literature of quantum integrable models there exist some results on form factors derived by different applications of separation of variable methods. For a more detailed analysis of the most relevant preexisting results and an explicit comparison
with those obtained by our method in separation of variables we address the reader to \cite{NJG-BS-ChP-GMN12-SG}. Here, we
want just recall the Smirnov's results \cite{NJG-BS-ChP-Sm98}, in the case of the integrable quantum Toda chain \cite{NJG-BS-ChP-Sk1}, \cite{NJG-BS-ChP-GM}-\cite{NJG-BS-ChP-KL99}
and those of Babelon, Bernard and Smirnov \cite{NJG-BS-ChP-BBS96,NJG-BS-ChP-BBS97}, in the case of the restricted sine-Gordon at the reflectionless points.
In both these cases form factors of local operators were argued\footnote{The absence of a direct reconstruction of the local operators in terms of the Sklyanin's quantum separate variables was the motivation in \cite{NJG-BS-ChP-Sm98,NJG-BS-ChP-BBS97} to use some well-educated guess relying on counting arguments for the characterization of local operators basis and to use semi-classical arguments relying on the classical SOV-reconstruction for the identification of primary fields \cite{NJG-BS-ChP-BBS96,NJG-BS-ChP-Sm98-0}. Note that a reconstructions of local operators in the lattice Toda model have been achieved in \cite{NJG-BS-ChP-OB-04} in terms of a set of quantum separate variables defined by a change of variables in terms of the original Sklyanin's quantum separate variables. Recent analysis of this reconstruction problem for the lattice Toda model appear also in \cite{NJG-BS-ChP-Sk13} and \cite{NJG-BS-ChP-K13}.} to have a determinant form. A strong similarity in the form of the results appears: the elements of the matrices whose determinants give the form factors are expressed as
\textquotedblleft convolutions\textquotedblright, over the spectrum of each separate variable, of the product of the corresponding
separate components of the wave functions times contributions associated to the action of local operators. It is then remarkable that also our results fall in this general form. This observation and the potential generality of the SOV method leads to the expectation of an
universality in the SOV characterization of form factors.

A natural project is then to develop explicitly our method for a set of fundamental integrable quantum models providing determinant
representations for form factors. This SOV method is not restricted to the case of cyclic
representation and applies to a large class of integrable quantum models which were not tractable with other methods and in particular by algebraic Bethe ansatz. There exist already several key integrable quantum
models associated by QISM to highest weight representations of the Yang-Baxter algebras and generalization of it for which this program has been developed. In
\cite{NJG-BS-ChP-N12-0,NJG-BS-ChP-N12-1,NJG-BS-ChP-N12-2,NJG-BS-ChP-FKN13,NJG-BS-ChP-FN13} and \cite{NJG-BS-ChP-N12-3} our approach has been respectively implemented for the spin-1/2 XXZ and the spin-s XXX
inhomogeneous quantum chains with antiperiodic boundary conditions, for the spin-1/2 XXZ and XYZ open quantum chains with general non-diagonal integrable boundary conditions \cite{NJG-BS-ChP-Skly88}-\cite{NJG-BS-ChP-GZ94} and finally for the spin-1/2 representations of highest weight type of the dynamical 6-vertex
Yang-Baxter algebra. In all these models the universality we just discussed in the structure of the matrix elements of local operator has been verified.

%%%%%%%%%%%%%%%%%%%%%%%%%%%%%%%%%%%%%%%%%%%%%%%%%%%%%%%%%%%%%%%%%%%%%%%%%%%%%%%%%%%%%%%%%%%%%%%%%%%%%%
\section*{Acknowledgements}
N. G. is supported by the University of Cergy-Pontoise. He acknowledges the support of ENS Lyon and ANR grant ANR-10-BLAN-0120-04-DIADEMS during his thesis when most of this work was done. N. G. would also like to thank the YITP Institute of Stony Brook for hospitality. J. M. M. is supported by CNRS, ENS Lyon and by the grant ANR-10-BLAN-0120-04-DIADEMS. G. N. gratefully thanks Barry McCoy for his teachings on pre-existing results on the Bazhanov-Stroganov model, in particular about the order parameters. G. N. is supported by National Science Foundation grants PHY-0969739. G. N. gratefully acknowledge the YITP Institute of Stony Brook for the opportunity to develop his research programs. G. N. would like to thank the Theoretical Physics Group of the Laboratory of Physics at ENS-Lyon for hospitality, under support of ANR-10-BLAN- 0120-04-DIADEMS, which made possible this collaboration.

%%%%%%%%%%%%%%%%%%%%%%%%%%%%%%%%%%%%%%%%%%%%%%%%%%%%%%%%%%%%%%%%%%%%%%%%%%%%%%%%%%%%%%%%%%%%%%%%%%%%%%

%FT79, KS79, FST80, F80, S82, KS, F82, F95, J90, Sh85, Th81, IK82

\begin{small}

\end{small}

\end{document}